\documentclass[fleqn, 12 pt]{article}
\usepackage{hhline}
\usepackage[hidelinks]{hyperref}
\usepackage{pdflscape}
\usepackage{threeparttable}
\usepackage{comment}
\usepackage{tabularray}
\usepackage{color}
\usepackage{threeparttable}
\usepackage[bottom]{footmisc}
\usepackage[papersize={8.3in,11.7in},vmargin={1.0in,1.0in}, hmargin={1.0in,1.0in}]{geometry}
\usepackage{times}

\usepackage[round]{natbib}
\usepackage{authblk}
\usepackage{soul}

\usepackage{graphicx, rotating}
\usepackage[utf8]{inputenc}
\usepackage{enumerate}
\usepackage{euscript}
\usepackage{bbm}
\usepackage{amsmath, amsthm,amsfonts, amssymb}
\usepackage{extarrows}
\usepackage{tikz}
\usetikzlibrary{calc}
\hypersetup{colorlinks,linkcolor={blue},citecolor={blue},urlcolor={blue}}
\usetikzlibrary{positioning}
\usetikzlibrary{arrows.meta}
\usepackage{caption}
\usepackage{subcaption}

\usepackage{pgfplots}
\usepgfplotslibrary{dateplot}
\usepackage{pdfpages}

\usepackage{array}
\usepackage{multirow}
\usepackage{colortbl}
\usepackage[font=small]{caption}

\usepackage{float}
\newtheorem{theorem}{Theorem}

\newtheorem{proposition}{Proposition}
\newtheorem{corollary}{Corollary}

\newtheorem{observation}{Observation}

\newtheorem*{example*}{Example}

\usepackage{setspace}

\usepackage[labelsep=period, skip=5pt]{caption}

\usepackage{verbatim, booktabs, multirow, hyperref}
\usepackage{bbm}
\usepackage{lmodern}
\usetikzlibrary{intersections}

\usepackage{newtxtext,newtxmath}
\usepackage{url}
\usepackage{ulem}
\usepackage{dcolumn}
\newcolumntype{d}[1]{D{.}{.}{#1}}

\usepackage{tikz}
\usetikzlibrary{positioning}
\usepackage{tikz-3dplot}
\usetikzlibrary{arrows, patterns, decorations.pathmorphing,backgrounds,positioning,fit,petri,trees}
\usepackage{pgfplots}
\usepackage{pgfplotstable}

\usetikzlibrary{arrows}
\usetikzlibrary{calc}
\usetikzlibrary{positioning}
\usetikzlibrary{arrows.meta}
  
\usetikzlibrary{decorations.pathreplacing,calligraphy}
\usetikzlibrary{shapes.geometric}

\pgfplotsset{compat=1.18}
\usepackage{enumitem}
\setlist[itemize]{leftmargin=*}

\title{Network games with heterogeneous players}

\author{Wenjie Cao\footnote{Laboratory of Mathematics and Complex Systems, Ministry of Education, School of Mathematical Sciences, Beijing Normal University, 100875 Beijing, P. R. China. Email: wenjiecao@mail.bnu.edu.cn.}
\quad\quad Angel Sánchez\footnote{Grupo Interdisciplinar de Sistemas Complejos (GISC), Departamento de Matemáticas, Universidad Carlos III de Madrid, Leganés 28911, Spain; Instituto de Biocomputación y Física de Sistemas Complejos (BIFI), Universidad de Zaragoza, Zaragoza 50018, Spain. Email: anxo.sanchez@gmail.com}
\quad\quad Boyu Zhang\footnote{Laboratory of Mathematics and Complex Systems, Ministry of Education, School of Mathematical Sciences, Beijing Normal University, 100875 Beijing, P. R. China. Corresponding author. Email: zhangby@bnu.edu.cn.}}

\date{\today }

\begin{document}

\pagenumbering{gobble}

\maketitle

\begin{abstract}
Real social and economic networks involve individuals with diverse incentives, yet most studies of network games assume homogeneous preferences or few player types. We introduce a general framework for binary-choice network games with fully heterogeneous payoff structures. We first show that any such game can be transformed into an equivalent one with conformist, rebel, and stubborn archetypes, preserving equilibria and best-response trajectories. We then establish sufficient conditions for pure-strategy Nash equilibrium existence and convergence of best-response dynamics on arbitrary networks, while proving that equilibria almost surely vanish in large sparse random networks. We further develop a deterministic approximation approach that predicts evolutionary trends and equilibrium strategy frequencies from network homophily and heterophily patterns, without computing equilibria explicitly. Extending the framework to limited information, we prove that dynamics converge either to a unique limited-information equilibrium or to a unique stationary distribution, and we derive necessary and sufficient conditions for the existence of the limited-information equilibrium. We validate our predictions using Prisoner’s Dilemma games on real social networks that incorporate heterogeneous altruism and peer influence. These findings together provide a unified framework for equilibrium existence, evolutionary dynamics, and equilibrium outcome prediction in heterogeneous network games.

\vspace{.25cm}
\noindent \textbf{Keywords:}  Network games, Heterogeneous payoff, Best-response dynamics, Limited information
\\
\textbf{JEL: C72, C73, D83, D85} 
\end{abstract}

\setstretch{1.5}
\pagenumbering{arabic}

\thispagestyle{empty}
\newpage

\clearpage

\section{Introduction}
\setcounter{page}{1}
\subsection{Research background}
Network games, a central domain within game theory, offer a principled framework for modeling complex systems of interconnected agents \citep[see e.g.][]{jackson2008social}. Their analytical power has made them indispensable for studying a wide range of socio-economic and biological phenomena, including the evolution of cooperation \citep[see e.g.][]{nowak1992evolutionary,ohtsuki2006simple,santos2008social,allen2017evolutionary}, 
competition in networked and two-sided markets \citep[see e.g.][]{bim2019,han2025coordination}, the diffusion of innovations \citep[see e.g.][]{jackson2007diffusion,young2011dynamics}, and the propagation of financial risks \citep[see e.g.][]{banerjee2013diffusion,elliott2014financial}.

In network games, agents are represented as nodes and their interactions as edges. Each agent’s payoff depends on its own strategy and those of its neighbors \citep[see e.g.][]{jackson2008social}. For evolutionary games on static networks, central analytical challenges include the issues of (i) the existence and uniqueness of pure strategy Nash equilibria (PNE), (ii) the short-run behavior of the evolutionary process, and (iii) its long-run outcomes. These questions have been systematically examined when players are homogeneous in payoffs. Potential game methods are widely used to establish the existence of PNE and to characterize the convergence of best response dynamics \citep[see e.g.][]{monderer1996potential,bramoulle2014strategic,wu2019potential}. The short-run dynamics of stochastic update rules, such as best response, pairwise imitation, and death-birth updating, can often be approximated via systems of differential equations \citep[see e.g.][]{ohtsuki2006replicator,tarnita2011multiple,wang2024evolutionary,PEI2024dynamic}. Furthermore, extensive work has analyzed long-run outcomes by computing fixation probabilities of mutant strategies \citep[see e.g.][]{ohtsuki2006simple,allen2017evolutionary,mcavoy2020social} 
and stationary distributions of stochastic evolutionary processes \citep[see e.g.][]{kandori1993learning,young1993evolution}.

In real-world applications, agents are often heterogeneous in multiple dimensions, including resources, productivities, social preferences, and functional roles, which shape their strategic incentives. The existing literature on network games with heterogeneous players can be grouped into two main strands. The first strand allows agents to have different payoff functions but assumes they share similar social interaction preferences.  Typical examples are games of strategic complements and strategic substitutes \citep[see e.g.][]{galeotti2010network,HERNANDEZ201356,bramoulle2014strategic,jackson2015games_networks,bramoulle2016games_networks,ramazi2016networks}. These models accommodate payoff heterogeneity across individuals, yet they restrict every player’s relative payoff from increasing an action to be either uniformly increasing (strategic complements) or uniformly decreasing (strategic substitutes) in the set of neighbors taking the action. The second strand allows heterogeneous preferences by categorizing agents into distinct behavioral types, such as conformists, rebels, coordinators, anti-coordinators, and stubborn agents, and assigns the same payoff function to all agents within a type \citep[see e.g.][]{acemouglu2013opinion, yildiz2013binary, zhang2018fashion, Broere2019, cao2019dynamic, stewart2019information,cao2024discrete}. Besides, a limited number of studies permit simultaneous heterogeneity in both payoffs and preferences, but they are typically confined to specific game classes \textcolor{blue}{\citep{KARP2007global,bramoulle2014strategic,HOFFMANN2019Global}}. In summary, network games with fully heterogeneous payoffs, where each player may have different preference parameters, remain largely unexplored. In particular, systematic approaches to the three fundamental questions mentioned above in this general setting are still lacking, posing a significant methodological gap.

\subsection{Our contribution}
In this paper, we consider binary choice network games in which players can have different preferences or payoff functions (hereafter referred to as HP games), where each player’s payoff depends on their own choice and the strategies of neighbors. We provide a unified framework for investigating the existence of pure strategy Nash equilibria (PNE for short), the short-run evolutionary behavior of the best response dynamics, and the strategy frequencies at stable equilibria on arbitrary networks. 

As a first step, we show that any HP game can be reformulated as an equivalent network game with three types of players, conformists, rebels, and stubborn agents (CRS game from now on), preserving both PNE and best response trajectories of the original one (Theorem \ref{th1}). In the reformulation, we introduce stubborn neighbors to original coordinators and anti-coordinators, and maintain a one-to-one correspondence between conformists and coordinators, and between rebels and anti-coordinators. 

Then, using the potential and partial potential game approaches \citep[]{monderer1996potential,zhang2018fashion}, we establish sufficient conditions for the existence of PNE and convergence of asynchronous best response dynamics in CRS games, and extend these results to the original HP games (Theorem \ref{th2} and Corollary \ref{co1}). Specifically, a CRS game admits a PNE if one of the following conditions holds: (i) the game does not have conformist-rebel (CR) edges; (ii) each conformist has enough conformist neighbors; (iii) each rebel has enough rebel neighbors. This result is a direct extension of \citet[]{zhang2018fashion}, which only considers games with conformists and rebels. Regarding non-existence of PNE, We prove that equilibria almost surely disappear in large random sparse networks (Theorem \ref{th3}). 

Testing the existence of a PNE for a CRS or HP game has been shown to be NP-hard \citep[]{cao2014fashion}. Rather than conducting a direct PNE analysis, we study the asymptotic behavior of the asynchronous best response dynamics. Since each PNE corresponds to an absorbing state of these dynamics, their asymptotic behavior can characterize the properties of PNE. In Section \ref{sec4}, we develop an approximation method that transforms the asynchronous best response dynamics into a deterministic system of ordinary differential equations (ODE), applicable to any network structure. This method enables the estimation of equilibrium strategy frequencies based on network homophily and heterophily patterns without explicitly computing PNE (Theorem \ref{th4}). More importantly, our method can predict the evolutionary trends of the asynchronous best response dynamics even when the game does not have a PNE (which implies that the dynamics do not converge). We then fully characterize the dynamic behaviors of the ODE system. We find that few conformist-rebel edges in CRS games, or coordinator-anti-coordinator edges in HP games, drive the ODE trajectories toward boundary fixed points (which typically correspond to PNE), whereas numerous such edges can lead to non-convergent trajectories and the emergence of periodic oscillations (Corollaries \ref{co2} and \ref{co3}).

We further extend the analysis to a limited information setting in which each player observes each neighbor’s strategy with a positive probability less than one \citep{oyama2015sampling}. In this case, the HP game has at most one PNE (referred to as an L‑PNE) from which we derive the necessary and sufficient existence conditions (Theorem \ref{th4}). At an L‑PNE, all coordinators choose the same action, whereas anti‑coordinators are divided between the two actions. In addition, both asynchronous and synchronous best response dynamics converge to the L‑PNE when it exists, and otherwise to a unique stationary distribution whose strategy frequencies approximate a stable fixed point of the associated ODE system (Corollary \ref{co4}).

To verify the validity of our approach, we apply our framework to empirical network data, examining Prisoner’s Dilemma games involving players with heterogeneous altruistic preferences and peer influence, and find that our approximation accurately predicts cooperation frequencies in such settings (see Section \ref{sec6}). We further discuss possible mechanisms for promoting cooperation in this game. First, the upper bound on the proportion of cooperation can be raised by reducing the prevalence of stubborn defectors. Second, this upper bound becomes stable when conformists connect more to stubborn cooperators than to stubborn defectors, whereas rebels display the opposite connectivity pattern.

Finally, we note that our analytical framework is general and several key findings, such as existence conditions for PNE and L‑PNE, can be naturally extended to multi‑strategy games (see Section \ref{sec7.2}). In summary, our results demonstrate how network topology and heterogeneity in players’ preferences jointly shape equilibrium outcomes, offering a systematic framework for analyzing cooperation, coordination, and conflict in heterogeneous populations.

\subsection{Related literature}
 
Our work is closely related to network games with strategic complements and strategic substitutes. Most existing studies assume that players have similar preferences, either strategic complements or strategic substitutes, and focus on the existence and uniqueness of PNE, and long-run outcomes of evolutionary dynamics \citep[see e.g.][]{galeotti2010network,HERNANDEZ201356,bramoulle2014strategic,ramazi2016networks}. Our HP game is most closely related to \cite{ramazi2016networks}, who also consider binary choice network games and allow payoffs to differ across individuals. However, their analytical framework retains an underlying symmetry in preferences for strategic complements or substitutes. Relatively few studies examine the coexistence of strategic complements and substitutes, and these are often confined to specific game classes such as public goods games and global games\citep{KARP2007global, bramoulle2014strategic, HOFFMANN2019Global}. Systematic approaches for analyzing properties of PNE and evolutionary dynamics in general setting are still lacking.


Our CRS game is a network game with three player types. Most prior work focuses on games with only two types. For example, \cite{zhang2018fashion} and \cite{cao2019dynamic} analyze the fashion game of \cite{jackson2008social}, which features conformists and rebels. In contrast, \cite{cao2024discrete} study network games with conformists and stubborn agents. The impact of stubborn agents on strategy evolution in binary choice games has also been examined by \cite{acemouglu2013opinion}, \cite{yildiz2013binary}, and \cite{stewart2019information}. A recent study by \cite{PEI2024dynamic} further generalizes the analysis to networks with two arbitrary agent types. However, network games with three types of players remain largely unexplored.

\subsection{Paper structure}
The remainder of the paper is organized as follows. Section \ref{sec2} introduces the HP and CRS game models. Section \ref{sec3} analyzes their pure-strategy Nash equilibria. Section \ref{sec4} explores the evolutionary trends and equilibrium strategy frequencies of the best response dynamics. Section \ref{sec5} investigates the case of limited information. Section \ref{sec6} applies our method to Prisoner’s Dilemma games on real social networks. Section \ref{sec7} concludes, discusses extensions, and examines the limitations of this study. All proofs and technical details are provided in the \textit{Appendix}.

\section{MODEL}\label{sec2}
\subsection{Network games with heterogeneous payoffs}

We consider an undirected network $G=(N,E)$, where the nodes $N=\{1,2, \ldots, n\}$ correspond to players and each edge in the set $E \subseteq N \times N$ represents a $2$-player game between neighboring players. Each player $i\in N$ chooses pure strategies from a binary set $A=\left\{0, 1\right\}$ and receives a payoff upon completion of the game according to the matrix $U_i$:
$$
\bordermatrix{
   & 0 & 1 \cr
0 & a_i  & b_i \cr
1 & c_i & d_i \cr
}, \quad a_i, b_i, c_i, d_i \in \mathbb{R}. 
$$
The key feature of the model is that players have different payoff matrices \citep[see e.g.][]{ramazi2016networks}. Therefore, we call the system $G=(N,E,A,U)$ a network game with heterogeneous payoffs (HP game for short).

We now introduce the evolutionary process. The evolutionary dynamics take place over a sequence of discrete time
$t=0,1,2, ...$. Let $x_i(t) \in A$ denote the strategy of player $i$ at time $t$, and denote the number of neighbors of player $i$ choosing strategy $0$ and $1$ at time $t$ by $n_i^0(t)$ and $n_i^1(t)$, respectively. The total payoffs to player $i$ at time $t$ are accumulated over all neighbors, and are therefore equal to $a_i n_i^0(t) + b_i n_i^1(t)$ when $x_i=0$, or $c_i n_i^0(t) + d_i n_i^1(t)$ when $x_i=1$. 

In asynchronous (myopic) best response dynamics, one player at each time becomes active and chooses a single action to play against all neighbors. At time $t+1$, the active player chooses the action that achieves the highest total payoff based on the strategy profile at time $t$:
$$
x_i(t+1)=\left\{\begin{array}{c}
0, \text { if } a_i n_i^0(t) + b_i n_i^1(t)>c_i n_i^0(t) + d_i n_i^1(t) \\
1, \text { if } a_i n_i^0(t) + b_i n_i^1(t)<c_i n_i^0(t) + d_i n_i^1(t) \\
x_i(t), \text { if } a_i n_i^0(t) + b_i n_i^1(t)=c_i n_i^0(t) + d_i n_i^1(t)
\end{array}\right..
$$
In the case that the two strategies result in equal payoffs, both strategies are best responses and we assume players keep their current strategy, since they have no incentive to deviate. We note that each pure strategy Nash equilibrium (PNE) corresponds to a fixed point of the above best response dynamics. Thus, the asymptotic behavior of the best response dynamics can be used to characterize the properties of PNE.

It is convenient to rewrite the best response dynamics above in terms of the number of neighbors playing each strategy, making them equivalent to a linear-threshold model \citep[see e.g.][]{granovetter1978threshold,kleinberg2007cascading,ramazi2016networks}: 

$$
x_i(t+1)=\left\{\begin{array}{c}
0, \text { if } \operatorname{sgn}(\delta_i) n_i^0(t) > \operatorname{sgn}(\delta_i)\tau_i n_i \\
1, \text { if } \operatorname{sgn}(\delta_i) n_i^0(t) <\operatorname{sgn}(\delta_i) \tau_i n_i \\
x_i(t), \text { if } \operatorname{sgn}(\delta_i) n_i^0(t) =\operatorname{sgn}(\delta_i) \tau_i n_i 
\end{array}\right.,
$$
where $\operatorname{sgn}(\cdot)$ is the sign function, $\delta_i=a_i-c_i+d_i-b_i$ determines the type of player $i$, $\tau_i= \frac{d_i-b_i}{\delta_i}$ is the threshold of player $i$ when $\delta_i \neq 0$, and $n_i$ is the degree of player $i$.
We thus see that if $\delta_i > 0$ and $\tau_i \in (0,1)$, player $i$ adopts strategy $0$ when a sufficient fraction of her neighbors play  strategy $0$; we refer to such a player as a coordinator. If $\delta_i < 0$ and $\tau_i \in (0,1)$, player $i$ adopts strategy~$1$ when a sufficient fraction of her neighbors play strategy $0$; we refer to her as an anti-coordinator. If $\delta_i = 0$ or $\tau_i \notin (0,1)$, player $i$ has a dominant strategy, and we refer to her as a stubborn agent (see Figure \ref{fig1}A and B for an example). We denote the sets of coordinators, anti-coordinators, stubborn agents with dominant strategy 0, and stubborn agents with dominant strategy 1 by $Co$, $ACo$, $S_0$, and $S_1$, respectively.

\subsection{Network games with three types of players} Since every player in the HP game has different payoff functions, it is difficult to directly analyze its PNE. To address this issue, we introduce a relatively simpler game, network games with three types of players (CRS game for short): conformists who prefer to coordinate with the majority, rebels who prefer to deviate from the majority, and stubborn agents who never change their initial strategy. With the above notation, it can be seen that a CRS game is a special HP game in which all coordinators and anti-coordinators have the same threshold $\tau_i = \frac{1}{2}$. We next show that for each HP game, we can construct an equivalent CRS game in the sense that the two games have identical PNE and strategy evolution trajectories. 

Formally, a CRS game can be represented by a system $G=(C,R,S,E,A,V)$, where $C$, $R$, and $S$ are the sets of conformists, rebels, and stubborn agents, respectively, $E$ is the set of edges, $A=\{0,1\}$ is the set of strategies, and $V=\left(u_i(\cdot)\right)_{i \in N}$ is the set of payoff functions, with
$$
u_i(\boldsymbol{x}) =
\begin{cases}
n_i^{x_i}, & \text{if } i \in C \\
n_i - n_i^{x_i}, & \text{if } i \in R \\
0, & \text{if } i \in S
\end{cases},
$$
where $n_i^{x_i}$ is the number of neighbors of player $i$ with strategy $x_i$ (i.e., neighbors using the same strategy as player $i$). 

We begin by defining equivalence between an HP game and a CRS game. The two games are equivalent if the following two conditions hold: (i) There exists a one-to-one correspondence between $Co$ and $C$, and between $ACo$ and $R$. (ii) Each coordinator (or anti-coordinator) $i$ and her corresponding conformist (or rebel) $i'$ have the same best response strategy when $x_j=x_{j'}$ for all other player $j \in Co \cup ACo$ and her corresponding player $j'\in C \cup R$. Thus, an HP game and its equivalent CRS game have identical PNE in the sense that, at corresponding equilibria, each coordinator and anti-coordinator in the HP game adopts the same strategy as the corresponding conformist and rebel in the CRS game. In addition, if coordinators and anti-coordinators in the HP game have the same initial strategy and active order with their corresponding conformists and rebels, then the two games yield identical strategy evolution trajectories under the asynchronous best response dynamics. We can now state the key result in order to use this equivalence to characterize PNE of HP games:

\begin{theorem}
For any HP game, we can construct an equivalent CRS game by adding stubborn neighbors to coordinators and anti-coordinators.
\label{th1}
\end{theorem}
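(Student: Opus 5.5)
The construction works player by player. Fix a coordinator $i\in Co$ with degree $n_i$ and threshold $\tau_i\in(0,1)$. In the HP game she plays $0$ when $n_i^0>\tau_i n_i$, plays $1$ when $n_i^0<\tau_i n_i$, and keeps her strategy at equality. A conformist $i'$ in the CRS game who has the same $n_i$ original neighbors, plus $s_i^0$ stubborn neighbors fixed at $0$ and $s_i^1$ fixed at $1$, plays $0$ iff
\[
n_i^0+s_i^0>n_i^1+s_i^1, \qquad\text{equivalently}\qquad 2n_i^0>n_i+s_i^1-s_i^0 .
\]
The goal is to choose nonnegative integers $s_i^0,s_i^1$ so that, for every integer $n_i^0\in\{0,\dots,n_i\}$, the sign of $2n_i^0-(n_i+s_i^1-s_i^0)$ agrees with the sign of $n_i^0-\tau_i n_i$, including the case of zero.

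Because $n_i^0$ ranges only over integers, only the position of $\tau_i n_i$ relative to the integers matters. If $\tau_i n_i$ is an integer $k$, set $s_i^1-s_i^0=2k-n_i$. Then $2n_i^0-(n_i+s_i^1-s_i^0)=2(n_i^0-k)$, which matches signs exactly, ties included. If $\tau_i n_i\in(k,k+1)$ for an integer $k$, set $s_i^1-s_i^0=2k+1-n_i$. Then the expression equals $2n_i^0-2k-1$, which is odd and therefore never zero, and it is positive iff $n_i^0\ge k+1$, iff $n_i^0>\tau_i n_i$. In both cases we take $s_i^0=\max(0,-D)$ and $s_i^1=\max(0,D)$, where $D$ is the required difference.

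Anti-coordinators are handled symmetrically. A rebel plays $1$ when the majority of her neighbors, counting stubborn ones, play $0$, and the same integer choice of $D$ works with all inequalities reversed. Stubborn HP players ($S_0$, $S_1$) are mapped to stubborn CRS players with the same fixed strategy. Every added stubborn node is attached only to its own player, so it does not change any other player's neighbor counts.

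Equivalence then follows directly. The correspondence $Co\leftrightarrow C$ and $ACo\leftrightarrow R$ is the identity on nodes. Whenever the original players' strategies agree, each $i$ and $i'$ see the same $n_i^0$, so by construction they have the same best-response set and the same tie-keeping behavior. This yields identical PNE and identical asynchronous trajectories under matching initial conditions and activation orders.

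The main obstacle is the tie case. One must make sure that indifference in the HP game (possible only when $\tau_i n_i$ is an integer) corresponds exactly to indifference in the CRS game, and that spurious ties are never created otherwise. The parity argument above handles this: an integer threshold gives an even offset that reproduces ties exactly, and a non-integer threshold gives an odd offset that rules them out.

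A minor point concerns HP stubborn players with $\tau_i$ on the boundary or $\delta_i=0$. If they can be indifferent, for example when $\delta_i=0$ and $d_i=b_i$, they keep their strategy, which is exactly what a CRS stubborn agent does. They are therefore mapped to stubborn agents with their initial strategy.
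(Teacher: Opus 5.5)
Your proposal is correct and is the paper's construction: the offsets $D=2k-n_i$ when $\tau_i n_i=k$ is an integer and $D=2k+1-n_i$ when $k<\tau_i n_i<k+1$, realized by $\max(0,-D)$ stubborn $0$-neighbors or $\max(0,D)$ stubborn $1$-neighbors, are exactly its rules (i)--(ii). Your parity argument also makes explicit the tie preservation that the paper only asserts: indifference is reproduced exactly when $\tau_i n_i$ is an integer and never created otherwise.

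One small correction concerns your last paragraph, on players with $\tau_i\in\{0,1\}$ and $\delta_i\neq 0$. Such a player is indifferent in only one configuration; for example, with $\tau_i=0$ and $\delta_i>0$ she strictly prefers $0$ as soon as any neighbor plays $0$. So mapping her to a stubborn agent fixed at her initial strategy does not reproduce her trajectory in general. The paper likewise just treats such players as fixed stubborn agents, so this does not affect the theorem.
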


The above theorem indicates that, for each HP game, there is an equivalent CRS game (see Figure \ref{fig1} for an example). Thus, we can characterize the PNE and the evolutionary dynamics of an HP game by analyzing its equivalent CRS game. 
The proof of the theorem can be found in \textit{Appendix}, section S1A. In the next section, we use this result to analyze HP games in general.

\begin{figure}[H]
\centering
\includegraphics[width=0.8\linewidth]{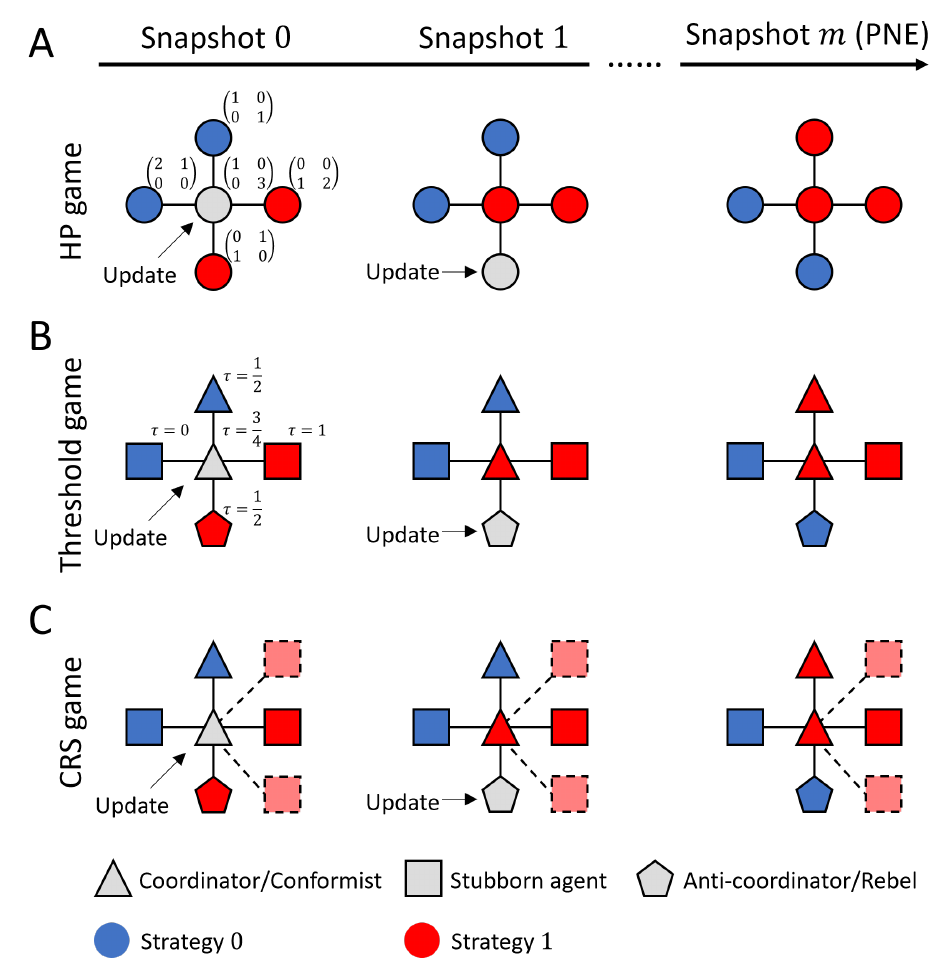}
\caption{\textbf{Equivalence of the dynamics for the HP game, the threshold game, and the CRS game.} (A) In the HP game, each player has a distinct payoff matrix. At each time step, the active player in the snapshots is marked in light gray, and updates their strategy based on the best response dynamics. (B) The HP game can be reformulated as a linear threshold game (with heterogeneous threshold values). Triangles represent coordinators, squares represent stubborn agents, and pentagons represent anti-coordinators. (C) The threshold game can be further transformed into an equivalent CRS game by adding stubborn agents (dashed squares). Triangles represent conformists, squares represent stubborn agents, and pentagons represent rebels.}\label{fig1}
\end{figure}

\section{Pure strategy Nash equilibrium}\label{sec3}

We first provide some conditions for the existence of a PNE in CRS games, and then extend them to HP games. When there is no ambiguity, we also denote the sets of stubborn agents with dominant strategies 0 and 1 in a CRS game by $S_0$ and $S_1$, respectively. 

\begin{theorem}\label{th2}
A network CRS game admits a PNE if one of the following conditions holds: (i) the game does not have conformist–rebel (CR) edges; 
(ii) there exists a strategy $a\in \{0,1\}$ such that each conformist has at least $\frac{1}{2}$ of her neighbors in the set $C \cup S_a$; (iii) there exists a partition of the rebel set $R$, $\{R_0,R_1\}$, such that each rebel in $R_0$ has at least $\frac{1}{2}$ of her neighbors in the set $R_1 \cup S_1$ and each rebel in $R_1$ has at least $\frac{1}{2}$ of her neighbors from the set $R_0 \cup S_0$. 
\end{theorem}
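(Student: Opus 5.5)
We prove the three conditions separately. Condition (i) is handled by an exact potential function. Conditions (ii) and (iii) are reduced to (i): we freeze one class of players at a prescribed profile, treat the frozen players as stubborn agents, and then verify that the frozen profile is itself a (weak) best response.

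\textbf{Condition (i).} Fix every stubborn agent at its dominant strategy, so $x_j=0$ for $j\in S_0$ and $x_j=1$ for $j\in S_1$. Define on the profiles of $C\cup R$ the function
\[
\Phi(\boldsymbol{x})=\sum_{\{i,j\}\in E,\ i,j\in C}\mathbbm{1}[x_i=x_j]+\sum_{\{i,j\}\in E,\ i,j\in R}\mathbbm{1}[x_i\neq x_j]+\sum_{\{i,j\}\in E,\ i\in C,\ j\in S}\mathbbm{1}[x_i=x_j]+\sum_{\{i,j\}\in E,\ i\in R,\ j\in S}\mathbbm{1}[x_i\neq x_j].
\]
Because there are no CR edges, every edge incident to a conformist $i$ falls into the first or third sum. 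Hence a unilateral change of $x_i$ changes $\Phi$ by exactly the change in $u_i=n_i^{x_i}$. The symmetric statement holds for a rebel $i$, whose edges lie only in the second and fourth sums and whose payoff is $u_i=n_i-n_i^{x_i}$. So $\Phi$ is an exact potential in the sense of \citet{monderer1996potential}. A maximiser of $\Phi$ over the finite profile space is therefore a PNE. Moreover, under the tie rule each strict switch raises $\Phi$ by at least one, so asynchronous best response dynamics converge; this gives the convergence part used later.

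\textbf{Condition (ii).} Set every conformist to $a$ and keep stubborn agents at their dominant strategies. Then each conformist $i$, whatever the rebels do, has at least $n_i/2$ neighbors playing $a$, namely those in $C\cup S_a$. Therefore $n_i^{a}\ge n_i-n_i^{a}$, and $a$ is a weak best response for $i$, which suffices for Nash equilibrium. Now regard the conformists as additional stubborn agents with dominant strategy $a$. The resulting game on the rebels has no CR edges, so part (i) gives a rebel profile in which every rebel best-responds. Combining the two profiles yields a PNE of the original game, because the conformists' best-response property holds for any rebel profile.

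\textbf{Condition (iii).} Argue symmetrically. Set $x_i=0$ for $i\in R_0$ and $x_i=1$ for $i\in R_1$. A rebel in $R_0$ has at least half of its neighbors in $R_1\cup S_1$, all playing $1$, so $n_i^0\le n_i/2$ and $0$ is a weak best response, independently of the conformists. The same reasoning applies to $R_1$. Next treat all rebels as stubborn agents at these strategies and apply (i) to the game restricted to the conformists, which has no CR edges. This produces conformist best responses, and the combined profile is a PNE.

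\textbf{Main obstacle.} The step needing the most care is the reduction of (ii) and (iii) to (i). Two points must be checked:
\begin{itemize}
\item Freezing one class as ``stubborn'' must not invalidate that class's own best-response property once the other class is chosen. This is why the hypotheses are stated for every profile of the other class.
\item The weak inequality ``at least $\frac12$'' must give a best response even at ties. Here we use that a PNE only requires weak optimality, not the dynamics' tie-breaking rule.
\end{itemize}
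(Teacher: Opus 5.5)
Your proof is correct and follows essentially the same route as the paper: an exact potential for (i), and freezing the conformists at $a$ (resp.\ the rebels at the partition $\{R_0,R_1\}$) to reduce (ii) to an RS game and (iii) to a CS game. The only difference is cosmetic: your single potential $\Phi$ handles the CS and RS components together, whereas the paper writes a potential only for the RS component (it equals your RR and RS sums) and cites \citet{cao2024discrete} for the CS component.
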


While a formal proof for Theorem \ref{th2} is provided in \textit{Appendix}, section S1B, it is illustrative to present here a sketch of it. For case (i), a CRS game without CR edges can be decomposed into a CS game (a game without rebels) and an RS game (a game without conformists). Both the CS and RS games are exact potential games and therefore possess a PNE. 
For case (ii), when all conformists adopt the strategy $a$, they have no incentive to deviate regardless of the strategies of the rebels. In this case, we can treat conformists as stubborn agents with dominant strategy $a$, and then the CRS game is reduced to an RS game, which has a PNE. For case (iii), we can assign strategy $0$ to rebels in $R_0$ and strategy $1$ to rebels in $R_1$. Then, all rebels have no incentive to deviate regardless of the strategies of the conformists. In this case, we can treat rebels as stubborn agents and the CRS game reduces to a CR game, which has a PNE. 

Using the equivalence between CRS and HP games, he three conditions provided in Theorem \ref{th2} can be directly extended to HP games (the proof is provided in \textit{Appendix}, section S1C).

\begin{corollary}\label{co1}
A network HP game admits a PNE if one of the following conditions holds: (i) the game does not have coordinator–anti-coordinator (CoACo) edges; (ii) each coordinator $i\in Co$ has at least $\tau_i$ proportion of her neighbors from the set $Co \cup S_0$, or each coordinator has at least $1-\tau_i$ proportion of her neighbors from the set $Co \cup S_1$; (iii) there exists a partition of the anti-coordinator set $ACo$, $\{ACo_0, ACo_1\}$, such that each anti-coordinator $i\in ACo_0$ has at least $1-\tau_i$ proportion of her neighbors from the set $ACo_1 \cup S_1$ and each anti-coordinator $j\in ACo_1$ has at least $\tau_j$ proportion of her neighbors from the set $ACo_0 \cup S_0$.
\end{corollary}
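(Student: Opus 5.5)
The plan is to pass through Theorem \ref{th1}. For a given HP game we build its equivalent CRS game by attaching suitable stubborn neighbors to every coordinator and anti-coordinator. We then check that each of the three conditions (i)--(iii) of Corollary \ref{co1} turns into the corresponding condition of Theorem \ref{th2} in that CRS game. Since equivalent games have identical PNE, a PNE of the CRS game yields one of the HP game.

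The first step is to make the construction behind Theorem \ref{th1} explicit. Consider a coordinator $i$ with threshold $\tau_i\in(0,1)$ and degree $n_i$. We add $s_i^0$ stubborn neighbors in $S_0$ and $s_i^1$ in $S_1$. The aim is that the majority rule over $n_i+s_i^0+s_i^1$ neighbors reproduces the threshold rule $n_i^0 \gtrless \tau_i n_i$, ties included. This requires, up to scaling, $s_i^1-s_i^0=(2\tau_i-1)n_i$. The scaling must be chosen so that all quantities are integers and ties coincide. If needed, we replicate each original edge $k$ times, or add a large balanced pair of stubborn neighbors, so that integer solutions exist. We also have to make sure that strict inequalities in the HP rule remain strict after replication; integer counts give a positive gap, so this works.

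Anti-coordinators are handled the same way, with the rebel rule. Original stubborn agents $S_0,S_1$ are kept as stubborn agents, and in this step I would record exactly how many added stubborn neighbors of each type player $i$ receives.

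\textbf{Condition (i).} The added vertices are all stubborn, so CR edges in the CRS game are exactly the CoACo edges of the HP game. Condition (i) of Corollary \ref{co1} therefore gives condition (i) of Theorem \ref{th2} directly.

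\textbf{Condition (ii).} Suppose every coordinator $i$ has at least $\tau_i n_i$ neighbors in $Co\cup S_0$. Count the neighbors of the corresponding conformist in $C\cup S_0$ inside the augmented graph: this is at least $\tau_i n_i + s_i^0$. We need this to be at least half of $n_i+s_i^0+s_i^1$, which reduces to $2\tau_i n_i + s_i^0 - s_i^1 \ge n_i$. Substituting $s_i^1-s_i^0=(2\tau_i-1)n_i$, this holds with equality. So the condition with $a=0$ holds; the $1-\tau_i$ / $S_1$ case is symmetric and gives $a=1$.

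\textbf{Condition (iii).} Take the partition $R_0=ACo_0$, $R_1=ACo_1$ and run the same computation with the anti-coordinator's stubborn counts. The direction of the threshold matters here: anti-coordinators play $1$ when $n_i^0>\tau_i n_i$.

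The main obstacle is getting the construction right rather than the inequalities. Two points need care. First, ties: the HP rule keeps the current strategy exactly when $n_i^0=\tau_i n_i$, and after scaling this must coincide with exact balance in the CRS game. Second, if the construction in the proof of Theorem \ref{th1} scales by multiplying edges, then "proportion of neighbors" must be preserved under that scaling. I would first try the simplest version, with no edge replication, choosing $s_i^0,s_i^1$ so that $s_i^1-s_i^0$ is the appropriate integer shift of $(2\tau_i-1)n_i$ together with the tie convention. The inequalities above are monotone in this shift, so a rounding adjustment in the right direction keeps conditions (ii) and (iii) valid.
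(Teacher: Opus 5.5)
Your argument is correct. For (i) it is the same as the paper's, but for (ii) and (iii) you take a different route.

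\textbf{The paper's route.} The paper never opens up the construction of Theorem 1 for (ii) and (iii). It works directly in the HP game. With all coordinators at $a$, the hypothesis gives $n_i^0\ge\tau_i n_i$ (resp.\ $n_i^1\ge(1-\tau_i)n_i$), so every coordinator is best-responding whatever the anti-coordinators do. It then freezes the coordinators as stubborn agents, which leaves a game without CoACo edges, and case (i) applies. Condition (iii) is handled the same way, with the anti-coordinators frozen according to $\{ACo_0,ACo_1\}$. Since a PNE only requires weak best responses, this route needs no tie or rounding bookkeeping.

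\textbf{Your route.} You show that the equivalent CRS game satisfies Theorem 2(ii)/(iii) verbatim, so Corollary 1 becomes literally the pull-back of Theorem 2. The price is the explicit stubborn counts, and your justification of the non-integer case needs fixing.

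\textbf{The non-integer case.} The shift is not yours to choose. Preserving the absence of ties forces $n_i+s_i^1-s_i^0=2K_i+1$ with $K_i=\lfloor\tau_i n_i\rfloor$, which is exactly rule (ii) in the paper's proof of Theorem 1. No edge replication is needed, and replication would fail anyway for irrational $\tau_i$. Moreover, $2K_i+1$ can lie on either side of $2\tau_i n_i$. So one fixed shift cannot be ``in the right direction'' for both the $S_0$ and $S_1$ branches of (ii), or for both $ACo_0$ and $ACo_1$ in (iii), and monotonicity does not settle the case.

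What actually closes it is integrality:
\begin{itemize}
\item An integer count $\ge\tau_i n_i$ is at least $K_i+1$, and $2(K_i+1)>2K_i+1$.
\item An integer count $\ge(1-\tau_i)n_i$ is at least $n_i-K_i$, and $2(n_i-K_i)>2n_i-2K_i-1=n_i+s_i^0-s_i^1$.
\end{itemize}
These are exactly the inequalities needed in both branches of (ii) and for both parts of the partition in (iii).
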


Condition (i) in Corollary \ref{co1} can be obtained by applying Theorem 1 and Condition (i) in Theorem 2. \cite{ramazi2016networks} have shown that an HP game with coordinators only or anti-coordinators only admits a PNE. Thus, condition (i) is an extension of their result (by involving stubborn agents). When coordinators and anti-coordinators coexist, conditions (ii) and (iii) imply that a PNE can exist if each coordinator has more coordinating neighbors or each anti-coordinator has more anti-coordinating neighbors. \cite{zhang2018fashion} have shown that a fashion game (a game with conformists and rebels) has a PNE if it satisfies strong conformist homophily or strong rebel homophily, and their results are special cases of conditions (ii) and (iii) in Theorem \ref{th2} and Corollary \ref{co1}.

As for the question of nonexistence of a PNE, to the best of our knowledge there is no general method to determine this for a CRS or HP game on an arbitrarily given network. The next theorem shows that a PNE almost surely does not exist for CRS games on large random sparse networks, where such networks follow a Poisson degree distribution with a small average degree relative to the network size (the proof is provided in \textit{Appendix}, section S1D). 

\begin{theorem}\label{th3}
Consider a network CRS game on a random sparse graph with $n$ nodes and average degree $d$.
If conformists, rebels, stubborn agents with dominant strategy $0$, and stubborn agents with dominant strategy $1$ are randomly distributed on the network, then the game almost surely does not admit a PNE as $n \to \infty$.
\end{theorem}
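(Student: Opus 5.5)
Since the graph has a Poisson degree distribution with fixed mean $d$ while $n\to\infty$, it is locally tree-like and contains, with probability tending to one, a linearly growing number of vertex-disjoint copies of any fixed finite subgraph. So I would exhibit one small local configuration that, by itself, admits no PNE for every possible assignment of strategies to the rest of the graph. Such a configuration has a positive, $n$-independent probability of appearing at a given location. The presence of at least one copy then rules out a PNE globally.

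\textbf{The gadget.} The simplest candidate is an isolated conformist--rebel pair: two adjacent vertices $i\in C$, $j\in R$, each of degree one, so the pair forms a connected component $K_2$. At any profile, either $x_i=x_j$ or $x_i\neq x_j$. In the first case, the rebel $j$ strictly prefers to switch, since $n_j-n_j^{x_j}=0<1$. In the second case, the conformist $i$ strictly prefers to switch, since $n_i^{x_i}=0<1$. Hence no profile restricted to this component is a best response for both players, and the whole game has no PNE. Because components are independent, this argument does not depend on anything else in the graph.

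If one prefers a gadget attached to a larger component, I would instead take a conformist and a rebel adjacent to each other and each adjacent to one stubborn agent in $S_0$ and one in $S_1$, with no other neighbors. The stubborn contributions then cancel, and the same parity argument applies with ties broken strictly.

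\textbf{Probabilistic step.} Assume types are assigned i.i.d.\ with fixed positive proportions $p_C,p_R,p_{S_0},p_{S_1}$. In the sparse Erdős–Rényi model $G(n,d/n)$, the number $X_n$ of isolated edges satisfies
\[
\mathbb{E}X_n=\binom{n}{2}\frac{d}{n}\Bigl(1-\frac{d}{n}\Bigr)^{2(n-2)}\sim \frac{n d e^{-2d}}{2}.
\]
Each isolated edge independently carries a C--R label with probability $2p_Cp_R$. The number $Y_n$ of isolated C--R edges therefore has mean $\Theta(n)$. A second-moment (Chebyshev) computation shows $\operatorname{Var}Y_n=O(n)$: pairs of distinct isolated edges are nearly independent, with covariance corrections of order $1/n$ per pair. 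Hence $\Pr(Y_n=0)\le \operatorname{Var}Y_n/(\mathbb{E}Y_n)^2=O(1/n)\to 0$.

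One could also upgrade this to probability exponentially close to one by exposing disjoint vertex sets, but convergence in probability suffices for the claim. To obtain the almost-sure statement along $n$, I would use the exponential bound, which is summable, and apply Borel–Cantelli. Alternatively, I would read ``almost surely'' in the standard asymptotic sense (with high probability).

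\textbf{Main obstacle.} The delicate part is the variance estimate and, if a literal almost-sure statement is wanted, obtaining a summable tail bound. For that I would partition the vertex set into $\Theta(n)$ disjoint blocks and count block-internal isolated C--R edges. I would apply a bounded-differences (McDiarmid) inequality on the edge-exposure martingale over vertices; each vertex changes the count by at most its degree plus one, which is handled by truncating at high degrees. The gadget argument itself is elementary.
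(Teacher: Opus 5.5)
Your proof is correct. It follows the same overall strategy as the paper: exhibit a fixed finite configuration that by itself rules out a PNE, then show it occurs with probability tending to one. Both ingredients differ, though.

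\textbf{The gadget.} The paper uses a degree-3 rebel whose neighbors are a degree-1 conformist, one agent in $S_0$ and one in $S_1$. You use an isolated conformist--rebel edge. This is exactly the case $k_u=k_v=0$ of the ``critical local structure'' the paper describes after the theorem. Your second gadget is the case $k_u=k_v=1$, and the paper's is $k_u=0$, $k_v=1$.

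\begin{itemize}
\item Your choice needs only $p_C,p_R>0$, makes the no-PNE check trivial, and avoids stubborn agents altogether.
\item The paper's choice has the advantage that the obstruction need not be a separate component, since the stubborn neighbors may have arbitrarily many further neighbors. It therefore survives if you restrict attention to the giant component, where your isolated $K_2$ cannot occur. Your degree-3/degree-3 variant has the same robustness, and your second-moment count goes through for it unchanged.
\end{itemize}

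\textbf{The probabilistic step.} Here your argument is the more rigorous one. The paper simply writes $(1-\kappa)^{\mathcal{N}_3}\to 0$. That puts the \emph{expected} number of degree-3 rebels into an exponent, as if the local events at different rebels were independent. Also, its factor $de^{-d}$ should be $e^{-d}$, because a neighbor's degree is size-biased; any positive constant suffices, so this does not affect the conclusion. Your Chebyshev bound with $\operatorname{Var}Y_n=O(n)$ actually controls the dependence. It also gives an explicit rate, $\Pr(\text{PNE exists})=O(1/n)$. In the variance, covariances between isolated-edge indicators that share a vertex are nonpositive, since those events are mutually exclusive, so they only help.

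\textbf{Two small remarks.}
\begin{itemize}
\item Your opening claim, that the graph contains linearly many disjoint copies of \emph{any} fixed finite subgraph, holds only for trees; subgraphs containing a cycle appear only $O(1)$ times. All your gadgets are trees, so nothing breaks.
\item The paper's ``almost surely as $n\to\infty$'' is meant in the with-high-probability sense, which is exactly what your Chebyshev step proves. The McDiarmid/Borel--Cantelli upgrade you flag as the main obstacle is not needed for the statement.
\end{itemize}
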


In the proof of Theorem \ref{th3}, we identify a simple but critical local structure whose existence precludes the game from admitting a PNE. This local structure consists of four players: a degree‑3 rebel with a degree‑1 conformist neighbor and two stubborn neighbors whose dominant strategies are 0 and 1, respectively. In this local structure, the rebel and the conformist continually change their choices: the rebel adopts the strategy opposite to the conformist, but then the conformist switches, prompting the rebel to switch as well, and so on.

We note that many local structures can preclude the existence of PNE. More generally, a pair of nodes $(u,v)$ is said to form a critical local structure if there exist non-negative integers $k_u,k_v\ge 0$ such that node $u \in C$ has $2k_u+1$ neighbors, including one rebel neighbor $v$, $k_u$ stubborn neighbors with dominant strategy 0, and $k_u$ stubborn neighbors with dominant strategy 1; similarly, node $v \in R$ has $2k_v+1$ neighbors, including one conformist neighbor $u$, $k_v$ stubborn neighbors with dominant strategy 0, and $k_v$ stubborn neighbors with dominant strategy 1 (thus, the degrees of $u$ and $v$ must be odd numbers). Clearly, if a CRS game has a pair of nodes that form a critical local structure, then the game admits no PNE. Based on this, Theorem \ref{th3} can be extended to other random networks with diverse degree distributions, such as normal, scale‑free, and exponential distributions, provided that a certain number of odd‑degree nodes persists as the network grows.

\section{Evolutionary dynamics and equilibrium outcomes}\label{sec4}
\subsection{Estimating equilibrium frequencies of strategies in CRS games}

Although Theorem \ref{th2} and Corollary \ref{co1} provide conditions for the existence of a PNE, computing a specific PNE remains difficult, and even testing the existence of PNE for a CRS or HP game is NP-hard \citep{cao2014fashion}. Moreover, Theorem \ref{th3} shows that PNE do not exist for most large networks. Therefore, a central challenge is to understand how the system evolves when no PNE exists. 

We develop a general framework for estimating evolutionary trends and equilibrium strategy frequencies for CRS and HP games on arbitrary networks, without computing or checking the existence of a PNE. Using mean-field and stochastic approximations, we reduce the asynchronous best-response dynamics to a system of ordinary differential equations (ODE) that describes the time evolution of strategy frequencies within the conformist/coordinator and rebel/anti-coordinator subpopulations. We then fully characterize the fixed points and the global dynamics of this ODE system. The framework allows us to evaluate how the network structure and individual heterogeneity shape evolutionary outcomes.

Similarly to the previous subsection, we first consider CRS games and then extend the results to HP games. Given a strategy profile $\boldsymbol{x}$, let $C_0$ and $C_1$ denote the sets of conformists adopting strategies $0$ and $1$, respectively, with sizes $n_{C0}$ and $n_{C1}$. Define $R_0$, $R_1$, $n_{R0}$ and $n_{R1}$ analogously for rebels. Thus, $x_c = \frac{n_{C0}}{n_C}$ and $x_r = \frac{n_{R0}}{n_R}$ represent the frequencies of conformists and rebels adopting strategy $0$. We focus on the time evolution of the equilibrium frequencies of strategies $(x_c, x_r)$ under the asynchronous best response dynamics.

Following \cite{zhang2018fashion} and \cite{PEI2024dynamic}, we use homophily and heterophily indices to measure interactions between different types of players. Given a CRS game, its (global) homophily indices $h_{XX}$ and heterophily indices $h_{XY}$ are defined by
$$
h_{XX}= \frac{2K_{XX}}{2K_{XX} + \sum_{Z \in \{C, R, S_0, S_1\} \setminus \{X\}} K_{XZ}},
$$
$$
h_{XY}= \frac{K_{XY}}{2K_{XX} + \sum_{Z \in \{C, R, S_0, S_1\} \setminus \{X\}}K_{XZ}},
$$
where $X, Y\in \{C, R, S_0, S_1\}$ and $K_{XY}$ denotes the number of edges between players of types $X$ and $Y$. Intuitively, $h_{XY}$ measures the probability that a randomly chosen neighbor of a type $X$ player is type $Y$. 

Let $v_i(\boldsymbol{x}) = \frac{u_i(\boldsymbol{x})}{n_i}$ be the normalized payoff for player $i$. The average normalized payoffs for players of types $C_0$, $C_1$, $R_0$, and $R_1$ can be approximated as
\begin{equation}
\begin{cases}
v_{C0} \approx h_{CC}x_c+h_{CR}x_r+h_{CS_{0}} \\
v_{C1} \approx h_{CC}(1-x_c) + h_{CR}(1-x_r) + h_{CS_{1}} \\
v_{R0} \approx h_{RR}(1-x_r) + h_{RC}(1-x_c) + h_{RS_{1}} \\
v_{R1} \approx h_{RR} x_r + h_{RC} x_c + h_{RS_{0}}
\end{cases}.
\label{eq1}
\end{equation}
In the derivation of Eq.~(\ref{eq1}), we apply two mean-field approximations. First, we approximate the distribution of different types of neighbors among a conformist (or a rebel) by the homophily index $h_{CX}$ (or $h_{RX}$). Second, we approximate the distribution of neighbors' strategies among a player by the global strategy distributions $x_c$ and $x_r$. A detailed derivation of Eq.~(\ref{eq1}) is given in \textit{Appendix}, section S1E.

Next, by applying the stochastic approximation method \citep[see e.g.][]{sandholm2010population,roth2013stochastic,PEI2024dynamic}, the time evolution of $(x_c, x_r)$ under the asynchronous best response dynamics with stochastic deviating orders and action switchings
can be described by the following ODE system in the limit of $n\to \infty$ and continuous time:
\begin{equation}
\begin{cases}
\frac{dx_c}{dt} = (1 - x_c)\phi(v_{C1} - v_{C0}) - x_c\phi(v_{C0} - v_{C1}) \\
\frac{dx_r}{dt} = (1 - x_r)\phi(v_{R1} - v_{R0}) - x_r\phi(v_{R0} - v_{R1}) 
\end{cases},
\label{eq2}
\end{equation}
where $\phi(\cdot)$ is the strategy switching function with $\phi(v)> 0$ when $v<0$ and $\phi(v)= 0$ when $v\geq 0$. The complete derivation of ODE (\ref{eq2}) is provided in \textit{Appendix}, section S1F, and the phase portrait of ODE (\ref{eq2}) is shown in Figure \ref{fig2}.

\begin{figure}[H]
\centering
\includegraphics[width=1\linewidth]{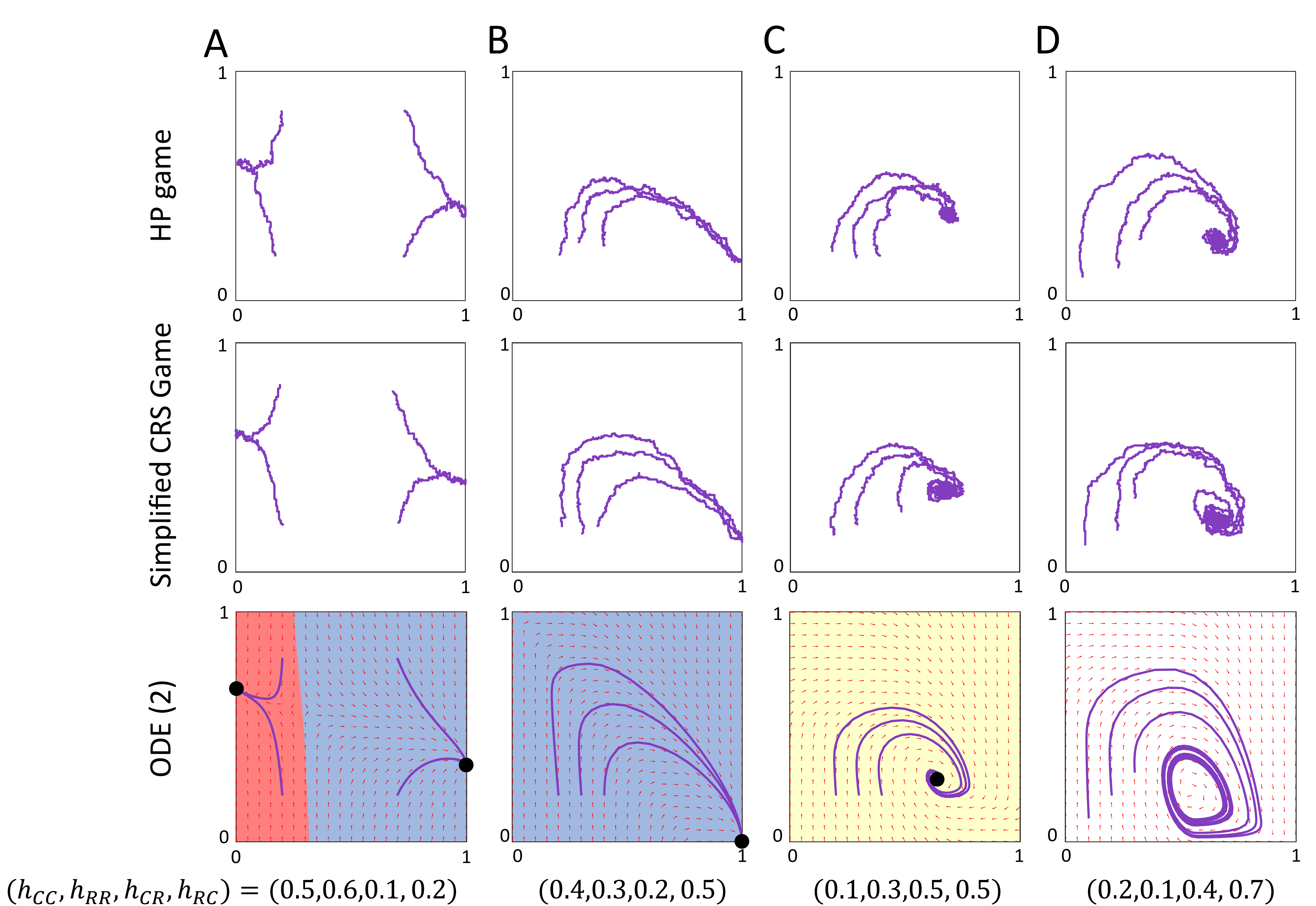}
\caption{\textbf{Comparisons between trajectories of the HP game (top), its simplified CRS game (middle), and the corresponding ODE (\ref{eq2}) (bottom)}. Horizontal and vertical axis correspond to the $x_c$ and $x_r$, respectively. Purple curves denote the trajectories. Solid points in the bottom row denote the stable fixed points of ODE (\ref{eq2}), and red, blue, and yellow regions denote the basins of attraction of the fixed points $(0,*)$, $(1,*)$, and $(\bar{x}_c, \bar{x}_r)$, respectively. In numerical simulations (top and middle rows), the network size is taken as $n=1000$ and the average degree is $16$, with $n_C=n_R=n_{S_0}=n_{S_1}=250$ in simplified CRS games. In addition, we take $h_{CS_0}=0.3$ and $h_{CS_1}=h_{RS_0}=h_{RS_1}=0.1$ in all the subfigures, and adjust $h_{CC}, h_{RR}, h_{CR}$, and $h_{RC}$ according to Theorem \ref{th4}. Details for network generation and agent-based simulation can be found in \textit{Appendix}, section S2C.}\label{fig2}
\end{figure}

We then characterize the dynamic properties of ODE (\ref{eq2}).  Table \ref{tab1} summarizes the existence and stability conditions for all possible fixed points. In addition, the global dynamic behavior of ODE (\ref{eq2}) is presented in Theorem 4 (the proof is provided in \textit{Appendix}, section S1H).

\begin{table}[h]
\centering
\small
\caption{\textbf{The existence and stability conditions of fixed points of ODE (\ref{eq2})}. ODE (\ref{eq2}) has at most one interior fixed point $(\bar{x}_c, \bar{x}_r) = \left( \frac{x_c^* h_{RR} - x_r^* h_{CR}}{h_{CC}h_{RR} - h_{CR}h_{RC}}, \frac{x_r^* h_{CC} - x_c^* h_{RC}}{h_{CC}h_{RR} - h_{CR}h_{RC}} \right)$, where $x_c^* = \frac{1}{2}-h_{CS_0}$ and $x_r^* = \frac{1}{2}-h_{RS_0}$.}\label{tab1}
\setlength{\tabcolsep}{18pt}
\renewcommand{\arraystretch}{1}
\begin{tabular}{p{2cm}p{6cm}p{5.5cm}}
\toprule
\textbf{Fixed points} & \textbf{Existence conditions} & \textbf{Stability conditions} \\
\midrule
$(0,0)$ & $h_{CR} \geq -h_{CC} + h_{CS_0} - h_{CS_1}$, $h_{RC} \leq -h_{RR} + h_{RS_0} - h_{RS_1}$ & Strict inequalities hold \\
$(1,1)$ & $h_{CR} \geq -h_{CC} - h_{CS_0} + h_{CS_1}$, $h_{RC} \leq -h_{RR} - h_{RS_0} + h_{RS_1}$ & Strict inequalities hold \\
$(0,1)$ & $h_{CR} \leq h_{CC} - h_{CS_0} + h_{CS_1}$, $h_{RC} \geq h_{RR} + h_{RS_0} - h_{RS_1}$ & Strict inequalities hold \\
$(1,0)$ & $h_{CR} \leq h_{CC} + h_{CS_0} - h_{CS_1}$, $h_{RC} \geq h_{RR} - h_{RS_0} + h_{RS_1}$ & Strict inequalities hold \\
$(0, \frac{x_r^*}{h_{RR}})$ & $-h_{RR} + h_{RS_0} - h_{RS_1} < h_{RC} < h_{RR} + h_{RS_0} - h_{RS_1}$, $\frac{1-2h_{RS_0}}{h_{RR}} \leq \frac{1-2h_{CS_0}}{h_{CR}}$ & Strict inequalities hold \\
$(\frac{x_c^*}{h_{CC}}, 0)$ & $-h_{CC} + h_{CS_0} - h_{CS_1} < h_{CR} < h_{CC} + h_{CS_0} - h_{CS_1}$, $\frac{1-2h_{RS_0}}{h_{RC}} \leq \frac{1-2h_{CS_0}}{h_{CC}}$ & Always unstable \\
$(1, \frac{x_r^*-h_{RC}}{h_{RR}})$ & $-h_{RR} - h_{RS_0} + h_{RS_1} < h_{RC} < h_{RR} - h_{RS_0} + h_{RS_1}$, $\frac{1-2h_{RS_1}}{h_{RR}} \leq \frac{1-2h_{CS_1}}{h_{CR}}$ & Strict inequalities hold \\
$(\frac{x_c^*-h_{CR}}{h_{CC}}, 1)$ & $-h_{CC} - h_{CS_0} + h_{CS_1} < h_{CR} < h_{CC} - h_{CS_0} + h_{CS_1}$, $\frac{1-2h_{RS_1}}{h_{RC}} \leq \frac{1-2h_{CS_1}}{h_{CC}}$ & Always unstable \\
$(\bar{x}_c, \bar{x}_r)$ & $h_{CC} h_{RR} \neq h_{CR} h_{RC}$, $0 < \bar{x}_c < 1$, $0 < \bar{x}_r < 1$ & $h_{CR} h_{RC}>h_{CC} h_{RR}$, $h_{CC}<h_{RR}$ \\
\bottomrule
\end{tabular}
\end{table}

\begin{theorem}\label{th4}
(i) If $h_{CR} h_{RC} < h_{CC} h_{RR}$, then ODE (\ref{eq2}) has stable boundary fixed points. (ii) If $h_{CR} h_{RC} > h_{CC} h_{RR}$, $h_{CC} < h_{RR}$, $0 < \bar{x}_c < 1$ and $0 < \bar{x}_r < 1$, then ODE (\ref{eq2}) has a stable interior fixed point.
(iii) If $h_{CR} h_{RC} > h_{CC} h_{RR}$, $h_{CC} > h_{RR}$, $0 < \bar{x}_c < 1$ and $0 < \bar{x}_r < 1$, then a limit cycle emerges.
\end{theorem}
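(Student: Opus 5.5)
The proof will rest on rewriting ODE (\ref{eq2}) in terms of two payoff-difference functions. Using Eq.~(\ref{eq1}), I set
$$
\Delta_c(x_c,x_r)=v_{C0}-v_{C1}=h_{CC}(2x_c-1)+h_{CR}(2x_r-1)+h_{CS_0}-h_{CS_1},
$$
$$
\Delta_r(x_c,x_r)=v_{R0}-v_{R1}=h_{RR}(1-2x_r)+h_{RC}(1-2x_c)+h_{RS_1}-h_{RS_0}.
$$
Since $\phi(v)>0$ only for $v<0$, ODE (\ref{eq2}) becomes $\dot x_c=(1-x_c)\phi(-\Delta_c)$ when $\Delta_c>0$ and $\dot x_c=-x_c\phi(\Delta_c)$ when $\Delta_c<0$, and analogously for $x_r$ with $\Delta_r$. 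So the sign of $\dot x_c$ equals the sign of $\Delta_c$, and the sign of $\dot x_r$ equals the sign of $\Delta_r$. The two nullclines $\Delta_c=0$ and $\Delta_r=0$ are straight lines, and they cut the invariant square $[0,1]^2$ into at most four monotone regions. The interior fixed point $(\bar x_c,\bar x_r)$ of Table \ref{tab1} is their intersection, which is unique exactly when $h_{CC}h_{RR}\neq h_{CR}h_{RC}$. The structural facts I will use are that $\partial\Delta_c/\partial x_c=2h_{CC}>0$ (conformists give positive feedback) and $\partial\Delta_r/\partial x_r=-2h_{RR}<0$ (rebels give negative feedback).

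\textbf{Local analysis at the interior point.} I will assume $\phi$ is continuous with $\phi(0)=0$, and smooth its kink at $0$ by a linear profile $\phi(v)\approx-\kappa v$ for small $v<0$. Near $(\bar x_c,\bar x_r)$ only one branch is active in each equation, and both give $\dot x_c\approx\kappa'\Delta_c$ and $\dot x_r\approx\kappa''\Delta_r$ with positive coefficients (taking equal rates for simplicity). The Jacobian is then proportional to
$$
J=\begin{pmatrix}2h_{CC}&2h_{CR}\\-2h_{RC}&-2h_{RR}\end{pmatrix},
$$
with $\det J=4(h_{CR}h_{RC}-h_{CC}h_{RR})$ and $\operatorname{tr}J=2(h_{CC}-h_{RR})$. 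This gives the following.
\begin{itemize}
\item If $h_{CR}h_{RC}<h_{CC}h_{RR}$, the interior point (if any) is a saddle.
\item If $h_{CR}h_{RC}>h_{CC}h_{RR}$ and $h_{CC}<h_{RR}$, it is a stable node or focus. This proves (ii) once $0<\bar x_c,\bar x_r<1$.
\item If $h_{CR}h_{RC}>h_{CC}h_{RR}$ and $h_{CC}>h_{RR}$, it is a repeller.
\end{itemize}
If the authors' $\phi$ is genuinely discontinuous at $0$, I would instead use a Filippov sliding argument. The line $\Delta_r=0$ attracts the $x_r$-direction, so trajectories slide on it. Along this line, $\Delta_c$ restricted to the line has slope $2(h_{CC}-h_{CR}h_{RC}/h_{RR})$, whose sign is that of $-\det J$. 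This gives the same conclusions.

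\textbf{Part (i).} Here I will show the flow cannot rotate around the interior point, so every trajectory leaves toward $\partial[0,1]^2$. Then I check, using the existence and stability columns of Table \ref{tab1}, that the boundary edges contain a stable fixed point. On the edges $x_c\in\{0,1\}$ the $x_r$-dynamics is one-dimensional with decreasing $\Delta_r$, so it has a globally attracting point on that edge. This is either a corner or one of $(0,x_r^*/h_{RR})$ and $(1,(x_r^*-h_{RC})/h_{RR})$. Its transversal stability reduces to the sign of $\Delta_c$ there. I expect a case check to show that $h_{CR}h_{RC}<h_{CC}h_{RR}$ forces at least one of the strict inequalities in the table, for otherwise the two nullclines would cross with the orientation of $\det J>0$. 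To exclude periodic orbits, I would apply Dulac's criterion on each monotone region. A cleaner alternative is to note that when $\det J<0$ the region map is ``cross-free'' (a saddle at the centre), so a closed orbit would have to encircle a point of index $-1$, which is impossible.

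\textbf{Part (iii), the main obstacle.} I plan to apply Poincaré--Bendixson on an annulus around the repelling interior point. The inner boundary is a small circle, which trajectories leave because the point is a repeller. The outer boundary must be built near $\partial[0,1]^2$. The difficulty is to show that the boundary fixed points are not attractors under the hypotheses of (iii) with $0<\bar x_c,\bar x_r<1$. I will first try to prove that when $\det J>0$ and the intersection is interior, each edge-candidate from Table \ref{tab1} violates its strict stability inequality. The mechanism is that conformists push $x_c$ away from the $\Delta_c=0$ line, while the crossing orientation makes rebels rotate the flow. Then every boundary point flows inward or along the boundary toward an unstable point, so a slightly shrunken square is a trapping region with no fixed points besides the repeller, and a limit cycle exists. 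If some boundary saddle survives, I would instead use the cyclic sign pattern of $(\Delta_c,\Delta_r)$ across the four regions to construct a Poincaré return map on a segment of the nullcline $\Delta_r=0$, and show by continuity that it has a fixed point.
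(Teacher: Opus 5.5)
\textbf{Part (i).} Your edge-attractor route is sound and cleaner than the paper's Poincar\'e--Bendixson argument. Your expected case check also closes easily. Let $g(s)$ be the point of $\{\Delta_r=0\}$ on the line $x_c=s$, clipped to $[0,1]$. Then $F(s)=\Delta_c(s,g(s))$ has slope $2(h_{CC}-h_{CR}h_{RC}/h_{RR})>0$ where $g$ is unclipped and $2h_{CC}>0$ where it is clipped. Hence $F(0)<0$ or $F(1)>0$, which gives a stable fixed point on $x_c=0$ or $x_c=1$ without excluding cycles. The genuine gaps are in (ii) and (iii).

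\textbf{The interior linearization.} It is not true that only one branch is active near $(\bar x_c,\bar x_r)$, since $\Delta_c$ and $\Delta_r$ both change sign there. With $\kappa=-\phi'(0^-)$ you get $\dot x_c\approx\kappa(1-\bar x_c)\Delta_c$ on $\{\Delta_c>0\}$ but $\dot x_c\approx\kappa\bar x_c\Delta_c$ on $\{\Delta_c<0\}$, and likewise with $1-\bar x_r$ and $\bar x_r$ for $x_r$. So the local system is conewise linear, $\dot y=\kappa DMy$, with $M=2\bigl(\begin{smallmatrix}h_{CC}&h_{CR}\\-h_{RC}&-h_{RR}\end{smallmatrix}\bigr)$ and a diagonal $D$ that switches across the nullclines. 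Taking equal rates is legitimate only when $\bar x_c=\bar x_r=\tfrac12$; otherwise the trace test is wrong.

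For a concrete case, take $h_{CC}=0.19$, $h_{CR}=h_{RC}=0.4$, $h_{RR}=0.2$, $(h_{CS_0},h_{CS_1})=(0.045,0.365)$ and $(h_{RS_0},h_{RS_1})=(0.12,0.28)$. All hypotheses of (ii) hold, with $(\bar x_c,\bar x_r)=(0.5,0.9)$. For $\phi(v)=\max\{-v,0\}$ the flow is linear on each half-plane $\{\Delta_r\gtrless0\}$, with matrices $\operatorname{diag}(0.5,0.1)M$ and $\operatorname{diag}(0.5,0.9)M$. A half-turn of a linear focus with eigenvalues $\mu\pm i\nu$ scales the distance to $\bar x$ along $\{\Delta_r=0\}$ by $e^{\pi\mu/\nu}$. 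The return multiplier is therefore $e^{\pi(0.075/0.137)}e^{-\pi(0.085/0.461)}\approx 3.1>1$, so the point is a repelling focus.

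The mirror case $h_{CC}=0.21$, $h_{CR}=h_{RC}=0.4$, $h_{RR}=0.2$, $(h_{CS_0},h_{CS_1})=(0.111,0.279)$, $(h_{RS_0},h_{RS_1})=(0.04,0.36)$ satisfies the hypotheses of (iii), with $\bar x=(0.9,0.5)$. Its multiplier is about $0.28$, so it is an attracting focus.

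The paper's Jacobian in the Table 1 derivation has the same defect: it adds the two one-sided derivatives. Following it therefore does not prove (ii), which fails as stated in the first example. Nor does it give you the repeller your argument for (iii) starts from.

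Your Filippov fallback does not give the same conclusions either. With a jump in $\phi$ the field near $\bar x$ is piecewise constant, and $h_{CC}-h_{RR}$ plays no role. If sliding on $\{\Delta_r=0\}$ occurs, your own slope computation makes $\bar x$ attracting whenever $\det J>0$.

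\textbf{The global step in (iii).} Even where the equal-rate linearization is exact, your first plan is false: it is not true that every boundary candidate violates its strict stability inequality. Take $h_{CC}=0.4$, $h_{CR}=0.3$, $h_{RR}=0.1$, $h_{RC}=0.6$, and set all four stubborn indices equal to $0.15$. Then $h_{CR}h_{RC}>h_{CC}h_{RR}$, $h_{CC}>h_{RR}$, and $\bar x=(\tfrac12,\tfrac12)$ is a genuine repelling focus. Yet $\Delta_c(0,1)=-0.1<0$ and $\Delta_r(0,1)=0.5>0$, so $(0,1)$ is strictly stable, and by symmetry so is $(1,0)$. The boundary saddles $(0.875,0)$ and $(0.125,1)$ are also present.

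Consequently no shrunken square is a trapping region. Your return map on $\{\Delta_r=0\}$ is defined only on the part of the segment not captured by these sinks, and continuity alone does not give it a fixed point. What is missing is a positively invariant annulus around $\bar x$ that avoids these basins. For example, you would need to show that the stable separatrices of the boundary saddles do not emanate from the focus. That is a global statement, and neither your proposal nor the paper's one-line argument (``unstable and rotating, hence a limit cycle'') supplies it. A repelling rotating focus does not by itself force a limit cycle when the square contains other attractors.
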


It is interesting to note that while Theorem \ref{th2} establishes the existence of a PNE in the absence of CR edges, Theorem \ref{th4} provides a complementary analysis in the presence of CR edges. Since $h_{CC}h_{RR} > h_{CR}h_{RC}$ implies $K_{CR} < 2\sqrt{K_{CC}K_{RR}}$, Theorem \ref{th4} and Table \ref{tab1} suggest that trajectories of ODE (\ref{eq2}) converge to boundary fixed points where all conformists coordinate on the same strategy when there are few CR edges. In contrast, trajectories of ODE (\ref{eq2}) may not converge and periodic oscillations are likely to emerge when there are more CR edges. In this regard, 
Corollary \ref{co2}, which follows directly from Theorem \ref{th4}, further shows that a larger number of CR edges is a necessary condition for the existence of limit cycles. 

\begin{corollary}\label{co2}
If ODE (\ref{eq2}) has limit cycles, then $K_{CR} > 2\sqrt{K_{CC}K_{RR}}$. 
\end{corollary}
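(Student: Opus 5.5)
The plan is to argue by contrapositive and reduce to a statement about the homophily indices. Suppose ODE (\ref{eq2}) has a limit cycle. First show $h_{CR}h_{RC} > h_{CC}h_{RR}$. Then convert this inequality into the edge-count bound $K_{CR} > 2\sqrt{K_{CC}K_{RR}}$.

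\textbf{Step 1: no limit cycle when $h_{CR}h_{RC}\le h_{CC}h_{RR}$.} Theorem \ref{th4}(i), together with the proof of the global dynamics underlying Table \ref{tab1}, gives convergence to boundary fixed points in the strict case $h_{CR}h_{RC}<h_{CC}h_{RR}$. I would make this explicit with a Dulac/Bendixson-type argument on the unit square.
\begin{itemize}
\item The vector field of (\ref{eq2}) is piecewise smooth. The switching lines are $v_{C0}=v_{C1}$ and $v_{R0}=v_{R1}$, which by Eq.~(\ref{eq1}) are $h_{CC}x_c+h_{CR}x_r=x_c^*$ and $h_{RR}x_r+h_{RC}x_c=x_r^*$.
\item These lines cut the square into at most four regions. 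In each region, $dx_c/dt$ has constant sign (it is proportional to $1-x_c$ or to $-x_c$), and likewise $dx_r/dt$.
\item A periodic orbit must therefore cycle through all four regions, visiting the sign patterns $(+,+)\to(+,-)\to(-,-)\to(-,+)$ or the reverse.
\item Rotation around the intersection point $(\bar x_c,\bar x_r)$ requires the two lines to cross transversally with the correct orientation. The orientation is determined by the sign of the Jacobian of the linear map $(x_c,x_r)\mapsto (h_{CC}x_c+h_{CR}x_r,\;h_{RC}x_c+h_{RR}x_r)$ composed with the sign reversal for rebels.
\item When $h_{CC}h_{RR}-h_{CR}h_{RC}\ge 0$, the induced rotation has the wrong orientation: the conformist line acts as a repelling separatrix, since conformists move away from it, while the rebel line attracts. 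So no closed orbit can encircle the crossing.
\item The degenerate case of equality gives parallel (or coincident) lines, which cannot enclose a cycle.
\end{itemize}
The upshot is that a limit cycle forces $h_{CR}h_{RC} > h_{CC}h_{RR}$.

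\textbf{Step 2: translate to edge counts.} Let $D_C = 2K_{CC}+\sum_{Z\neq C}K_{CZ}$ and $D_R=2K_{RR}+\sum_{Z\ne R}K_{RZ}$ be the total degrees of conformists and rebels. Then
\[
h_{CR}h_{RC}=\frac{K_{CR}^2}{D_CD_R}, \qquad h_{CC}h_{RR}=\frac{4K_{CC}K_{RR}}{D_CD_R}.
\]
The denominators cancel, so the inequality becomes $K_{CR}^2>4K_{CC}K_{RR}$, that is, $K_{CR}>2\sqrt{K_{CC}K_{RR}}$. (This also requires $D_C, D_R>0$, which holds whenever the ODE is nontrivial.)

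\textbf{Main obstacle.} The main obstacle is Step 1: ruling out cycles rigorously for a discontinuous right-hand side. If Theorem \ref{th4} is read as a complete classification (limit cycles occur only in case (iii)), the corollary is immediate. Otherwise I would first try the sign-pattern rotation argument above. As a fallback, I would use the Filippov sliding-mode analysis on the conformist switching line to show that trajectories cannot cross it in both directions.
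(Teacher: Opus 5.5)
Your proposal is correct and takes essentially the paper's route. The paper derives the corollary directly from Theorem 4(i): its proof excludes limit cycles when $h_{CR}h_{RC}<h_{CC}h_{RR}$ by the same rotation-orientation argument about the isocline crossing, and then uses your identity $h_{CR}h_{RC}-h_{CC}h_{RR}=(K_{CR}^2-4K_{CC}K_{RR})/(D_CD_R)$. Your separate handling of the equality case (parallel or coincident isoclines) is actually needed for the strict inequality, and the paper leaves it implicit.

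One caution on Step 1: the ``conformist line repels, rebel line attracts'' remark holds for every parameter choice, since it only uses $h_{CC},h_{RR}\ge 0$, so it cannot be what excludes cycles. The real mechanism is index-theoretic. The forced crossing directions give only the order $(+,+)\to(+,-)\to(-,-)\to(-,+)$, not its reverse, so the tangent along any closed orbit winds clockwise. That is possible only if these sectors lie in clockwise order around $(\bar x_c,\bar x_r)$, which happens exactly when $h_{CR}h_{RC}>h_{CC}h_{RR}$.
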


\subsection{Estimating equilibrium frequencies of strategies in HP games}

Since the transformation from any HP game to its equivalent CRS game does not alter the number of CC, RR, and CR edges (Theorem \ref{th1}), Theorem \ref{th4} and Corollary \ref{co2} can be extended to HP games. Corollary \ref{co3} shows that in an HP game, a small number of CoACo edges drives the trajectories of the corresponding ODE system of its equivalent CRS game toward boundary fixed points, whereas a large number of such edges facilitates the emergence of periodic fluctuations.

\begin{corollary}\label{co3}
(i) If $K_{CoACo} < 2\sqrt{K_{CoCo}K_{ACoACo}}$,  then the corresponding ODE system of the network HP game has a stable boundary fixed point. (ii) If the corresponding ODE system of the network HP game has limit cycles, then $K_{CoACo} > 2\sqrt{K_{CoCo}K_{ACoACo}}$. 
\end{corollary}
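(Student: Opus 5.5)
The plan is to transfer Theorem \ref{th4} and Corollary \ref{co2} from the equivalent CRS game to the HP game through the construction of Theorem \ref{th1}. The key observation is that this construction only adds stubborn neighbors to coordinators and anti-coordinators. It keeps the one-to-one correspondences $Co \leftrightarrow C$ and $ACo \leftrightarrow R$, and it leaves every original edge in place. Edges among conformists and rebels in the CRS game are therefore exactly the original edges among coordinators and anti-coordinators, so
\[
K_{CC}=K_{CoCo},\qquad K_{RR}=K_{ACoACo},\qquad K_{CR}=K_{CoACo}.
\]
By definition, the ``corresponding ODE system of the network HP game'' is ODE (\ref{eq2}), built from the homophily indices of this equivalent CRS game. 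Both parts then reduce to statements about $K_{CC},K_{RR},K_{CR}$ for that CRS game.

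For part (i), I rewrite the condition $h_{CR}h_{RC}<h_{CC}h_{RR}$ in edge counts. Let $D_C=2K_{CC}+K_{CR}+K_{CS_0}+K_{CS_1}$ be the total degree of conformists, and let $D_R$ be the analogous total for rebels. Then $h_{CC}=2K_{CC}/D_C$, $h_{CR}=K_{CR}/D_C$, $h_{RR}=2K_{RR}/D_R$ and $h_{RC}=K_{CR}/D_R$. The common factor $1/(D_CD_R)$ cancels, so the condition is equivalent to $K_{CR}^2<4K_{CC}K_{RR}$, that is, $K_{CR}<2\sqrt{K_{CC}K_{RR}}$. The degenerate cases where $C$ or $R$ is empty, or isolated, are handled separately, since the corresponding equation is then trivial. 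The hypothesis $K_{CoACo}<2\sqrt{K_{CoCo}K_{ACoACo}}$ thus gives $h_{CR}h_{RC}<h_{CC}h_{RR}$ for the CRS game, and Theorem \ref{th4}(i) yields a stable boundary fixed point.

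For part (ii), I apply Corollary \ref{co2} to the equivalent CRS game. This gives $K_{CR}>2\sqrt{K_{CC}K_{RR}}$, and the edge identities translate it into $K_{CoACo}>2\sqrt{K_{CoCo}K_{ACoACo}}$. Underlying Corollary \ref{co2} is the fact that limit cycles exclude case (i) of Theorem \ref{th4}, and also exclude the degenerate case $h_{CR}h_{RC}=h_{CC}h_{RR}$. In the degenerate case the nullcline lines are parallel, so a Dulac or Bendixson argument on the piecewise-linear field rules out cycles. I take this as given from Corollary \ref{co2}.

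The main obstacle is making the edge-count identity rigorous from the proof of Theorem \ref{th1}. I need to check that the construction never adds edges between existing coordinators or anti-coordinators and that no new conformists or rebels are created. Only the stubborn counts $K_{CS_0}$ and $K_{CS_1}$ change, and these enter only through the denominators $D_C$ and $D_R$, which cancel in the comparison.
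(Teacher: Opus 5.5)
Your proposal is correct and follows the same route as the paper. The construction in Theorem~\ref{th1} only attaches new stubborn leaves, so $K_{CC}=K_{CoCo}$, $K_{RR}=K_{ACoACo}$ and $K_{CR}=K_{CoACo}$. After the common factor $1/(D_CD_R)$ cancels, $h_{CR}h_{RC}<h_{CC}h_{RR}$ becomes $K_{CR}<2\sqrt{K_{CC}K_{RR}}$, and Theorem~\ref{th4}(i) and Corollary~\ref{co2} then transfer directly. The added edges also change $K_{RS_0}$ and $K_{RS_1}$, not only $K_{CS_0}$ and $K_{CS_1}$, but these likewise enter only $D_R$ and cancel.
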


From Theorem \ref{th1}, an HP game usually has fewer stubborn agents than its equivalent CRS game, which leads to very different homophily/heterophily indices. As a result, the approximation method introduced in the above subsection cannot be directly applied to characterize the effect of network structure on the strategy evolution in an HP game.

Here we propose a simplified method that transforms the HP game into a CRS game without altering the original network. For an HP game $G=(N,E,A,U)$, we define its simplified CRS game as $G'=(C, R, S, E, A, V)$, where $C=Co$, $R=ACo$, and $S$ and $E$ are the same as the HP game. In other words, the simplified CRS game of an HP game is obtained by replacing coordinators and anti-coordinators with conformists and rebels, respectively. Then we can calculate homophily/heterophily indices for this simplified CRS game and apply the results in Theorem \ref{th4} and Table \ref{tab1} to predict evolutionary trends and estimate the equilibrium frequencies of strategies in the original HP game. Of course, it has to be kept in mind that as this simplified CRS game has fewer CS and RS edges, homophily indices for conformists and rebels may be overestimated. 

\subsection{Error analysis}

To test the robustness of the simplified method and the deterministic approximation method, we compare the numerical trajectories and fixed points of the HP game, its simplified CRS game, and the corresponding ODE (\ref{eq2}) for four combinations of homophily and heterophily indices (see \textit{Appendix}, sections S2A-S2C for details). In each combination, we plot the trajectories for 3 or 4 initial points (see Figure \ref{fig2}). In the HP game, the payoff matrix for each coordinator or anti-coordinator is generated randomly and independently. Numerical simulations show that the short-run behavior of the HP game could be nicely approximated by the trajectories of the simplified CRS game and ODE (\ref{eq2}). In general, the trajectories of the HP game and the simplified CRS game converge fairly accurately to the fixed points predicted by ODE (\ref{eq2}), and the evolutionary trends of these trajectories match the phase portrait of ODE (\ref{eq2}). Specifically, trajectories converge to PNE where all coordinators adopt the same strategy when there are few CoACo edges (i.e., low $h_{CR}$ and/or $h_{RC}$ in the simplified CRS game), and a greater number of CoACo edges facilitate the emergence of periodic fluctuations. 

Since the derivation of ODE (\ref{eq2}) relies on mean-field approximations and the stochastic approximation, its stable fixed points need not coincide with PNE of the CRS game. The simplified method, which replaces coordinators and anti-coordinators with conformists and coordinators, also introduces errors. We further quantitatively evaluate these discrepancies through two comparisons: HP games versus their simplified CRS games, and CRS games versus ODE (\ref{eq2}). In the HP \& CRS comparison, we fix the mean threshold of coordinators and anti-coordinators at $\frac{1}{2}$ and vary its variance. Simulation results across diverse networks show that estimation error increases with the threshold variance. In the CRS \& ODE comparison, error grows with the number of stable fixed points and  becomes prominent near the ODE’s bifurcation points. A detailed error analysis is provided in \textit{Appendix}, section S2E.

\section{Limited information}\label{sec5}
In order to further generalize our results, we now consider the case that, due to restrictions on  information, people may only be able to observe the actions of some neighbors and choose the best response to the distribution of actions in the sample \citep{oyama2015sampling}. To this end, we define the probability that a player $i$ observes the strategy of a neighboring player $j$ by $q_{ij}$, and the set of $i$'s neighbors by $N_i$. If $q_{ij}=1$ for all $i\in N$ and all $j\in N_i$, then players always observe the strategies of all neighbors, corresponding to a game with complete information. If $0<q_{ij}<1$ for all $i\in N$ and all $j\in N_i$, then players typically observe only the strategies of some neighbors, corresponding to a game with limited information. We note that the limited information setting considered here differs from the classic incomplete information framework in economic theory, wherein such scenarios are typically reformulated as Bayesian games and agents can infer neighbors’ opinion distributions from prior beliefs about system states and past experience.

Suppose that at time $t$, player $i$ observes the strategies of the players in a set $\widetilde{N}_i(G,t)\subseteq N_i$. Then, her (myopic) best response strategy in time $t+1$ can be denoted by $x_i(t+1)=\operatorname{argmax}_{x_i \in A}\left\{\tilde{u}_i\left(x_i, \boldsymbol{x}_{-i}(t)\right)\right\}$, 
where
\begin{equation}\label{eq3}
\tilde{u}_i\big(x_i,\boldsymbol{x}_{-i}(t)\big)=\left\{\begin{array}{cl}
a_i \tilde{n}_i^0(t) + b_i \tilde{n}_i^1(t), & \text { if } x_i = 0 \\
c_i \tilde{n}_i^0(t) + d_i \tilde{n}_i^1(t), & \text { if } x_i = 1
\end{array}\right. 
\end{equation}
is the expected payoff of player $i$ using strategy $x_i$ based on the information obtained in time $t$ and $\tilde{n}_i^a(t) := |\{ j\in \widetilde{N}_i(t): x_j = a\}|$. In addition, if $\widetilde{N}_i(G,t)=\emptyset$ (i.e., player $i$ does not observe any information), then the utility satisfies $\tilde{u}_i(x_i,\boldsymbol{x}_{-i}(t))=0$ for $ \forall x_i\in A$. In this case, player $i$ will not change his/her current strategy. We define a strategy profile $\boldsymbol{x}^*=\left(x_1^*, \ldots, x_n^*\right) \in A^N$ as a limited information PNE (L-PNE) if $\tilde{u}_i\left(\boldsymbol{x}^*\right) \geq \tilde{u}_i\left(x_i, \boldsymbol{x}_{-i}{ }^*\right)$ for all $i \in N$, $\widetilde{N}_i \subseteq$ $N_i$, and $x_i \in A$, where $\tilde{u}_i(\boldsymbol{x})$ is defined in Eq.(\ref{eq3}).

In this section, we assume that the HP game includes all three player types: coordinators, anti-coordinators, and stubborn agents. We next show that limited information can significantly affect the equilibrium structure of an HP game. At an L-PNE, each coordinator adopts the same strategy as all its neighbors, implying that all its neighbors share the same strategy; each anti-coordinator chooses a strategy different from that of every neighbor. Clearly, an L-PNE constitutes a refinement of the PNE under complete information. The next theorem provides necessary and sufficient conditions for the existence of an L-PNE.

\begin{theorem}\label{th5}
A network HP game admits an L-PNE if and only if the following three conditions hold: (i) the game does not have CoACo edges; (ii) for each subnetwork composed of coordinators, all stubborn neighbors adopt the same strategy; (iii) for each subnetwork composed of anti-coordinators ${ACo}^k$, there exists a partition, $\{ACo_0^k, ACo_1^k\}$, such that no two players within the same set are connected by an edge and all stubborn neighbors of players in set $ACo_{0}^k$ (or $ACo_{1}^k$) adopt strategy $1$ (or $0$).
\end{theorem}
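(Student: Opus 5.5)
Our plan is to first reduce the L-PNE condition to a purely local, combinatorial characterization, and then verify that this characterization is equivalent to conditions (i)--(iii).

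\textbf{Step 1: local characterization.} An L-PNE requires that no player can strictly gain for \emph{every} observed subset $\widetilde{N}_i\subseteq N_i$. Taking singletons $\widetilde{N}_i=\{j\}$ shows the following. A coordinator $i$ (with $\delta_i>0$, $\tau_i\in(0,1)$) must copy $x_j$ for every $j\in N_i$: observing a single $0$ gives payoff difference $a_i-c_i$, and observing a single $1$ gives $d_i-b_i$. Both quantities are positive because $\tau_i\in(0,1)$ and $\delta_i>0$. Similarly, an anti-coordinator must differ from every single neighbor. Conversely, suppose every neighbor of a coordinator shares its strategy. Then for any observed subset all observed neighbors play $x_i$, and $x_i$ remains a best response; the empty set gives indifference, so the player does not move. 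The same argument applies to an anti-coordinator all of whose neighbors play the opposite strategy. Stubborn agents always play their dominant strategy, and we treat a stubborn agent as playing its dominant strategy in any equilibrium. Hence $\boldsymbol{x}^*$ is an L-PNE if and only if: every edge incident to a coordinator joins equal strategies, every edge incident to an anti-coordinator joins different strategies, and each stubborn agent in $S_a$ plays $a$. Nonemptiness of neighborhoods is not needed here; isolated players are trivially fine.

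\textbf{Step 2: necessity.} Suppose a CoACo edge $(i,j)$ exists. Then $x_i=x_j$ because $i$ is a coordinator, while $x_i\neq x_j$ because $j$ is an anti-coordinator, a contradiction; this gives (i). On a connected component of the coordinator subnetwork, the equality constraints force a common strategy $a$. Every stubborn neighbor of that component must then play $a$, which gives (ii). On a connected component $ACo^k$ of the anti-coordinator subnetwork, set $ACo^k_b=\{i: x_i^*=b\}$. Adjacent anti-coordinators differ in strategy, so each of the two sets contains no internal edges. A stubborn neighbor of some $i\in ACo_0^k$ must play $1$, and a stubborn neighbor of some $i\in ACo_1^k$ must play $0$; this gives (iii).

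\textbf{Step 3: sufficiency.} Assign each stubborn agent its dominant strategy. For each coordinator component, assign the common strategy of its stubborn neighbors; if the component has no stubborn neighbors, pick either strategy. For each anti-coordinator component, assign $0$ to $ACo_0^k$ and $1$ to $ACo_1^k$. Now check every edge type. Edges between two stubborn agents are unconstrained. Co--Co edges lie inside one component, so both endpoints share the component's strategy. Co--S edges hold by (ii). ACo--ACo edges join different parts of the bipartition. ACo--S edges hold by (iii). Co--ACo edges do not exist by (i). By Step 1 the profile is an L-PNE.

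The main obstacle is Step 1, specifically making the quantifier ``for all $\widetilde{N}_i\subseteq N_i$'' rigorous. We must confirm that $\tau_i\in(0,1)$ together with $\operatorname{sgn}\delta_i$ makes the singleton comparisons strict. We must also fix the convention that a stubborn agent's strategy equals its dominant strategy; a stubborn agent with $\delta_i=0$ and $a_i=c_i$, $b_i=d_i$ would be indifferent, and we would exclude it by convention. Steps 2 and 3 are routine graph arguments.
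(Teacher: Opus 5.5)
Your proposal is correct and follows essentially the paper's route: reduce an L-PNE to edge-wise constraints (coordinators match every neighbor, anti-coordinators differ from every neighbor, stubborn agents play their dominant strategy), then argue component by component, using connectivity for coordinator components and a two-coloring consistent with the stubborn neighbors for anti-coordinator components. The differences are that you derive the local characterization from singleton observation sets, which the paper only asserts, and you prove the coordinator-component part directly instead of invoking Theorem 2 of \cite{cao2024discrete}.
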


The proof of Theorem \ref{th5} is provided in \textit{Appendix}, section S1I. From the proof, if an HP game admits an L-PNE, then it must be unique and is globally stable under the best response dynamics, i.e., the dynamics will converge to the unique L-PNE for any initial strategy distribution. The next corollary shows that if an HP game does not have an L-PNE, then the best response dynamics, which is a finite-state Markov process, will converge to a unique stationary distribution over time (the proof is provided in \textit{Appendix}, section S1J).

\begin{corollary}\label{co4}
In the case of limited information, the asynchronous best response dynamics either converges to an L-PNE (if it exists) or to a stationary distribution. 
\end{corollary}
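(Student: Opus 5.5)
We will treat the limited-information asynchronous best response dynamics as a time-homogeneous Markov chain on the finite state space $A^N$. At each step an active player $i$ is drawn at random and independently samples a set $\widetilde{N}_i\subseteq N_i$, where each neighbor $j$ is included independently with probability $q_{ij}\in(0,1)$. The player then best-responds to $\tilde{u}_i$ as in Eq.~(\ref{eq3}), keeping her strategy in case of ties or when $\widetilde{N}_i=\emptyset$. Stubborn agents are assumed to start at their dominant strategy; if not, they move there after their first activation, which happens almost surely. So we restrict attention to the closed set $\Omega$ of profiles with stubborn agents at their dominant strategies. The plan has two parts. First, we show that the chain on $\Omega$ has exactly one recurrent class. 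Second, we identify that class: it is the singleton containing the L-PNE when Theorem~\ref{th5} guarantees one, and otherwise it is a non-trivial class. Standard finite Markov chain theory then gives convergence to the point mass at the L-PNE in the first case. In the second case it gives convergence to the unique stationary distribution supported on the recurrent class, with convergence taken in the Cesàro sense if the class is periodic. We will try to show aperiodicity directly.

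The key tool is the positivity of single-neighbor samples. Since $0<q_{ij}<1$, with positive probability an active player $i$ observes exactly one neighbor $j$. A coordinator $i$ with $\tau_i\in(0,1)$ then copies $x_j$. An anti-coordinator then adopts $1-x_j$. A stubborn agent stays put. Also, with positive probability $\widetilde{N}_i=\emptyset$, so every state has a self-loop; this gives aperiodicity of every recurrent class.

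For uniqueness of the recurrent class, we will construct, from any state in $\Omega$, a positive-probability path to a common target set. Pick, in each connected component, a stubborn agent $s$; components without stubborn agents are treated separately using a fixed reference node. Then propagate along a BFS tree rooted at $s$, activating nodes in BFS order so that each copies or anti-copies its parent. This drives every non-stubborn node to a state that depends only on the tree and on $x_s$, which is fixed. Hence all states reach one common profile $\boldsymbol{x}^\dagger$. A state reachable from everywhere lies in the unique recurrent class, which proves uniqueness. In a component with no stubborn agent there are two possible target profiles, one for each root value. These are linked because the root itself can flip by copying a child. So uniqueness still holds.

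To identify the recurrent class, suppose an L-PNE $\boldsymbol{x}^*$ exists. By the characterization preceding Theorem~\ref{th5}, every coordinator agrees with all its neighbors and every anti-coordinator disagrees with all of them. Then any sample yields a best response equal to the current strategy, because the observed neighbors all share one action and the threshold rule strictly selects $x_i^*$. So $\boldsymbol{x}^*$ is absorbing. Since the recurrent class is unique, it equals $\{\boldsymbol{x}^*\}$, and the chain is absorbed there almost surely. If no L-PNE exists, then no state is absorbing: at a non-L-PNE state, some coordinator or anti-coordinator $i$ has a neighbor $j$ whose single-neighbor sample makes $i$ switch. Hence the unique recurrent class is non-trivial, and the chain converges to its stationary distribution. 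The main obstacle is the reachability construction. It requires care with components lacking stubborn agents, with nodes whose threshold ties prevent movement, and with the need for the BFS-propagated profile to be reachable regardless of cycles. We handle the cycles by activating each node only after its parent is final, and never revisiting it.
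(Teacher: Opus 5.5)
\textbf{Overall.} Your main line is correct. For the case without an L-PNE it takes a genuinely different route from the paper.

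\textbf{Comparison with the paper.} When an L-PNE exists, the two arguments coincide in substance. The paper also drives the chain into the L-PNE through chains of single-neighbour observations starting at stubborn agents, which is your BFS propagation.

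When no L-PNE exists, the paper splits the failure of the conditions of Theorem~\ref{th5} into four sub-cases: CoACo edges; a coordinator subnetwork with stubborn neighbours of both strategies; a non-bipartite anti-coordinator subnetwork; and a bipartite one with mismatched stubborn neighbours. In each sub-case it tries to show that any single non-stubborn player can be switched to any strategy with all others held fixed, i.e.\ irreducibility, and then invokes the ergodic theorem.

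You replace all of this with one observation that holds whether or not an L-PNE exists:
\begin{enumerate}
\item the BFS profile $\boldsymbol{x}^\dagger$ is reachable from every state, so there is exactly one recurrent class;
\item absorbing states are exactly the L-PNE;
\item empty samples give self-loops, hence aperiodicity.
\end{enumerate}
It is worth stating why single-neighbour samples never produce ties: $\tau_i\in(0,1)$ forces $a_i>c_i$ and $d_i>b_i$ for coordinators, and the reverse inequalities for anti-coordinators.

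Your route needs no case analysis and no appeal to Theorem~\ref{th5}, and it gives uniqueness of the L-PNE for free. It also sidesteps the full-support claim, which is the most delicate part of the paper's argument and is not needed for the corollary. The paper's route, if its case analysis were made airtight, would add that the stationary distribution charges every configuration of the frustrated subnetwork; your argument says nothing about the support.

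\textbf{The step that fails: components without stubborn agents.} At the BFS target, a child across a Co--Co or ACo--ACo tree edge already stands in exactly the relation to the root that the root's own rule demands. Copying (or anti-copying) it therefore returns the root's current value; a flip happens only across a CoACo edge.

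Concretely, a component made only of coordinators has two absorbing states, ``all $0$'' and ``all $1$''. Likewise, a bipartite anti-coordinator component has its two proper $2$-colourings as absorbing states. So there are two recurrent classes and uniqueness genuinely fails. If an L-PNE exists it is not unique, and if some other component is frustrated, the limiting stationary distribution depends on the initial state.

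\textbf{How to fix it.} Drop that sentence and adopt the paper's assumption (``without loss of generality, we assume stubborn agents are connected to each $Co^k$ and $ACo^k$'') that every component contains a stubborn agent.

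Without this assumption, your reasoning still gives absorption in some L-PNE when one exists, and otherwise, by aperiodicity, convergence to some initial-condition-dependent stationary distribution. That is all the literal statement asserts, but it is not uniqueness. For frustrated stubborn-free components a common target can be recovered, for instance by first cycling the endpoints of a CoACo edge through all four configurations to fix the root's value. That needs its own argument, though.
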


Corollary \ref{co4} implies that under limited information, the effect of network structure on the evolutionary outcome can be well evaluated through agent-based simulations (see Figure \ref{fig3}). In addition, the result of Corollary \ref{co4} also holds for the synchronous best response dynamics (see \textit{Appendix}, section S1J). Figure \ref{fig3} shows the influences of information transparency (complete information or $q_{ij}=0.1, 0.5$) and update mode (asynchronous or synchronous) on the evolutionary outcome of CRS games. When the CRS game has a unique L-PNE, both asynchronous and synchronous best response dynamics converge to it under limited information (Figure \ref{fig3}A). In contrast, when the game does not have an L-PNE, the best response dynamics under limited information generally converges to a stationary distribution where the corresponding strategy distribution is close to a stable fixed point of ODE (\ref{eq2}) (Figure \ref{fig3}B-C). Finally, information transparency ($q_{ij}=0.1$ or $0.5$) and update mode seem to have little impact on the mean value of the stationary strategy distribution when the game does not process a PNE (Figure \ref{fig3}C-D).

\begin{figure}[H]
\centering
\includegraphics[width=0.75\linewidth]{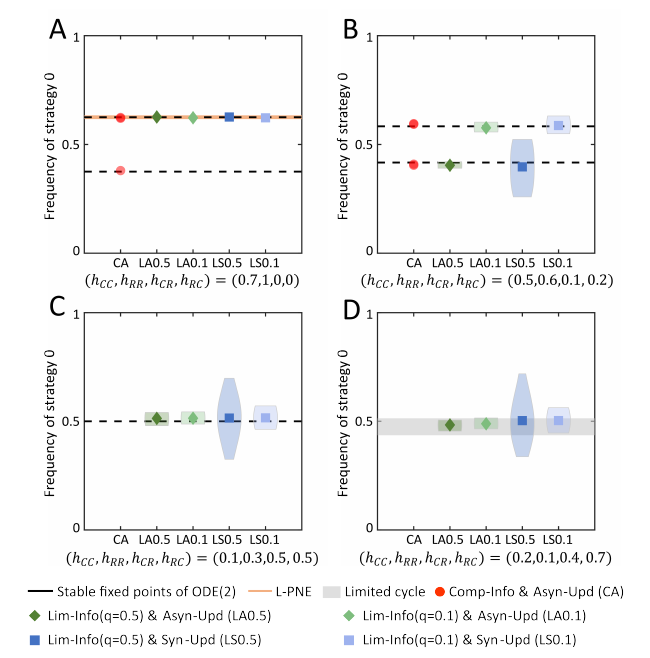}
\caption{\textbf{The influences of information transparency and update mode on the evolutionary outcome.} Similarly as Figure \ref{fig2}, the network size of CRS games is taken as $n=1000$ with $n_C=n_R=n_{S_0}=n_{S_1}=250$ and the average degree is $16$. In addition, we take $h_{CS_0}=0.3$ and $h_{CS_1}=h_{RS_0}=h_{RS_1}=0.1$ in all the subfigures, and we adjust $h_{CC}, h_{RR}, h_{CR}$, and $h_{RC}$. Orange solid line denotes the frequency of strategy $0$ at the L-PNE, black dashed lines denote the frequency of strategy $0$ at stable fixed points of ODE (\ref{eq2}), and gray shaded area denotes the range of strategy frequency at the limit cycle. Comp-Info, Lim-Info, Asyn-Upd, and Syn-Upd represent complete information, limited information (with $q_{ij}=0.1$ or $0.5$), asynchronous update, and synchronous update, respectively. Red dots denote the fixed points of the asynchronous best response dynamics under complete information, which correspond to PNEs (we note that PNE does not exist in C and D). Diamonds and squares denote the mean value of the stationary strategy distribution (SSD) under limited information. In numerical simulations, we ran $10^6$ time steps to ensure convergence to PNE or SSD. Under complete information, PNEs exist when the homogeneity index is relatively high (A and B), and convergence to distinct equilibria occurs for different initial strategy distributions. Under limited information, both synchronous and asynchronous updates yield a unique stationary strategy distribution regardless of the initial strategy distribution, with the distribution plotted using the strategy proportions from the final $10^4$ time steps.}\label{fig3}
\end{figure}

\section{Application: Prisoner’s Dilemma game on networks with preference heterogeneous players}\label{sec6}
As an application, and to show the validity of our framework, we consider a Prisoner’s Dilemma game on networks where the utility function of each player comprises three components: material payoff, altruistic preference, and peer influence. In this game, each player chooses between cooperation ($C$) or defection ($D$). A cooperator pays a cost $1$ for each neighbor, and each neighbor receives a benefit $b>1$. A defector neither incurs costs nor confers benefits. Beyond material payoff, altruistic preferences grant a player $i$ a utility gain equal to $\alpha_i>0$ times the material payoff of each neighbor. Additionally, peer influence contributes a change in utility equal to $\beta>0$ times the weight of the relationship when two connected players adopt the same strategy. Thus, the utility matrix of player $i$ versus player $j$ can be expressed as follows:
\begin{equation}
\bordermatrix{
   & C & D \cr
C & b-1+\alpha_i(b-1)+\beta w_{ij}  & \alpha_i b -1 \cr
D & b-\alpha_i & \beta w_{ij} \cr
}, 
\label{eq4}
\end{equation}
where $w_{ij}$ is the weight of the relationship from $i$ to $j$. In addition, $w_{ij} > 0$ means that the neighbor is a friend and $w_{ij} < 0$ means that the neighbor is an enemy.

Following our approach, we transform this game into a simplified CRS game. Since players have different payoff matrices versus different neighbors, we cannot use the linear-threshold model (which is only based on the number of neighbors with different strategies) to represent their best response strategies. Therefore, we classify players as conformists, rebels, and stubborn agents according to their optimal strategy when all neighbors adopt the same strategy. Specifically, we refer to a player $i$ as a conformist if her optimal strategy is $C$ when all neighbors adopt $C$ and is $D$ when all neighbors adopt $D$. This requires $-\beta \omega_i<(\alpha b - 1)n_i < \beta \omega_i$ with $\omega_i=\sum_{j \in N_i} w_{ij}$. In contrast, a player is referred to as a rebel if her optimal strategy is $D$ when all neighbors adopt $C$ and is $C$ when all neighbors adopt $D$. This requires $\beta \omega_i<(\alpha b - 1)n_i < -\beta \omega_i$. Finally, a player is referred to as a stubborn agent with strategy $C$ (cooperator for short) if $(\alpha b - 1)n_i > |\beta \omega_i|$ and a stubborn agent with strategy $D$  (defector for short) if  $(\alpha b - 1)n_i <- |\beta \omega_i|$. In summary, a cooperator has a larger $\alpha$ and a defector has a smaller $\alpha$. In addition, a conformist (or rebel) has an intermediate range of $\alpha$ (i.e., $\alpha b \approx 1$) and a positive (or negative) $\omega_i$.

We now evaluate the accuracy of the simplified method and the deterministic approximation method by conducting agent-based simulations based on real social networks of 13 schools collected by \cite{Miguel2023}. In their data set, students report their levels of prosociality (see also \cite{Vasco2025} for further details) and positive or negative relationships with their neighbors. In our simulations, the level of prosociality of a player is mapped to the altruism parameter $\alpha_i$, the peer influence parameter is fixed at $\beta = 1/4$ for all players, and the benefit of cooperation is set to $b = 2$. In this manner, we find that three out of 13 networks involve both conformists and rebels. For these three networks, the differences between the cooperation rates in stable fixed points of the best response dynamics based on payoff matrix (\ref{eq4}) and its simplified game are less than $5\%$, and the differences between the simplified game and its deterministic approximation are less than $1 \%$. These results demonstrate that the two methods can effectively capture the joint effect of preference heterogeneity and network structure on group cooperation. In addition, the deterministic approximation method provides a straightforward prediction that the cooperation rate in the three networks is approximately equal to the sum of the proportions of conformists and cooperators (see \textit{Appendix}, section S3A).

Within this network Prisoner’s Dilemma game, stubborn defectors never cooperate, and the theoretical upper bound on the cooperation rate equals the combined proportion of conformists, rebels, and stubborn cooperators. Consequently, cooperation can be enhanced through two mechanisms. First, the upper bound itself can be raised by reducing the prevalence of stubborn defectors. Second, this upper bound (corresponding to the fixed point $(1,1)$ in Table \ref{tab1}) becomes stable when conformists connect more to stubborn cooperators than to stubborn defectors, whereas rebels display the opposite connectivity pattern (i.e., $h_{CS_0}>h_{CS_1}$ and $h_{RS_0}<h_{RS_1}$). Collectively, these findings offer pathways for promoting cooperation among preference heterogeneous players.

\begin{figure}[H]
\centering
\includegraphics[width=0.75\linewidth]{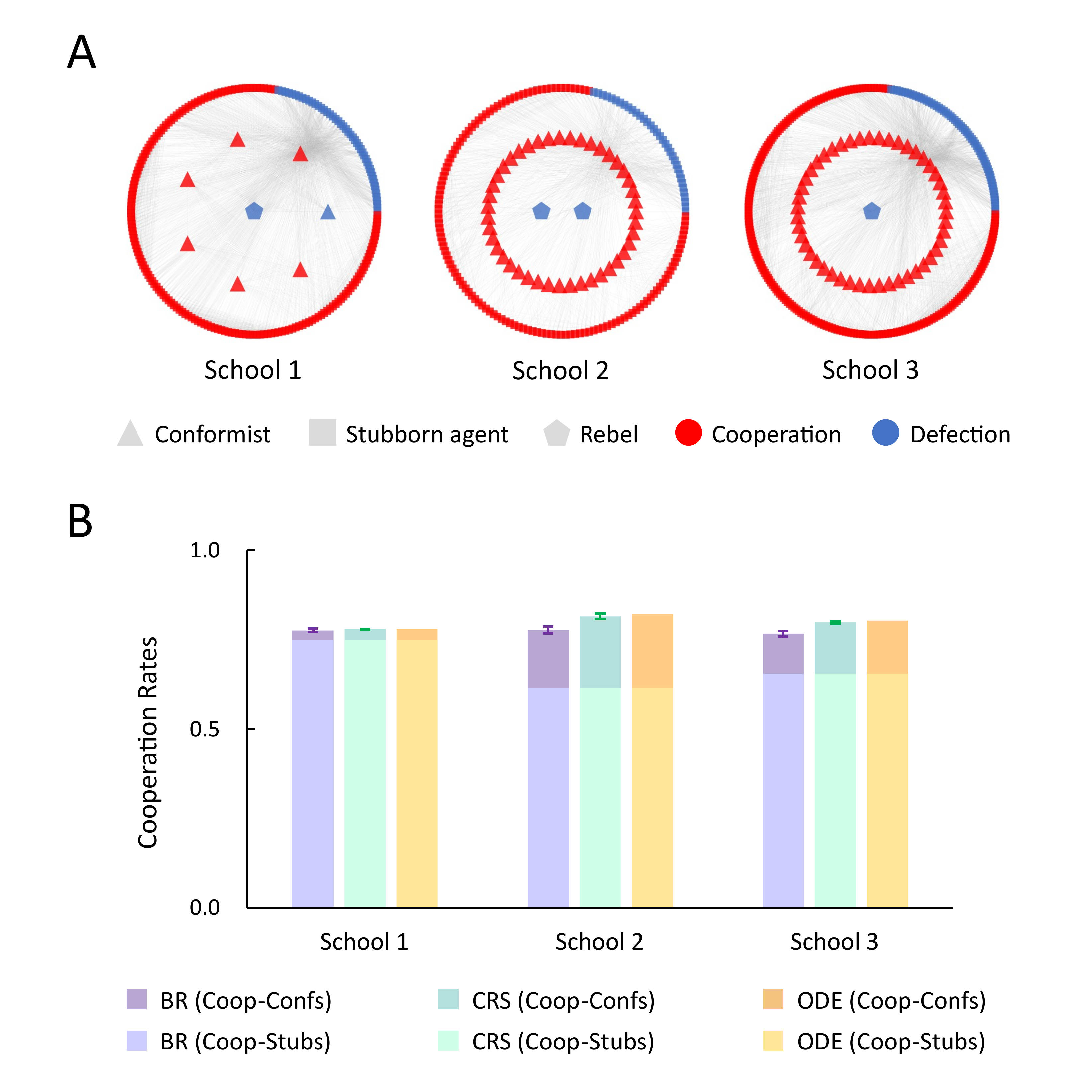}
\caption{\textbf{Prisoner’s Dilemma game on networks with preference heterogeneous players.} The top panel provide illustrations for PNE in the 3 school networks. Stubborn agents, conformists, and rebels are distributed on the outer ring, middle ring, and inner ring, respectively, and their strategies are marked by red and blue. Coop-Confs and Coop-Stubs represent cooperative conformists (dark color) and stubborn cooperators (light color), respectively. The bottom panel compares the cooperation rates at stable fixed points of the best response dynamics based on payoff matrix (\ref{eq4}), its simplified CRS game, and the corresponding ODE (\ref{eq2}) in the 3 networks. 
}\label{fig4}
\end{figure}

\section{Discussion}\label{sec7}

\subsection{Concluding remarks}
Our paper advances the understanding of strategic interactions in network games with general heterogeneous players by establishing three principal contributions. First, we derive sufficient conditions for the existence of a PNE, providing a set of practically verifiable criteria for equilibrium analysis. Second, we introduce a deterministic approximation framework that predicts strategy evolution on arbitrary networks. Finally, we prove that, under limited information, the best response dynamics converges either to a PNE or to a unique stationary strategy distribution. Collectively, these results offer a unified analytical framework for examining equilibrium characterization, evolutionary trajectories, and dynamic predictability in network games with heterogeneous players.

\subsection{Extensions: CRS games with $n$ strategies}\label{sec7.2}


Although the present work focuses on two-strategy settings, several findings can extend to CRS games with $n$ strategies. Theorem \ref{th2}(i) holds for $n$-strategy CRS games, because a PNE is guaranteed in such games without CR edges and the potential function used in its proof remains valid for the $n$-strategy case. Theorem \ref{th2}(ii) can also extend to $n$ strategies, because there exists a strategy in which all conformists have no incentive to deviate, reducing the game to an $n$-strategy RS game, where the potential function of its proof guaranties a PNE.

\begin{corollary}\label{co5}
An $n$-strategy network CRS game admits a PNE if one of the following conditions holds: (i) the game does not have CR edges; (ii) there exists a strategy $a\in A$ such that each conformist has at least $\frac{1}{2}$ of her neighbors in the set $C \cup S_a$.
\end{corollary}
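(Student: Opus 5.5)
The plan is to follow the two-strategy proof of Theorem \ref{th2}, after fixing what an $n$-strategy CRS game means. Take the strategy set $A$ with $|A|=n$. A conformist's payoff is $u_i(\boldsymbol{x})=n_i^{x_i}$, the number of neighbors playing her own action. A rebel's payoff is $u_i(\boldsymbol{x})=n_i-n_i^{x_i}$, so she minimizes the number of neighbors sharing her action. Each stubborn agent in $S_a$ always plays $a$. Best responses are myopic, and ties keep the current strategy.

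\textbf{Step 1: case (i).} With no CR edges, the conformist--rebel graph splits into a CS part and an RS part that interact only through the fixed stubborn agents. It therefore suffices to find a PNE in each part. For the CS part I will use the candidate potential
\[
\Phi_C(\boldsymbol{x})=\sum_{\{i,j\}\in E,\ i,j\in C}\mathbbm{1}[x_i=x_j]+\sum_{i\in C}\sum_{j\in N_i\cap S}\mathbbm{1}[x_i=x_j].
\]
When a conformist $i$ switches from $a$ to $b$, the change in $\Phi_C$ equals $n_i^{b}-n_i^{a}$, which is exactly her payoff change. For the RS part I will use $\Phi_R(\boldsymbol{x})=-\big(\sum_{\{i,j\}\in E,\ i,j\in R}\mathbbm{1}[x_i=x_j]+\sum_{i\in R}\sum_{j\in N_i\cap S}\mathbbm{1}[x_i=x_j]\big)$, whose increment under a rebel's switch again equals her payoff gain $n_i^{a}-n_i^{b}$. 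Both are exact potentials on a finite strategy space. The maximizer of each therefore gives a PNE of its part. Moreover, every strictly improving asynchronous move raises the potential, so best-response dynamics converge. Combining the two maximizers, with stubborn agents at their dominant strategies, gives a PNE of the whole game, since no player's payoff depends on anyone in the other part.

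\textbf{Step 2: case (ii).} Suppose strategy $a$ is such that every conformist $i$ has at least $n_i/2$ neighbors in $C\cup S_a$. Fix all conformists at $a$. Then $n_i^{a}\ge n_i/2$ whatever the rebels do. For any $b\neq a$, the neighbors playing $b$ must lie outside $C\cup S_a$, so $n_i^{b}\le n_i-n_i^{a}\le n_i/2\le n_i^{a}$. Switching to $b$ is thus never strictly profitable, and by the tie rule the conformist stays at $a$. Treating conformists as stubborn agents in $S_a$ leaves an RS game. Step 1's $\Phi_R$ gives it a PNE $\boldsymbol{x}_R^*$. Profile $(a,\dots,a,\boldsymbol{x}_R^*)$ with stubborn agents at their dominant strategies is then a PNE of the original game.

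\textbf{Main obstacle.} The step needing most care is the conformist inequality in Step 2 for $n>2$. In the binary case the count is just $n_i^0+n_i^1=n_i$, but with many strategies I must check that the bound $n_i^b\le n_i/2$ holds for each alternative $b$ separately. This needs the argument above, where neighbors playing $b\ne a$ are disjoint from the at-least-half set. It is also worth noting why case (iii) of Theorem \ref{th2} is dropped. With $n>2$ strategies, a two-set partition of rebels no longer pins every rebel's best response, so no analogous reduction is available.
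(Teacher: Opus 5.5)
Your proof is correct and follows the paper's own argument. Case (i) uses exact potentials; your edge-agreement potential for the RS part is the paper's disagreement potential up to sign and an additive constant. Case (ii) fixes all conformists at $a$, checks $n_i^b\le n_i-n_i^a\le n_i^a$ for every $b\neq a$, and reduces the game to an RS game. The only difference is that you write the CS potential explicitly, where the paper defers to an earlier result on CS games, so your $n$-strategy extension is self-contained.
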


In addition, Theorem \ref{th3} also holds for the $n$-strategy cases. In fact, we consider a local structure that consists of $n+2$ players: a degree-$(n+1)$ rebel with one degree-1 conformist neighbor and $n$ stubborn neighbors with distinct dominant strategies. If this local structure exists in a CRS game, then the game does not have a PNE. It is easy to verify that as the size of a CRS game increases, the probability that no such local structure exists tends to zero in a sparse random graph.

\begin{corollary}\label{co6}
Consider an $n$-strategy network CRS game on a random sparse graph with $n$ nodes and average degree $d$. If conformists, rebels, stubborn agents with dominant strategy $0$, and stubborn agents with dominant strategy $1$ are randomly distributed on the network, then the game almost surely does not admit a PNE as $n \to \infty$.
\end{corollary}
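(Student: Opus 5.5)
The plan is to reduce the statement to a purely local obstruction, exactly as in the proof of Theorem \ref{th3}, but with a structure that works for any number of strategies. The structure must use only the player types allowed in the statement: conformists, rebels, $S_0$ and $S_1$. To avoid the clash with the number of nodes $n$, I write $m\ge 2$ for the number of strategies. The structure I would use is an \emph{isolated conformist--rebel edge}: a connected component consisting of one conformist $u$ and one rebel $v$ joined by an edge, each of degree $1$. Its advantage over a structure built from stubborn neighbours is that no unused strategy can rescue the rebel.

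\textbf{Step 1 (the local structure kills every PNE).} Take any profile $\boldsymbol{x}$ and look at the pair $(u,v)$.
\begin{itemize}
\item If $x_u\neq x_v$, then $u$ earns $n_u^{x_u}=0$. Switching to $x_v$ gives her $1$, so $u$ has a profitable deviation.
\item If $x_u=x_v$, then $v$ earns $n_v-n_v^{x_v}=1-1=0$. Switching to any of the other $m-1\ge 1$ strategies gives her $1$, so $v$ has a profitable deviation.
\end{itemize}
Because the component contains no other players, this holds for every profile. Hence any CRS game containing such a component, with any $m\ge 2$, has no PNE. This mirrors the ``critical local structure'' remark after Theorem \ref{th3}, with $k_u=k_v=0$.

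\textbf{Step 2 (such components appear with high probability).} I model the random sparse graph as $G(n,d/n)$, with types assigned i.i.d.\ independently of the graph. The type probabilities $p_C,p_R,p_{S_0},p_{S_1}$ are fixed, with $p_C,p_R>0$.
\begin{itemize}
\item Let $Y$ be the number of isolated edges. Then $\mathbb{E}Y=\binom n2\frac dn(1-\frac dn)^{2(n-2)}\sim \frac{nd}{2}e^{-2d}$, which grows linearly in $n$.
\item A second-moment computation gives $\mathrm{Var}(Y)=O(n)$. Two distinct isolated edges must be vertex-disjoint, and their joint probability factorises up to a factor $1+O(1/n)$.
\item By Chebyshev's inequality, $Y\ge \frac{nd}{4}e^{-2d}$ with probability $1-o(1)$.
\item Conditional on the graph, each isolated edge is a CR pair independently with probability $2p_Cp_R>0$, because distinct isolated edges have disjoint endpoints.
\item So the probability that no isolated edge is CR is at most $o(1)+(1-2p_Cp_R)^{cn}\to 0$, where $c=\frac d4 e^{-2d}$.
\end{itemize}
Combining this with Step 1 shows that the game has no PNE with probability tending to $1$. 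For a configuration-model version with Poisson degrees, I would argue the same way, counting degree-1 pairs matched to each other.

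\textbf{Main obstacle.} The only genuinely technical step is the variance bound in Step 2. I would handle it by writing $Y=\sum_{e}I_e$ and splitting pairs $(e,f)$ into disjoint and intersecting pairs. Intersecting pairs contribute $0$ to $\mathbb{E}[I_eI_f]$, since two isolated edges cannot share a vertex. For disjoint pairs, $\mathbb{E}[I_eI_f]=\mathbb{E}I_e\,\mathbb{E}I_f\,(1-\frac dn)^{-4}$, so these contribute $O(n)$ to the variance. If preferred, a Poisson approximation via the Stein--Chen method would give the same conclusion.
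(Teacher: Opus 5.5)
Your proof is correct, and it differs from the paper's in both the obstruction and the probabilistic step.

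The paper (Section~\ref{sec7.2}) reuses the template of Theorem~\ref{th3} with a larger gadget. It takes a rebel of degree $m+1$ adjacent to a degree-$1$ conformist and to $m$ stubborn agents with pairwise distinct dominant strategies. The rebel's best replies are then exactly the strategies other than the conformist's, so the two cycle forever. The probability estimate is left as ``easy to verify'' by analogy with the computation $(1-\kappa)^{\mathcal{N}_3}\to 0$ in Theorem~\ref{th3}. That computation treats the gadget events at different rebels as independent, and it uses the expected number of degree-$3$ rebels as if it were the realized count.

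You instead take the $k_u=k_v=0$ case of the paper's critical local structure: an isolated CR edge, which is a copy of matching pennies and needs no stubborn agents. This buys you three things.
\begin{itemize}
\item Your gadget uses only types the hypothesis actually provides. The paper's gadget needs stubborn agents of all $m$ dominant strategies, but only $S_0$ and $S_1$ are assumed, so for $m>2$ it cannot occur as the statement is written. Under the literal reading in which the number of strategies equals the number of nodes $n$, a rebel of degree $n+1$ is impossible.
\item Your Step~1 is uniform in $m\ge 2$, even when $m$ grows with $n$, and it requires only $p_C,p_R>0$.
\item Your Step~2 is a genuine proof rather than a heuristic. Chebyshev controls the realized number of isolated edges, and vertex-disjointness gives exact conditional independence of the type labels.
\end{itemize}

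The paper's gadget is a valid obstruction whenever stubborn agents of every dominant strategy are present. It mainly offers continuity with Theorem~\ref{th3} and illustrates how stubborn neighbors can force cycling. It is no more frequent than yours: both occur with density of order $\mathrm{poly}(d)\,e^{-2d}$ per node. So the two arguments cover the same regime of bounded average degree.
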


Finally, Theorem \ref{th5} can also extend to the $n$-strategy case, where conditions (i) and (ii) are identical to those in the $2$-strategy case, and condition (iii) is revised to state that the maximum degree of anti-coordinator subnetworks is at most $n-1$. This is because anti-coordinators adopt distinct strategies from all their neighbors at an L-PNE, and the existence condition thus follows from graph coloring arguments.

\begin{corollary}\label{co7}
A $n$-strategy network HP game admits an L-PNE if the following three conditions hold: (i) the game does not have CoACo edges; (ii) for each subnetwork composed of coordinators, all stubborn neighbors adopt the same strategy; (iii) for each subnetwork composed of anti-coordinators ${ACo}^k$, the maximum degree is at most $n-1$.
\end{corollary}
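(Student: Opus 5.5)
The plan is to build the L-PNE directly from the three conditions and check it is stable under every observed sample $\widetilde N_i\subseteq N_i$. The key reduction is this. An L-PNE must survive every sample, including singletons $\widetilde N_i=\{j\}$. Such a sample makes player $i$'s comparison depend only on $x_j$. Since $\tau_i\in(0,1)$, a coordinator who observes only $j$ strictly prefers $x_j$, and an anti-coordinator strictly prefers $1-x_j$. So in an $n$-strategy L-PNE:

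1. every coordinator plays the same strategy as each of its neighbors;
2. every anti-coordinator plays a strategy different from each of its neighbors.

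Conversely, suppose a profile satisfies these two properties and stubborn agents play their dominant strategies. Then for any nonempty sample, the observed neighbors all play one strategy for a coordinator (respectively, all avoid its own strategy for an anti-coordinator). Its current action is then a best response, and empty samples give zero utility for every action. So this characterization is necessary and sufficient. I would state this argument for the $n$-strategy payoffs, reading a coordinator as one whose best reply to a unanimous sample is that action, and an anti-coordinator as one for whom any different action beats matching a unanimous sample.

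With this in hand, only sufficiency is needed, and I would construct the profile component by component.

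\textbf{Condition (i).} There are no CoACo edges, so each connected subnetwork of coordinators and each of anti-coordinators can be treated separately. The only external neighbors of each such subnetwork are stubborn agents, whose strategies are fixed.

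\textbf{Coordinator components.} For a coordinator component $Co^k$, condition (ii) says all its stubborn neighbors use a common strategy $a$. Assign $a$ to every member of $Co^k$; if there are no stubborn neighbors, pick any strategy. Then each coordinator agrees with all of its neighbors, both within the component (by connectivity all members play $a$) and the stubborn ones.

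\textbf{Anti-coordinator components.} Condition (iii) bounds the maximum degree of $ACo^k$ by $n-1$. I would assign strategies by a greedy coloring in an arbitrary order. When a player is colored, at most $n-1$ neighbors inside $ACo^k$ are already colored, and these forbid at most $n-1$ of the $n$ strategies, so a color remains.

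The main obstacle is that stubborn neighbors also forbid colors, and a player may have stubborn neighbors while already having $n-1$ anti-coordinator neighbors. Greedy coloring alone then fails. I would handle this by reading condition (iii) as a bound on the degree of each anti-coordinator counting both its anti-coordinator neighbors and the distinct dominant strategies among its stubborn neighbors. Under that reading, each vertex's list of allowed colors, namely $A$ minus its stubborn neighbors' strategies, is strictly longer than the number of its neighbors inside $ACo^k$. Greedy list coloring then succeeds, since at the moment a vertex is colored its remaining list is nonempty.

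If instead the literal statement is intended, I would first try to show that stubborn neighbors of anti-coordinators are implicitly excluded, or absorbed into the degree count, by the paper's $n$-strategy conventions. This is the step I expect to need the most care, and the one where the corollary may need its hypothesis sharpened.

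Finally, I would check that the resulting profile satisfies both properties at every player, and conclude by the characterization that it is an L-PNE.
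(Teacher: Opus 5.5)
Your route is the paper's route. Singleton samples force every coordinator to match each neighbour and every anti-coordinator to differ from each neighbour, and the anti-coordinator components are then coloured using the degree bound $n-1$. The obstacle you flag is real, and no convention removes it: the statement as written is false, and the paper's one-line colouring argument simply ignores stubborn neighbours.

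Already for two strategies, take an anti-coordinator $r$ whose only neighbours are $s_0\in S_0$ and $s_1\in S_1$; attach a coordinator to $s_0$ alone if you want all three types present. Conditions (i)--(iii) hold, since $\{r\}$ is a component of maximum degree $0$. But $\delta_r<0$ and $\tau_r\in(0,1)$ give $c_r>a_r$ and $b_r>d_r$. So the sample $\{s_0\}$ forces $x_r=1$, while the sample $\{s_1\}$ forces $x_r=0$. Hence there is no L-PNE, exactly as the necessity part of Theorem 5(iii) predicts.

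So drop your fallback of showing that stubborn neighbours are implicitly excluded; Section 5 explicitly assumes they are present. The hypothesis has to be strengthened. Your version is that each anti-coordinator's number of anti-coordinator neighbours, plus the number of distinct strategies among its stubborn neighbours, is at most $n-1$. With greedy list colouring, that gives a correct proof. The sharp analogue of Theorem 5(iii) would be that each $ACo^k$ is properly colourable from the lists $A\setminus\{\text{strategies of its stubborn neighbours}\}$.

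There is a second gap, specific to $n\ge 3$: the converse half of your characterisation does not follow from your reading of an anti-coordinator. That reading only says that matching a unanimous sample is worse than every alternative. It does not make every non-matching action a best response, and at your profile an anti-coordinator's samples need not be unanimous.

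For instance, take three strategies and per-edge payoffs $M(t,t)=0$, and $M(s,t)=1$ for $s\neq t$ except $M(1,2)=2$. Then any neighbour of a player on $2$ must play $1$. A triangle of such anti-coordinators has maximum degree $2$ and satisfies (i)--(iii), yet admits no L-PNE. If someone plays $2$, its two neighbours must both play $1$, and they are adjacent. Otherwise only strategies $1$ and $3$ are used, so two adjacent players share a strategy.

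You therefore need to assume a rebel-type payoff, as the paper tacitly does when it says anti-coordinators are satisfied once they differ from all neighbours. That is, the payoff from neighbour $j$ depends only on whether $x_j=x_i$, as for $n_i-n_i^{x_i}$ in the $n$-strategy CRS game. More generally, $M_i(s,t)=m_{i,t}>M_i(t,t)$ for all $s\neq t$ suffices. With that assumption and your strengthened (iii), the argument goes through. Coordinators need nothing extra, since at your profile they only ever see unanimous samples.
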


These extensions indicate that the analytical framework developed in this paper offers robust insights applicable to a broader class of network games beyond binary strategy spaces.

\subsection{Limitations}
However, several limitations remain. Regarding PNE existence, we establish sufficient conditions for network games with heterogeneous players over arbitrary networks, whereas characterizing necessary and sufficient conditions for general networks remains challenging and is left for future investigations on specific network topologies. 
For the short-run behavior of the evolutionary process, we develop the simplified method and the deterministic approximation method to predict the strategy evolution in an HP game. While both methods entail certain inaccuracies, more precise characterizations await further investigation. 
For long-run outcomes, we derive conditions ensuring the existence and uniqueness of the stationary distribution under limited information. Numerical simulations indicate that the resulting strategy distribution approximates a stable fixed point of ODE (\ref{eq2}). However, when ODE (\ref{eq2}) admits multiple stable fixed points, the precise mechanism by which information shapes equilibrium strategy frequencies remains unresolved.
Finally, we note that an HP game cannot be reformulated as a CRS game when there are more than two strategies. Thus, developing an analytical framework for $n$-strategy HP games would be a challenging question for future studies. 
All in all, we believe that our approach will prove very useful to the community of researchers in network games and their applications in economics, sociology, or biology.

\section*{Acknowledgments}
Wenjie Cao acknowledges support from the National Natural Science Foundation of China (No.724B2006) and the China Scholarship Council (No.202406040147). Angel Sánchez acknowledges support from grant PID2022-141802NB-I00 (BASIC) funded by MCIN/AEI/10.13039/501100011033 and by ‘ERDF way of making Europe’, and also from grant MapCDPerNets---Programa Fundamentos de la Fundaci\'on BBVA 2022. Boyu Zhang acknowledges support from the National Natural Science Foundation of China (No.72131003 and No.72573024) and the Beijing Natural Science Foundation (No.Z220001).
\bibliographystyle{elsarticle-harv}
\bibliography{refs}

\newpage

\appendix

\renewcommand{\thefigure}{S\arabic{figure}}
\renewcommand{\thetable}{S\arabic{table}}
\renewcommand{\theequation}{S\arabic{equation}}
\renewcommand{\thesection}{S\arabic{section}}
\renewcommand{\thesubsection}{\Alph{subsection}}

\setcounter{section}{0}
\section{Theoretical analysis}
\subsection{Proof of Theorem 1}

\setcounter{theorem}{0}
\setcounter{corollary}{0}
\setcounter{figure}{0}

\begin{theorem}
For any HP game, we can construct an equivalent CRS game by adding stubborn neighbors to coordinators and anti-coordinators.
\end{theorem}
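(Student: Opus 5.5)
The plan is to build the CRS game node by node from the threshold form of the best response rule. Stubborn agents of the HP game stay stubborn with the same dominant strategy. Each coordinator $i\in Co$ becomes a conformist $i'$, and each anti-coordinator becomes a rebel. Each such player keeps all of her original edges, so CC, RR and CR edges correspond exactly to CoCo, ACoACo and CoACo edges. To each of them we then attach new private stubborn neighbors: $s_i^0$ of them with dominant strategy $0$ and $s_i^1$ with dominant strategy $1$. These new stubborn agents are adjacent only to $i'$. Their numbers are chosen so that the majority rule of $i'$ reproduces the threshold rule of $i$.

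For a coordinator, $\delta_i>0$ and $\tau_i\in(0,1)$, and $i$ plays $0$ if $n_i^0>\tau_i n_i$, plays $1$ if $n_i^0<\tau_i n_i$, and keeps her strategy under equality. In the CRS game, the conformist $i'$ plays $0$ if $n_i^0+s_i^0>n_i^1+s_i^1$, that is, $2n_i^0>n_i+s_i^1-s_i^0$.

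The key step is to choose nonnegative integers with $s_i^1-s_i^0=(2\tau_i-1)n_i$. This works only when $(2\tau_i-1)n_i$ is an integer, which in general it is not, since $\tau_i$ is an arbitrary real. I expect this integrality issue to be the main obstacle. I would resolve it with the observation that $n_i^0$ only takes integer values in $\{0,\dots,n_i\}$. The rule depends only on the position of $\tau_i n_i$ relative to these integers, and there are two cases.

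\emph{Case 1: $\tau_i n_i$ is an integer $m$.} Ties occur at $n_i^0=m$. Set $s_i^1-s_i^0=2m-n_i$, for example by taking $\max(0,\cdot)$ of each sign. The tie in the CRS game occurs exactly at $n_i^0=m$, where the conformist also keeps her strategy.

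\emph{Case 2: $\tau_i n_i\in(m,m+1)$ for some integer $m$.} Ties never occur. Here we need the conformist to play $0$ iff $n_i^0\ge m+1$, with no ties. Choose $s_i^1-s_i^0=2m+1-n_i$. Then the cutoff is $2n_i^0$ versus $2m+1$, which has no ties because of parity, and $2n_i^0>2m+1$ iff $n_i^0\ge m+1$.

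Anti-coordinators ($\delta_i<0$) are handled symmetrically. The rebel $i'$ plays $1$ iff its $0$-neighbors (original plus added) strictly outnumber its $1$-neighbors, which gives the same inequalities with the roles of the strategies reversed. The same choice of $s_i^0,s_i^1$ therefore works, and ties are again matched by the keep-current-strategy convention on both sides.

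Finally we verify the two parts of the equivalence definition. Condition (i) holds by construction. For condition (ii), fix the strategies of the other Co/ACo players equal to those of their counterparts. The original stubborn neighbors play their dominant strategies in both games, and the added stubborn agents are fixed. So $n_i^0$ is identical in both games, and by the case analysis the best responses coincide, including tie behaviour.

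It then follows that PNE correspond: at a CRS equilibrium every stubborn agent plays its dominant strategy, so the restriction to the original players is an HP equilibrium, and conversely. Identical initial states and activation orders give identical asynchronous trajectories by induction on $t$. Activations of the new stubborn agents change nothing, so they can be omitted from the order.
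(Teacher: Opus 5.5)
Your proposal is correct and matches the paper's proof: the paper also splits on whether $\tau_i n_i$ is an integer, and attaches to each coordinator or anti-coordinator either $|n_i-2\tau_i n_i|$ or $|n_i-2K_i-1|$ (with $K_i=\lfloor \tau_i n_i\rfloor$) private stubborn neighbors of the appropriate strategy. This is exactly your choice $s_i^1-s_i^0=2m-n_i$ (resp.\ $2m+1-n_i$) split via $\max(0,\cdot)$; your tie/parity check, the observation that the same counts work for rebels, and the verification of the equivalence conditions spell out what the paper only asserts.
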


\begin{proof}
Let $G = (N, E, A, U)$ denote an arbitrary HP game. For each player $i$ with $\delta_i \neq 0$ and threshold $\tau_i \in (0,1)$, we append a set of new stubborn neighbors to $i$ following the two rules below:

(i) If $\tau_i n_i$ is an integer (i.e., $\tau_i n_i \in \mathbb{N}$), we add $n_i - 2\tau_i n_i$ stubborn neighbors adopting strategy $0$ for players with $\tau_i \in (0, 1/2]$, and $2\tau_i n_i - n_i$ stubborn neighbors adopting strategy $1$ for players with $\tau_i \in (1/2, 1)$.

(ii) If $\tau_i n_i$ is non-integer (i.e., $\tau_i n_i \notin \mathbb{N}$), let $K_i$ be the unique integer satisfying $K_i \in \mathbb{N} \cap [0, n_i)$ and $K_i < \tau_i n_i < K_i + 1$. We add $n_i - 2K_i - 1$ stubborn neighbors with strategy $0$ for players with $K_i \leq (n_i - 1)/2$, and $2K_i + 1 - n_i$ stubborn neighbors with strategy $1$ for players with $K_i > (n_i - 1)/2$.

By applying this procedure to all players in the network $(N, E)$, we obtain a new network $(N', E')$. For any given strategy profile, the best response of player $i$ in the original HP game $G$ is equivalent to adopting the majority strategy (for coordinators) or the minority strategy (for anti-coordinators) among its neighbors in the new game $G'=(N', E', A, V)$. 
\end{proof}

\subsection{Proof of Theorem 2}
\begin{theorem}\label{th2}
A network CRS game admits a PNE if one of the following conditions holds: (i) the game does not have conformist–rebel (CR) edges; 
(ii) there exists a strategy $a\in \{0,1\}$ such that each conformist has at least $\frac{1}{2}$ of her neighbors in the set $C \cup S_a$; (iii) there exists a partition of the rebel set $R$, $\{R_0,R_1\}$, such that each rebel in $R_0$ has at least $\frac{1}{2}$ of her neighbors in the set $R_1 \cup S_1$ and each rebel in $R_1$ has at least $\frac{1}{2}$ of her neighbors from the set $R_0 \cup S_0$. 
\end{theorem}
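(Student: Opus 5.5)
The plan is to use potential functions: first for the building blocks (CS games and RS games), then reduce cases (ii) and (iii) to building blocks by freezing one class of players. Throughout, a PNE is a profile in which no conformist has strictly more neighbors on the other strategy, and no rebel has strictly more neighbors on her own strategy.

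\textbf{Case (i).} With no CR edges, every conformist's payoff depends only on conformists and stubborn agents, and every rebel's payoff only on rebels and stubborn agents. Hence the game splits into a CS game on $C\cup S$ and an RS game on $R\cup S$, which share only the fixed stubborn agents. I will exhibit an exact potential for each. For the CS game I take
\[
\Phi_C(\boldsymbol{x})=\sum_{\{i,j\}\in E,\ i,j\in C}\mathbbm{1}[x_i=x_j]+\sum_{i\in C,\ s\in S,\ \{i,s\}\in E}\mathbbm{1}[x_i=x_s].
\]
A unilateral switch of conformist $i$ changes $\Phi_C$ by exactly $u_i(x_i',\boldsymbol{x}_{-i})-u_i(x_i,\boldsymbol{x}_{-i})$, because each edge at $i$ is counted once. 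Stubborn strategies are fixed, so their zero payoff never matters. For the RS game I use the same function with $\mathbbm{1}[x_i\neq x_j]$ in place of $\mathbbm{1}[x_i=x_j]$. Each potential takes finitely many values, so it has a maximizer, and a maximizer is a PNE of that subgame. Gluing the two maximizers together gives a PNE of the whole game, since no player's payoff involves the other subgame. The same monotonicity shows that asynchronous best response strictly increases $\Phi$ at every actual switch, which gives convergence.

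\textbf{Case (ii).} Set all conformists to $a$. Each conformist then has at least half of her neighbors playing $a$, whatever the rebels do, so $a$ remains a (weak) best response. With ties, under the paper's convention she keeps her strategy and has no strict incentive to deviate, so the equilibrium condition holds. Now treat conformists as stubborn agents with dominant strategy $a$. The residual game is an RS game, so by case (i) it has a profile of rebel strategies in which no rebel wants to deviate. Combined with the conformists at $a$, this profile is a PNE of the original game, because each conformist's non-deviation was shown for every rebel profile.

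\textbf{Case (iii).} Assign $0$ to $R_0$ and $1$ to $R_1$. A rebel in $R_0$ has at least half of her neighbors in $R_1\cup S_1$, all playing $1$, so at most half play $0$. Her payoff under $0$ is therefore at least her payoff under $1$, and she does not strictly prefer to switch. The symmetric argument covers $R_1$. Next freeze the rebels as stubborn agents, with $R_0\subseteq S_0$ and $R_1\subseteq S_1$. The remaining game is a CS game, which has a PNE by case (i). This completes the equilibrium, since the rebels' non-deviation holds for every conformist profile.

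\textbf{Main obstacle.} The delicate step is verifying the potential property, in particular that edges to stubborn agents are counted only once with the correct sign. The ``at least $\frac12$'' tie cases in (ii) and (iii) also need care: equilibrium must be read as ``no strict improvement,'' which matches the tie-keeping rule of the best response dynamics.
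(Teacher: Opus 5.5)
Your proposal is correct and follows the paper's proof essentially step for step. Both split case (i) into CS and RS games and use an exact potential for the RS part that counts disagreeing RR and RS edges; the paper's halved double sum $\phi$ is exactly this function. Cases (ii) and (iii) are reduced the same way, by freezing all conformists at $a$, or the rebels in $R_0$ at $0$ and those in $R_1$ at $1$, as stubborn agents. The only difference is that you write out the CS potential (agreeing CC and CS edges) explicitly, whereas the paper cites Theorem 1 of Cao et al. (2024) for the CS part, so your version is self-contained.
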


\begin{proof}
We verify the existence of a PNE under each condition. 

For condition (i), the game does not have CR edges, then conformists are connected only to other conformists or to stubborn agents, and rebels only to other rebels or to stubborn agents. From Theorem 1 in \cite{cao2024discrete}, any CS game admits at least one PNE. Thus, it suffices to restrict our analysis to the RS game. We define the potential function $\phi:A^N\to\mathbb{R}$ as follows:
$$
\phi(\boldsymbol{x}):=\frac{1}{2}(\sum_{k \in R}\left|\left\{j \in N_k: x_j \neq x_k\right\}\right|+\sum_{k \in R}\left|\left\{j \in N_k \cap S: x_j \neq x_k\right\}\right|).
$$
Since stubborn agents do not change their strategies, we only need to focus on the changes of strategies and utilities of rebels. For $\forall i \in R$, $\forall \boldsymbol{x}_{-i} \in A^N_{-i}$, $\forall x_i, x_i^{\prime} \in A$,
$$
\begin{aligned}
\phi\left(x_i, \boldsymbol{x}_{-i}\right)-\phi\left(x_i^{\prime}, \boldsymbol{x}_{-i}\right) =\left|\left\{j \in N_i: x_j \neq x_i\right\}\right|-\left|\left\{j \in N_i: x_j \neq x_i^{\prime}\right\}\right| =u_i\left(x_i, \boldsymbol{x}_{-i}\right)-u_i\left(x_i', \boldsymbol{x}_{-i}\right).
\end{aligned}
$$
Thus, we have $\phi(x_i,\boldsymbol{x}_{-i})-\phi(x_i^{\prime},\boldsymbol{x}_{-i})=u_i(x_i,\boldsymbol{x}_{-i})-u_i(x_i^{\prime},\boldsymbol{x}_{-i})$, which means the RS game is an exact potential game and admits at least one PNE.

For condition (ii), suppose there exists $a\in\{0,1\}$ such that each conformist has at least half of her neighbors in $C\cup S_a$. In this case, assigning $a$ to all conformists guarantees that no conformist has the incentive to deviate. Once the conformists’ strategies are fixed in this way, the game is reduced to an RS game and admits a PNE by the proof of condition (i).

For condition (iii), we let the rebels in $R_0$ adopt $0$ and those in $R_1$ adopt $1$. In this case, no rebel has the incentive to deviate. Once the rebels’ strategies are fixed in this way, the game is reduced to an CS game and admits a PNE by the proof of condition (i).
\end{proof}

\subsection{Proof of Corollary 1}
\begin{corollary}\label{co1}
A network HP game admits a PNE if one of the following conditions holds: (i) the game does not have coordinator–anti-coordinator (CoACo) edges; (ii) each coordinator $i\in Co$ has at least $\tau_i$ proportion of her neighbors from the set $Co \cup S_0$, or each coordinator has at least $1-\tau_i$ proportion of her neighbors from the set $Co \cup S_1$; (iii) there exists a partition of the anti-coordinator set $ACo$, $\{ACo_0, ACo_1\}$, such that each anti-coordinator $i\in ACo_0$ has at least $1-\tau_i$ proportion of her neighbors from the set $ACo_1 \cup S_1$ and each anti-coordinator $j\in ACo_1$ has at least $\tau_j$ proportion of her neighbors from the set $ACo_0 \cup S_0$.
\end{corollary}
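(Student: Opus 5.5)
The plan is to transfer each condition through the equivalence of Theorem~1 and then invoke the matching case of Theorem~2. Alternatively, one can argue directly in the HP game by freezing one class of players, which avoids integer rounding in the padding construction. Let $G'$ denote the equivalent CRS game built in the proof of Theorem~1. Coordinators map to conformists and anti-coordinators to rebels. Stubborn agents of the HP game remain stubborn, and the added padding nodes are new stubborn agents. Only edges between a player and a new stubborn leaf are added, so no CR edge arises that was not already a CoACo edge. Hence condition (i) follows immediately from Theorem~2(i). Since a PNE of $G'$ restricted to $N$ is a PNE of $G$ by equivalence, the game $G$ admits a PNE.

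For (ii) I would argue directly rather than through the padded counts. Suppose each coordinator $i$ has at least a $\tau_i$ fraction of neighbours in $Co\cup S_0$. Set all coordinators to $0$. Then $n_i^0 \ge \tau_i n_i$ for each coordinator $i$, whatever the anti-coordinators do. With $\delta_i>0$, the threshold rule says $i$ weakly prefers $0$, and ties keep the current strategy. So every coordinator stays put, and the coordinators can be treated as stubborn agents with dominant strategy $0$. The remaining game contains only anti-coordinators and stubborn agents, and it has no CoACo edges. By the argument of (i), applied to the anti-coordinator part via Theorem~1 and Theorem~2(i), or by the potential of \cite{ramazi2016networks}, it has a PNE. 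Combining this PNE with the frozen coordinators gives a PNE of $G$, because coordinators remain best-responding under any anti-coordinator profile. The case with $Co\cup S_1$ and the fraction $1-\tau_i$ is symmetric: then $n_i^1\ge(1-\tau_i)n_i$, i.e.\ $n_i^0\le \tau_i n_i$, so strategy $1$ is weakly preferred.

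For (iii), assign $0$ to $ACo_0$ and $1$ to $ACo_1$. Take $i\in ACo_0$, so $\delta_i<0$. It has at least $(1-\tau_i)n_i$ neighbours in $ACo_1\cup S_1$, all playing $1$, so $n_i^0\le\tau_i n_i$. With $\delta_i<0$ the threshold rule then gives strategy $0$ or indifference, so $i$ keeps $0$. Symmetrically, $j\in ACo_1$ has $n_j^0\ge \tau_j n_j$ and keeps $1$. This holds regardless of the coordinators' strategies, so the anti-coordinators can be frozen as stubborn agents. The residual game has only coordinators and stubborn agents. It has a PNE by the CS-game result of \cite{cao2024discrete}, used in Theorem~2(i) and applied via Theorem~1. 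The combined profile is then a PNE of $G$.

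The main obstacle is the boundary bookkeeping. I need to check that the inequality directions in (ii) and (iii) match the sign conventions of the threshold rule, in particular that $\delta_i<0$ reverses the comparison. I also need the weak inequalities to produce ties that are resolved by ``keep current strategy.'' For a strict Nash notion, indifference still counts as best response, so this is fine. One more detail is that stubborn agents in $S_0$ and $S_1$ are by definition fixed at their dominant strategies in any PNE, so the counts above are valid.
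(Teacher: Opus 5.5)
Your proposal is correct and follows the paper's proof essentially step for step. Condition (i) is transferred through the padding construction of Theorem~1 to Theorem~2(i). Conditions (ii) and (iii) are handled by fixing the coordinators (respectively the anti-coordinators) at a profile from which none of them wants to deviate, which reduces the problem to a game with only anti-coordinators (respectively coordinators) and stubborn agents. Your explicit check of the inequality directions via the sign of $\delta_i$ fills in a step the paper leaves implicit.

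One small wording fix: ties count as best responses under the ordinary (weak) Nash notion, not a strict one, and that weak notion is exactly what your weak inequalities require.
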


\begin{proof}

We verify the existence of a PNE under each condition. 

For condition (i), any HP game can be transformed into its equivalent CRS game by adding stubborn neighbors to coordinators and anti-coordinators from Theorem \ref{th1}. Since the game does not have CoACo edges, its equivalent CRS game also does not have CR edges. Thus, the game admits a PNE from Theorem \ref{th2}. 

For condition (ii), each coordinator $i \in Co$ has at least $\tau_i$ proportion of neighbors in $Co \cup S_0$, or at least $1-\tau_i$ proportion of neighbors in $Co \cup S_1$. In this case, assigning strategy $0$ (resp. strategy $1$) to all coordinators ensures that no conformist has the incentive to deviate. Once the coordinators’ strategies are fixed in this way, the game is reduced to one with only anti-coordinators and stubborn agents, and admits a PNE by the proof of condition (i).

For condition (iii), we let the anti-coordinators in $ACo_0$ adopt $0$ and those in $ACo_1$ adopt $1$. In this case, no anti-coordinator has the incentive to deviate.  Once the anti-coordinators’ strategies are fixed in this way, the game is reduced to one with only coordinators and stubborn agents, and admits a PNE by the proof of condition (i).
\end{proof}

\subsection{Proof of Theorem 3}
\begin{theorem}\label{th3}
Consider a network CRS game on a random sparse graph with $n$ nodes and average degree $d$.
If conformists, rebels, stubborn agents with dominant strategy $0$, and stubborn agents with dominant strategy $1$ are randomly distributed on the network, then the game almost surely does not admit a PNE as $n \to \infty$.
\end{theorem}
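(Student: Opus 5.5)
The plan is to exhibit a finite local obstruction whose presence rules out a PNE, and then to show that such an obstruction occurs with probability tending to one in the random sparse graph. I take the model to be an Erdős–Rényi graph $G(n,d/n)$ with $d$ fixed. Types are assigned independently to vertices with fixed positive probabilities $p_C,p_R,p_{S_0},p_{S_1}$, and the graph is independent of the assignment.

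\textbf{Obstruction.} Call a pair $(u,v)$ a \emph{critical gadget} if:
\begin{itemize}
\item $u\in C$ has degree $1$, and its only neighbor is $v$;
\item $v\in R$ has degree exactly $3$, with neighbors $u$, one agent in $S_0$, and one agent in $S_1$.
\end{itemize}
The stubborn neighbors of $v$ are fixed, so they contribute one $0$ and one $1$. In any PNE, $u$ must copy $v$, since $u$ has a unique neighbor and strictly prefers matching it. Then $v$ faces two neighbors playing $x_v$ and one playing $1-x_v$. Its payoff is $1$ under its current strategy and $2$ under the other, so $v$ strictly prefers to deviate. Hence no profile is a PNE whenever a critical gadget exists.

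\textbf{Counting gadgets.} Let $X_n$ be the number of gadgets that are induced trees. I will restrict to these configurations: $u$ is a leaf, $v$ has exactly the three listed neighbors, and the two stubborn vertices are distinct. The aim is to show $\Pr(X_n>0)\to 1$.

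\emph{First moment.} Choose ordered tuples $(u,v,s_0,s_1)$ of distinct vertices. There are about $n^4$ such tuples. For each, the three edges appear with probability $(d/n)^3$. The degree constraints on $u$ and $v$ hold with probability $(1-d/n)^{2n+O(1)}\to e^{-2d}$. The type assignment has probability $p_Cp_Rp_{S_0}p_{S_1}$. Multiplying,
\[
\mathbb E X_n \sim n\, d^3 e^{-2d}p_Cp_Rp_{S_0}p_{S_1} = \Theta(n)\to\infty .
\]

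\emph{Second moment.} I will show $\operatorname{Var}X_n=o\bigl((\mathbb E X_n)^2\bigr)$ and then apply Chebyshev. Split pairs of tuples into two classes.
\begin{itemize}
\item \emph{Vertex-disjoint tuples.} The only dependence comes from the non-edge constraints between the two gadgets. These involve $O(1)$ vertex pairs, each with edge probability $d/n$. So the covariance is a factor $1+O(1/n)$ relative to the product, contributing $O(n)$ to the variance after summing.
\item \emph{Tuples sharing vertices.} Overlaps force compatibility constraints. Each shared vertex removes a factor $n$ from the count while saving at most one factor of the edge probability. Their total contribution is $O(n)$.
\end{itemize}
Hence $\operatorname{Var}X_n=O(n)=o(n^2)$, and Chebyshev gives $\Pr(X_n=0)\le \operatorname{Var}X_n/(\mathbb E X_n)^2=O(1/n)\to 0$.

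\textbf{Alternative route.} One could instead use the local weak convergence of sparse random graphs to Poisson($d$) Galton–Watson trees. A positive fraction of vertices is the rebel root of this pattern, and concentration of subgraph counts then yields existence with probability tending to one.

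The conclusion "almost surely" is read as probability tending to one. If genuine almost-sure convergence along $n$ is wanted, I would strengthen the bound with an exponential concentration inequality for $X_n$, such as bounded differences on the vertex exposure martingale, to make $\Pr(X_n=0)$ summable.

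\textbf{Main obstacle.} The main difficulty is the bookkeeping in the second-moment step: the overlap cases and the weak dependence from the non-edge (degree) constraints. Everything else is a direct computation.
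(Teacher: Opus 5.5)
Your argument is correct and uses the same obstruction as the paper: a degree-$3$ rebel whose neighbors are a degree-$1$ conformist, one $S_0$ agent and one $S_1$ agent. You reach ``probability tending to one'' by a different and more rigorous route. The paper computes the expected number $\mathcal{N}_3$ of degree-$3$ rebels and a per-rebel success probability $\kappa$, and then asserts $(1-\kappa)^{\mathcal{N}_3}\to 0$. That step treats the rebels as a deterministic number of independent trials. It ignores that different rebels can share stubborn neighbors and that the degree constraints couple them. Your first/second moment argument with Chebyshev is what justifies this approximate independence, and it gives an explicit rate $\Pr(X_n=0)=O(1/n)$; the paper's version buys only brevity.

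Your first-moment constant $n\,d^3e^{-2d}p_Cp_Rp_{S_0}p_{S_1}$ is also the right one. The paper's $\kappa$ uses $de^{-d}$, the unconditional probability of degree $1$. But a neighbor of $v$ has size-biased degree, so it is a leaf with probability $e^{-d}$. This changes only a positive constant and does not affect the conclusion.

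One sentence in your variance step needs repair. Read literally, ``each shared vertex removes a factor $n$ while saving at most one factor of the edge probability'' gives, for $k$ shared vertices, $n^{8-k}(d/n)^{6-k}=O(n^2)$. That is the same order as $(\mathbb{E}X_n)^2$, so Chebyshev would give nothing. The correct accounting is that the common edges of two gadgets lie in a tree, so they form a forest on the $k$ shared vertices. Hence there are at most $k-1$ shared edges, and the contribution is $n^{8-k}(d/n)^{7-k}=O(n)$.

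In your setting the overlap analysis is even simpler. Types fix roles, so a shared vertex occupies the same role in both tuples. A shared leaf $u$ forces $v=v'$. A shared $v$, having degree exactly $3$, forces the whole tuple to coincide. So two distinct gadgets can overlap only in $s_0$ and/or $s_1$, contributing $O(n^7\cdot n^{-6})=O(n)$ and $O(1)$ respectively. With this fix, $\operatorname{Var}X_n=O(n)$ holds as you claim.

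Your aside on genuine almost-sure convergence is not needed, since the paper's statement is meant and argued in the probability-tending-to-one sense. If you did pursue it, a plain bounded-differences bound would not apply directly. One stubborn vertex can belong to many gadgets, so you would need to truncate high-degree vertices first.
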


\begin{proof}
We first identify a simple but critical local structure whose existence precludes the game from admitting a PNE. This local structure consists of 4 players, a degree-3 rebel with a degree-1 conformist neighbor and two stubborn neighbors whose dominant strategies are 0 and 1, respectively (see Figure \ref{FIGS1}). 
In this local structure, the rebel and the conformist will change their choices all the time: the rebel adopts the strategy contrary to the conformist, but then the conformist changes, making the rebel change, and so on. 

\begin{figure}[H]
\centering
\includegraphics[width=0.75\textwidth]{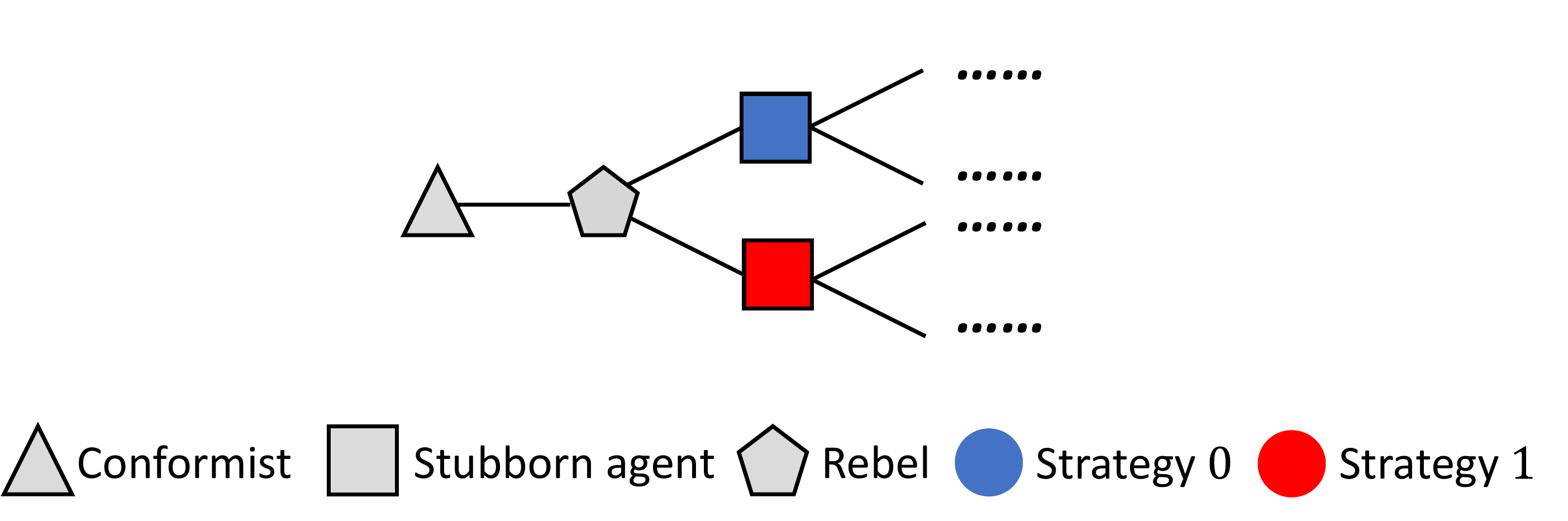}
\caption{The local structure proposed in the proof of Theorem 3 consists of 4 players, a degree-3 rebel with a degree-1 conformist neighbor and two stubborn neighbors whose dominant strategies are 0 and 1, respectively. Conformists, rebels, and stubborn agents are represented by triangles, pentagons, and squares, respectively.}\label{FIGS1}
\end{figure}

We next show that as the size of a CRS game increases, the probability that there is no such local structure tends to zero in a sparse random graph. Let $\alpha_C > 0$, $\alpha_R > 0$, $\alpha_{S_0} > 0$, and $\alpha_{S_1} > 0$ denote the proportions of conformists, rebels, stubborn agents with strategy 0, and stubborn agents with strategy 1, respectively. In a sparse random graph with $n$ nodes and average degree $d$, the degree distribution follows the Poisson distribution as $n\to\infty$. Thus, in a large CRS game, the probability that a player has degree 3 is  $\lim \limits_{n\to \infty}\binom{n-1}{3} (\frac{d}{n})^3 (1-\frac{d}{n})^{n-4}=\frac{d^3 e^{-d}}{6}$ and the expected number of degree-3 rebels is about $\mathcal{N}_3=\frac{d^3 e^{-d}}{6}\alpha_R n$. Furthermore, the probability that a degree-3 rebel forms the above local structure (i.e., it has a conformist neighbor with degree-1, and two stubborn neighbors with dominant strategies 0 and 1, respectively) is about $\kappa= 3! \alpha_{S_0} \alpha_{S_1} (d e^{-d}\alpha_C )$ for large $n$, where $d e^{-d}\alpha_C$ is the probability that one of the three neighbors is a degree-1 conformist.
Thus, the overall probability that none of these degree-3 rebels form the above local structure is $(1-\kappa)^{\mathcal{N}_3} \to 0$ as $n \to \infty.$ This implies that a CRS game on a sparse random graph almost surely does not admit a PNE as $n \to \infty$.
\end{proof}

We further validate our theoretical conclusions through agent-based simulations. Specifically, we generate sparse random networks with node counts ranging from 100 to 1000 at an interval of 100, where the average degrees $d$ are set to 4, 8, and 16, respectively. In each network, the four types of players are evenly distributed, each accounting for $1/4$ of the total players. For each sparse random network, we generate 100 distinct CRS network configurations by randomly assigning player types. The probability of admitting a PNE is then approximated by computing the proportion of these 100 configurations that converge within $n\times10^3$ time steps from three random initializations. Our simulation results show that the probability of PNE existence tends to 0 for $n \geq 800$, which is consistent with Theorem \ref{th3} (see Figure \ref{FIGS2}).

\begin{figure}[H]
\centering
\includegraphics[width=0.7\textwidth]{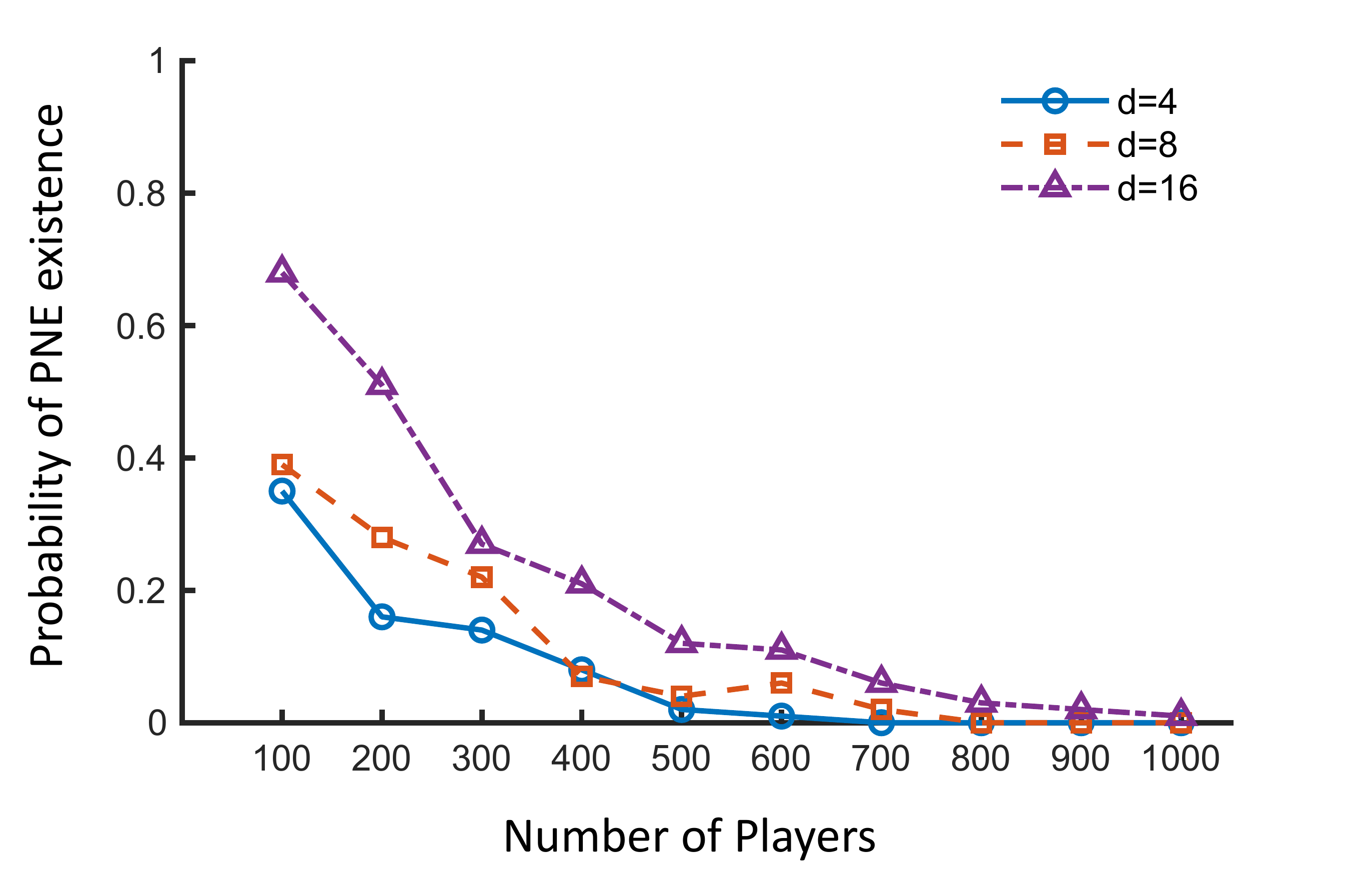}
\caption{Probability of PNE existence as the number of players varies across different average degrees. The blue solid line with circles, orange dashed line with squares, and purple dash-dotted line with triangles correspond to $d=4$, $d=8$, and $d=16$, respectively.}\label{FIGS2}
\end{figure}

\subsection{Derivation of Eq.(\ref{eq1})}
\begin{equation}
\begin{cases}
v_{C0} \approx h_{CC}x_c+h_{CR}x_r+h_{CS_{0}} \\
v_{C1} \approx h_{CC}(1-x_c) + h_{CR}(1-x_r) + h_{CS_{1}} \\
v_{R0} \approx h_{RR}(1-x_r) + h_{RC}(1-x_c) + h_{RS_{1}} \\
v_{R1} \approx h_{RR} x_r + h_{RC} x_c + h_{RS_{0}}
\end{cases}.
\tag{1}
\label{eq1}
\end{equation}

For a CRS game $G=(C,R,S,E,A,V)$, denote the frequencies of conformists, rebels, and stubborn agents adopting strategy 0 in the neighbors of player $i$ by $x_{i,c} = |C_0 \cap N_i|/|C \cap N_i|$, $x_{i,r} = |R_0 \cap N_i|/|R \cap N_i|$, and $x_{i,s} = |S_0 \cap N_i|/|S \cap N_i|$, respectively. Following these notations, the normalized utility of conformist $i\in C_0$ can be expressed as $h_{iC}x_{i,c} + h_{iR}x_{i,r} + h_{iS}x_{i,s}$, where $h_{iY}$ is the proportion of $Y \in \{C,R,S\}$ among $i$’s neighbors. The utilities for other types of players can be derived analogously.

In the derivation of \eqref{eq1}, we apply two types of mean-field approximations \citep[see e.g.][]{zhang2018fashion,PEI2024dynamic}. First, for each $i \in C$, we approximate her individual index $h_{iY}$ with the global index $h_{CY}$; similarly, for each $i \in R$, we approximate her individual index $h_{iY}$ with the global index $h_{RY}$. Second, for each player $i$, we approximate the local strategy distributions among her $C$, $R$, and $S$ neighbors, $x_{i,c}$, $x_{i,r}$, and $x_{i,s}$, with the corresponding global strategy distributions, $x_c$, $x_r$, and $x_s$, respectively. Therefore, $v_{C_0} = \frac{\sum_{i \in C_0} (h_{iC}x_{i,c} + h_{iR}x_{i,r} + h_{iS}x_{i,s})}{|C_0|} \approx h_{CC} \frac{\sum_{i \in C_0} x_{i,c}}{|C_0|} + h_{CR} \frac{\sum_{i \in C_0} x_{i,r}}{|C_0|} + h_{CS} \frac{\sum_{i \in C_0} x_{i,s}}{|C_0|} \approx h_{CC}x_c + h_{CR}x_r + h_{CS_0}$, giving the first formula in Equation (\ref{eq1}). The other three formulas can be similarly derived.

\subsection{Derivation of ODE.(\ref{eq2})}

We apply the stochastic approximation method \citep[see e.g.][]{sandholm2010population,roth2013stochastic} to derive ODE (\ref{eq2}). The asynchronous best response dynamics is a finite-state Markov process with transition probabilities
\begin{equation}
\begin{aligned}
P(n_{C0} \to n_{C0} - 1) &= f_C x_c \phi(v_{C0} - v_{C1}), \\
P(n_{C0} \to n_{C0} + 1) &= f_C (1 - x_c) \phi(v_{C1} - v_{C0}), \\
P(n_{R0} \to n_{R0} - 1) &= f_R x_r \phi(v_{R0} - v_{R1}), \\
P(n_{R0} \to n_{R0} + 1) &= f_R (1 - x_r) \phi(v_{R1} - v_{R0}).
\end{aligned}
\label{eqq4}
\end{equation}
In the first equation of (\ref{eqq4}), $P(n_{C0} \rightarrow n_{C0} - 1)$ denotes the probability that the population leaves state $(x_c, x_r)$ and enters $(x_c - \frac{1}{nf_C}, x_r)$ in one time step, where $f_C = \frac{n_C}{n}$ and $f_R = \frac{n_R}{n}$. This occurs if and only if a conformist who takes strategy $0$ is chosen to update (the probability is $f_C x_c$), and this player changes his/her action from $0$ to $1$ (the probability is $\phi(v_{C0} - v_{C1})$). The derivations of the other three equations are similar. Then the expected change of $x_c$ in each round can be approximated as $dx_c = \frac{P(n_{C0} \to n_{C0} + 1) - P(n_{C0} \to n_{C0} - 1)}{n f_C}.$ Similarly, the expected change of $x_r$ in each round can be approximated as $dx_r = \frac{P(n_{R0} \rightarrow n_{R0} + 1) - P(n_{R0} \rightarrow n_{R0} - 1)}{n f_R}.$ After a rescaling of time (i.e., let $dt = \frac{1}{n}$), we obtain ODE (\ref{eq2}).

\subsection{Derivation of Table 1}
\begin{equation}
\begin{cases}
\frac{dx_c}{dt} = (1 - x_c)\phi(v_{C1} - v_{C0}) - x_c\phi(v_{C0} - v_{C1}) \\
\frac{dx_r}{dt} = (1 - x_r)\phi(v_{R1} - v_{R0}) - x_r\phi(v_{R0} - v_{R1}) 
\end{cases},
\tag{2}
\label{eq2}
\end{equation}

Before proceeding to the fixed points and their stabilities of ODE (\ref{eq2}), we first provide a simple observation.

\begin{observation}\label{OB1}
The following statements for $\phi$ are true: \\
(a) at least one of $\phi(v_{C0} - v_{C1})$ and $\phi(v_{C1} - v_{C0})$ is zero. \\
(b) $\phi(v_{C0} - v_{C1}) = \phi(v_{C1} - v_{C0}) = 0$ if and only if $v_{C0} - v_{C1} = 0$, i.e., $h_{CC} x_c + h_{CR} x_r = x_c^*$, where $x_c^* = \frac{1}{2}-h_{CS_0}$.\\
(c) at least one of $\phi(v_{R0} - v_{R1})$ and $\phi(v_{R1} - v_{R0})$ is zero.\\
(d) $\phi(v_{R0} - v_{R1}) = \phi(v_{R1} - v_{R0}) = 0$ if and only if $v_{R0} - v_{R1} = 0$, i.e., $h_{RR} x_r + h_{RC} x_c = x_r^*$, where $x_r^* = \frac{1}{2}-h_{RS_0}$.
\end{observation}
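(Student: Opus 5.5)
The plan is to read off (a)–(d) directly from the defining property of the switching function. By definition $\phi(v)>0$ exactly when $v<0$, and $\phi(v)=0$ exactly when $v\ge 0$. Throughout, I will treat the mean-field expressions in Eq.~(\ref{eq1}) as the definitions of $v_{C0},v_{C1},v_{R0},v_{R1}$ used in ODE (\ref{eq2}).

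For (a), set $v=v_{C0}-v_{C1}$. The two quantities are $\phi(v)$ and $\phi(-v)$. They cannot both be positive, since that would require $v<0$ and $-v<0$ at the same time. Hence at least one of them vanishes. Statement (c) is the same argument with $v=v_{R0}-v_{R1}$.

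For (b), both terms vanish iff $v\ge 0$ and $-v\ge 0$, that is, iff $v_{C0}-v_{C1}=0$. It remains to rewrite this condition in terms of $(x_c,x_r)$. Subtracting the second line of Eq.~(\ref{eq1}) from the first gives
\begin{equation*}
v_{C0}-v_{C1}=2\bigl(h_{CC}x_c+h_{CR}x_r\bigr)-h_{CC}-h_{CR}+h_{CS_0}-h_{CS_1}.
\end{equation*}
The key input is the normalization $h_{CC}+h_{CR}+h_{CS_0}+h_{CS_1}=1$. It holds because each of these indices has the same denominator $2K_{CC}+\sum_{Z\neq C}K_{CZ}$, and the numerators add up to exactly that denominator. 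Using it, we have $-h_{CC}-h_{CR}-h_{CS_1}=h_{CS_0}-1$. Therefore
\begin{equation*}
v_{C0}-v_{C1}=2\bigl(h_{CC}x_c+h_{CR}x_r\bigr)+2h_{CS_0}-1,
\end{equation*}
which vanishes iff $h_{CC}x_c+h_{CR}x_r=\tfrac12-h_{CS_0}=x_c^*$.

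For (d), I proceed the same way from the last two lines of Eq.~(\ref{eq1}):
\begin{equation*}
v_{R0}-v_{R1}=h_{RR}+h_{RC}+h_{RS_1}-h_{RS_0}-2\bigl(h_{RR}x_r+h_{RC}x_c\bigr).
\end{equation*}
The analogous normalization $h_{RR}+h_{RC}+h_{RS_0}+h_{RS_1}=1$ turns the constant into $1-2h_{RS_0}$. So the difference vanishes iff $h_{RR}x_r+h_{RC}x_c=\tfrac12-h_{RS_0}=x_r^*$.

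There is no real obstacle in this argument. The only points needing care are the normalization identity for the homophily indices, which depends on the factor $2$ in $h_{XX}$, and stating explicitly that Eq.~(\ref{eq1}) is taken with equality within the ODE model.
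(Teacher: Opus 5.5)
Your argument is correct and is the routine verification the paper has in mind when it calls the observation trivial; the paper itself gives no proof. The normalization $h_{CC}+h_{CR}+h_{CS_0}+h_{CS_1}=1$ and its rebel analogue, which you make explicit, are the same identities the paper uses without comment, e.g.\ when it writes $\frac{x_r^*}{h_{RR}}=\frac{h_{RR}+h_{RC}+h_{RS_1}-h_{RS_0}}{2h_{RR}}$ in the derivation of Table~1.
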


Proof of the above observation is trivial, but it is useful for calculating the fixed points of ODE (\ref{eq2}). Observations 1(b) and 1(d) imply that the fixed points of ODE (\ref{eq2}) can be represented by eight parameters, $h_{CC}$, $h_{CR}$, $h_{CS_0}$, $h_{CS_1}$, $h_{RR}$, $h_{RC}$, $h_{RS_0}$, $h_{RS_1}$. In addition, an interior fixed point of ODE (\ref{eq2}) satisfies $v_{C0} - v_{C1} = v_{R0} - v_{R1} = 0$. 

\subsubsection*{Fixed points of ODE (\ref{eq2})}
We know that $(x_c, x_r) \in [0, 1]^2$ is a fixed point of ODE (\ref{eq2}) if and only if
\begin{equation}
x_c\phi(v_{C0} - v_{C1}) = (1 - x_c)\phi(v_{C1} - v_{C0}) = x_r\phi(v_{R0} - v_{R1}) = (1 - x_r)\phi(v_{R1} - v_{R0}) = 0.
\label{eqB1}
\end{equation}

We first calculate the interior fixed point $(\bar{x}_c, \bar{x}_r) \in (0, 1)^2$, which, due to \eqref{eqB1}, must satisfy Observations 1(b) and 1(d). If $h_{CC}h_{RR} \neq h_{CR}h_{RC}$, there is an interior fixed point
$$
\bar{x}_c = \frac{x_c^* h_{RR} - x_r^* h_{CR}}{h_{CC}h_{RR} - h_{CR}h_{RC}}, \quad \bar{x}_r = \frac{x_r^* h_{CC} - x_c^* h_{RC}}{h_{CC}h_{RR} - h_{CR}h_{RC}}.
$$
Furthermore, $(\bar{x}_c, \bar{x}_r)$ must satisfy $0 < \bar{x}_c < 1$ and $0 < \bar{x}_r < 1$. Thus, the existence conditions of an interior fixed point of ODE (\ref{eq2}) are as follows:

\begin{itemize}
\item If $h_{CC}h_{RR} < h_{CR}h_{RC}$, then $(\bar{x}_c, \bar{x}_r)$ is an interior fixed point if $x_c^* h_{RR} - x_r^* h_{CR} \in (h_{CC}h_{RR} - h_{CR}h_{RC}, 0)$ and $x_r^* h_{CC} - x_c^* h_{RC} \in (h_{CC}h_{RR} - h_{CR}h_{RC}, 0)$.

\item If $h_{CC}h_{RR} > h_{CR}h_{RC}$, then $(\bar{x}_c, \bar{x}_r)$ is an interior fixed point if $x_c^* h_{RR} - x_r^* h_{CR} \in (0, h_{CC}h_{RR} - h_{CR}h_{RC})$ and $x_r^* h_{CC} - x_c^* h_{RC} \in (0, h_{CC}h_{RR} - h_{CR}h_{RC})$.

\item If $h_{CC}h_{RR} = h_{CR}h_{RC}$, when $\frac{x_c^*}{h_{CC}} = \frac{x_r^*}{h_{RC}}$, ODE (\ref{eq2}) has a continuum of fixed points $x_r = \frac{x_c^*}{h_{CR}} - \frac{h_{CC}}{h_{CR}} x_c$.
\end{itemize}

We now calculate the boundary fixed points. At the boundary $x_c = 0$, due to \eqref{eqB1}, $(0, x_r)$ is a fixed point if and only if
\begin{equation}
\begin{cases}
\phi(v_{C1} - v_{C0}) = 0, \\
x_r\phi(v_{R0} - v_{R1}) = (1 - x_r)\phi(v_{R1} - v_{R0}) = 0.
\end{cases}
\label{eqB3}
\end{equation}
The first equation in (\ref{eqB3}) is equivalent to $v_{C1} - v_{C0} \geq 0$, i.e., $2h_{CR}x_r \leq h_{CC} + h_{CR} + h_{CS_1} - h_{CS_0}$. The solution of the second equation in (\ref{eqB3}) should satisfy one of the following three conditions:
\begin{equation}
\begin{cases}
(i) & x_r = 1 \quad \text{and} \quad v_{R0} - v_{R1} \geq 0, \\
(ii) & x_r = 0 \quad \text{and} \quad v_{R1} - v_{R0} \geq 0, \\
(iii) & v_{R0} - v_{R1} = 0,
\end{cases}
\label{eqB4}
\end{equation}
where condition (i) implies $h_{RC} + h_{RS_1} -h_{RR} - h_{RS_0} \geq 2h_{RC}x_c$, condition (ii) implies $2h_{RC}x_c \geq h_{RC} + h_{RS_1} + h_{RR} - h_{RS_0}$, and condition (iii) implies $2h_{RR}x_r + 2h_{RC}x_c = h_{RR} + h_{RC} + h_{RS_1} - h_{RS_0}$. Thus, we have:

\begin{itemize}
\item $(0, 1)$ is a boundary fixed point if and only if $h_{CR} \leq h_{CC} - h_{CS_0} + h_{CS_1}$ and $h_{RC} \geq h_{RR} + h_{RS_0} - h_{RS_1}$;

\item $(0, 0)$ is a boundary fixed point if and only if $h_{CR} \geq -h_{CC} + h_{CS_0} - h_{CS_1}$ and $h_{RC} \leq -h_{RR} + h_{RS_0} - h_{RS_1}$;

\item $(0, \frac{x_r^*}{h_{RR}}) = (0, \frac{h_{RR} + h_{RC} + h_{RS_1} - h_{RS_0}}{2h_{RR}})$ is a boundary fixed point if and only if $-h_{RR} + h_{RS_0} - h_{RS_1} < h_{RC} < h_{RR} + h_{RS_0} - h_{RS_1}$ and $\frac{1-2h_{RS_0}}{h_{RR}} \leq \frac{1-2h_{CS_0}}{h_{CR}}$.
\end{itemize}

At the boundary $x_c = 1$, due to \eqref{eqB1}, $(1, x_r)$ is a fixed point if and only if
\begin{equation}
\begin{cases} 
\phi(v_{C0} - v_{C1}) = 0, \\ 
x_r\phi(v_{R0} - v_{R1}) = (1 - x_r)\phi(v_{R1} - v_{R0}) = 0.
\end{cases}
\label{eqB5}
\end{equation}
The first equation in (\ref{eqB5}) is equivalent to $v_{C0} - v_{C1} \geq 0$, i.e., $2h_{CR}x_r \geq - h_{CC} + h_{CR} + h_{CS_1} - h_{CS_0}$. Similarly, the solution of the second equation in (\ref{eqB5}) should satisfy one of the three conditions of (\ref{eqB4}). Thus, we have:

\begin{itemize}
\item $(1, 1)$ is a boundary fixed point if and only if $h_{CR} \geq -h_{CC} - h_{CS_0} + h_{CS_1}$ and $h_{RC} \leq -h_{RR} - h_{RS_0} + h_{RS_1}$;

\item $(1, 0)$ is a boundary fixed point if and only if $h_{CR} \leq h_{CC} + h_{CS_0} - h_{CS_1}$ and $h_{RC} \geq h_{RR} - h_{RS_0} + h_{RS_1}$;

\item $(1, \frac{x_r^*-h_{RC}}{h_{RR}}) = (1, \frac{h_{RR} - h_{RC} + h_{RS_1} - h_{RS_0}}{2h_{RR}})$ is a boundary fixed point if and only if $-h_{RR} - h_{RS_0} + h_{RS_1} < h_{RC} < h_{RR} - h_{RS_0} + h_{RS_1}$ and $\frac{1-2h_{RS_1}}{h_{RR}} \leq \frac{1-2h_{CS_1}}{h_{CR}}$.
\end{itemize}

At the boundary $x_r = 0$, due to \eqref{eqB1}, $(x_c, 0)$ is a fixed point if and only if
\begin{equation}
\begin{cases}
\phi(v_{R1} - v_{R0}) = 0, \\
x_c\phi(v_{C0} - v_{C1}) = (1 - x_c)\phi(v_{C1} - v_{C0}) = 0.
\end{cases}
\label{eqB6}
\end{equation}
The first equation in (\ref{eqB6}) is equivalent to $v_{R1} - v_{R0} \geq 0$, i.e., $2h_{RC}x_c \geq h_{RC} + h_{RR} + h_{RS_1} - h_{RS_0}$. The solution of the second equation in (\ref{eqB6}) should satisfy one of the following three conditions:
\begin{equation}
\begin{cases}
(i) & x_c = 1 \quad \text{and} \quad v_{C0} - v_{C1} \geq 0, \\
(ii) & x_c = 0 \quad \text{and} \quad v_{C1} - v_{C0} \geq 0, \\
(iii) & v_{C0} - v_{C1} = 0,
\end{cases}
\label{eqB7}
\end{equation}
where condition (i) implies $2h_{CR}x_r \geq -h_{CC} + h_{CR} + h_{CS_1} - h_{CS_0}$, condition (ii) implies $2h_{CR}x_r \leq h_{CC} + h_{CR} + h_{CS_1} - h_{CS_0}$, and condition (iii) implies $2h_{CR}x_r + 2h_{CC}x_c = h_{CC} + h_{CR} + h_{CS_1} - h_{CS_0}$. Thus, we have:
\begin{itemize}
\item $(1, 0)$ is a boundary fixed point if and only if $h_{CR} \leq h_{CC} + h_{CS_0} - h_{CS_1}$ and $h_{RC} \geq h_{RR} - h_{RS_0} + h_{RS_1}$;
\item $(0, 0)$ is a boundary fixed point if and only if $h_{CR} \geq -h_{CC} + h_{CS_0} - h_{CS_1}$ and $h_{RC} \leq -h_{RR} + h_{RS_0} - h_{RS_1}$;
\item $(\frac{x_c^*}{h_{CC}}, 0) = (\frac{h_{CC} + h_{CR} + h_{CS_1} - h_{CS_0}}{2h_{CC}}, 0)$ is a boundary fixed point if and only if $-h_{CC} + h_{CS_0} - h_{CS_1} < h_{CR} < h_{CC} + h_{CS_0} - h_{CS_1}$ and $\frac{1-2h_{RS_0}}{h_{RC}} \leq \frac{1-2h_{CS_0}}{h_{CC}}$.
\end{itemize}

At the boundary $x_r = 1$, due to \eqref{eqB1}, $(x_c, 1)$ is a fixed point if and only if 
\begin{equation}
\begin{cases} 
\phi(v_{R0} - v_{R1}) = 0, \\ 
x_c\phi(v_{C0} - v_{C1}) = (1 - x_c)\phi(v_{C1} - v_{C0}) = 0.
\label{eqB8}
\end{cases}
\end{equation}
The first equation in \eqref{eqB8} is equivalent to $v_{R0} - v_{R1} \geq 0$, i.e., $2h_{RC}x_c \leq h_{RC} - h_{RR} + h_{RS_1} - h_{RS_0}$. Similarly, the solution of the second equation in \eqref{eqB8} should satisfy one of the three conditions of \eqref{eqB7}. Thus, we have:  
\begin{itemize}
\item $(1, 1)$ is a boundary fixed point if and only if $h_{CR} \geq -h_{CC} - h_{CS_0} + h_{CS_1}$ and $h_{RC} \leq -h_{RR} - h_{RS_0} + h_{RS_1}$;  
\item $(0, 1)$ is a boundary fixed point if and only if $h_{CR} \leq h_{CC} - h_{CS_0} + h_{CS_1}$ and $h_{RC} \geq h_{RR} + h_{RS_0} - h_{RS_1}$;  
\item $(\frac{x_c^*-h_{CR}}{h_{CC}}, 1) = (\frac{h_{CC} - h_{CR} + h_{CS_1} - h_{CS_0}}{2h_{CC}}, 1)$ is a boundary fixed point if and only if $-h_{CC} - h_{CS_0} + h_{CS_1} < h_{CR} < h_{CC} - h_{CS_0} + h_{CS_1}$ and $\frac{1-2h_{RS_1}}{h_{RC}} \leq \frac{1-2h_{CS_1}}{h_{CC}}$.
\end{itemize}

\subsubsection*{Local stability analysis}
We analyze the local stabilities of all the interior and boundary fixed points. Let $\textbf{J}(x_c, x_r)$ be the Jacobian matrix of ODE (\ref{eq2}) at $(x_c, x_r)$. The entries of $\textbf{J}$ can be calculated as

$$
\begin{cases}
\begin{aligned}
\textbf{J}_{11}(x_c, x_r) &= -\phi(v_{C0} - v_{C1}) - \phi(v_{C1} - v_{C0}) - 2h_{CC} x_c \phi'(v_{C0} - v_{C1}) - 2h_{CC}(1 - x_c)\phi'(v_{C1} - v_{C0}), \\
\textbf{J}_{12}(x_c, x_r) &= -2 h_{CR} x_c \phi'(v_{C0} - v_{C1}) - 2 h_{CR} (1 - x_c)\phi'(v_{C1} - v_{C0}), \\
\textbf{J}_{21}(x_c, x_r) &= 2 h_{RC} x_r \phi'(v_{R0} - v_{R1}) + 2 h_{RC} (1 - x_r)\phi'(v_{R1} - v_{R0}), \\
\textbf{J}_{22}(x_c, x_r) &= -\phi(v_{R0} - v_{R1}) - \phi(v_{R1} - v_{R0}) + 2h_{RR} x_r \phi'(v_{R0} - v_{R1}) + 2h_{RR}(1 - x_r)\phi'(v_{R1} - v_{R0}). 
\end{aligned}
\end{cases}
$$

$\bullet(\bar{x}_c, \bar{x}_r)$: At the interior fixed point \((\bar{x}_c, \bar{x}_r)\), the Jacobian matrix of ODE (\ref{eq2}) is given by
\[
\begin{pmatrix}
-2h_{CC} \phi'(0) & -2 h_{CR}\phi'(0) \\
2 h_{RC}\phi'(0) & 2h_{RR} \phi'(0)
\end{pmatrix},
\]
with eigenvalues 

$\begin{aligned}
\lambda_{1,2} = & - h_{CC}\phi'(0) + h_{RR}\phi'(0) \pm  \sqrt{(h_{CC}\phi'(0) - h_{RR}\phi'(0))^2 - 4(h_{RC}h_{CR} - h_{CC}h_{RR}) \phi'(0)\phi'(0)}.
\end{aligned}$ The interior fixed point $(\bar{x}_c, \bar{x}_r)$ is locally asymptotically stable if and only if both eigenvalues have negative real parts. Recall that $\phi'(0)$ is the left-hand derivative of $\phi$ at 0, which is assumed to be negative. Hence $(\bar{x}, \bar{y})$ is locally asymptotically stable if and only if $ h_{CC} < h_{RR}$ and $h_{CC}h_{RR} < h_{CR}h_{RC}$.

$\bullet(0,0)$: When $h_{CR} \geq -h_{CC} + h_{CS_0} - h_{CS_1}$ and $h_{RC} \leq -h_{RR} + h_{RS_0} - h_{RS_1}$, ODE (\ref{eq2}) has a boundary fixed point $(0,0)$. We have $v_{C0} - v_{C1} = h_{CS_0} - h_{CC} - h_{CS_1} - h_{CR} \leq 0$, so $\phi(v_{C0} - v_{C1}) \geq 0$, $\phi(v_{C1} - v_{C0}) = 0$. Similarly, $v_{R0} - v_{R1} = h_{RR} + h_{RS_1} + h_{RC} - h_{RS_0} \leq 0$, $\phi(v_{R0} - v_{R1}) \geq 0$, $\phi(v_{R1} - v_{R0}) = 0$. The Jacobian matrix at this fixed point is
\begin{equation}
\begin{pmatrix}
-\phi(v_{C0} - v_{C1}) & 0 \\
0 & -\phi(v_{R0} - v_{R1})
\label{eqB9}
\end{pmatrix}.
\end{equation}
The two eigenvalues of \eqref{eqB9} are $\lambda_1 = -\phi(v_{C0} - v_{C1}) \leq 0$ and $\lambda_2 = -\phi(v_{R0} - v_{R1}) \leq 0$. Hence $(0,0)$ is locally asymptotically stable if and only if $h_{CR} > -h_{CC} + h_{CS_0} - h_{CS_1}$ and $h_{RC} < -h_{RR} + h_{RS_0} - h_{RS_1}$.

$\bullet(1,1)$: When $h_{CR} \geq -h_{CC} - h_{CS_0} + h_{CS_1}$ and $h_{RC} \leq -h_{RR} - h_{RS_0} + h_{RS_1}$, ODE (\ref{eq2}) has a boundary fixed point $(1,1)$. We have $v_{C0} - v_{C1} = h_{CC} + h_{CS_0} + h_{CR} - h_{CS_1} \geq 0$, so $\phi(v_{C0} - v_{C1}) = 0$, $\phi(v_{C1} - v_{C0}) \geq 0$. Similarly, $v_{R0} - v_{R1} = h_{RS_1} - h_{RR} - h_{RS_0} - h_{RC} \geq 0$, so $\phi(v_{R0} - v_{R1}) = 0$, $\phi(v_{R1} - v_{R0}) \geq 0$. The Jacobian matrix at this fixed point is
\begin{equation}
\begin{pmatrix}
-\phi(v_{C1} - v_{C0}) & 0 \\
0 & -\phi(v_{R1} - v_{R0})
\label{eqB10}
\end{pmatrix}.
\end{equation}
The two eigenvalues of \eqref{eqB10} are $\lambda_1 = -\phi(v_{C1} - v_{C0}) \leq 0$ and $\lambda_2 = -\phi(v_{R1} - v_{R0}) \leq 0$. Hence $(1,1)$ is locally asymptotically stable if and only if $h_{CR} > -h_{CC} - h_{CS_0} + h_{CS_1}$ and $h_{RC} < -h_{RR} - h_{RS_0} + h_{RS_1}$.

$\bullet(0,1)$: When $h_{CR} \leq h_{CC} - h_{CS_0} + h_{CS_1}$ and $h_{RC} \geq h_{RR} + h_{RS_0} - h_{RS_1}$, ODE (\ref{eq2}) has a boundary fixed point $(0,1)$. We have $v_{C0} - v_{C1} = h_{CR} +h_{CS_0} - h_{CC} - h_{CS_1} \leq 0$, so $\phi(v_{C0} - v_{C1}) \geq 0$, $\phi(v_{C1} - v_{C0}) = 0$. Similarly, $v_{R0} - v_{R1} = h_{RC} + h_{RS_1} - h_{RR} - h_{RS_0} \geq 0 $, so $\phi(v_{R0} - v_{R1}) = 0$, $\phi(v_{R1} - v_{R0}) \geq 0$. The Jacobian matrix at this point is
\begin{equation}
\begin{pmatrix}
-\phi(v_{C0} - v_{C1}) & 0 \\
0 & -\phi(v_{R1} - v_{R0})
\label{eqB11}
\end{pmatrix}.
\end{equation}
The two eigenvalues of \eqref{eqB11} are $\lambda_1 = -\phi(v_{C0} - v_{C1}) \leq 0$ and $\lambda_2 = -\phi(v_{R1} - v_{R0}) \leq 0$. Hence $(0,1)$ is locally asymptotically stable if and only if $h_{CR} < h_{CC} - h_{CS_0} + h_{CS_1}$ and $h_{RC} > h_{RR} + h_{RS_0} - h_{RS_1}$.

$\bullet(1,0)$: When $h_{CR} \leq h_{CC} + h_{CS_0} - h_{CS_1}$ and $h_{RC} \geq h_{RR} - h_{RS_0} + h_{RS_1}$, ODE (\ref{eq2}) has a boundary fixed point $(1,0)$. We have $v_{C0} - v_{C1} = h_{CC} + h_{CS_0} - h_{CR} - h_{CS_1} \geq 0$, so $\phi(v_{C0} - v_{C1}) = 0$, $\phi(v_{C1} - v_{C0}) \geq 0$. Similarly, $v_{R0} - v_{R1} = h_{RS_1} + h_{RR} - h_{RS_0} - h_{RC} \leq 0$, so $\phi(v_{R0} - v_{R1}) \geq 0$, $\phi(v_{R1} - v_{R0}) = 0$. The Jacobian matrix at this fixed point is
\begin{equation}
\begin{pmatrix}
-\phi(v_{C1} - v_{C0}) & 0 \\
0 & -\phi(v_{R0} - v_{R1})
\label{eqB12}
\end{pmatrix}.
\end{equation}
The two eigenvalues of \eqref{eqB12} are $\lambda_1 = -\phi(v_{C1} - v_{C0}) \leq 0$ and $\lambda_2 = -\phi(v_{R0} - v_{R1}) \leq 0$. Hence $(1,0)$ is locally asymptotically stable if and only if $h_{CR} < h_{CC} + h_{CS_0} - h_{CS_1}$ and $h_{RC} > h_{RR} - h_{RS_0} + h_{RS_1}$.

$\bullet(0, \frac{x_r^*}{h_{RR}})$: When $-h_{RR} + h_{RS_0} - h_{RS_1} < h_{RC} < h_{RR} + h_{RS_0} - h_{RS_1}$ and $\frac{1-2h_{RS_0}}{h_{RR}} \leq \frac{1-2h_{CS_0}}{h_{CR}}$, ODE (\ref{eq2}) has a boundary fixed point $(0, \frac{x_r^*}{h_{RR}})$. We have $v_{C0} - v_{C1} = 2\frac{h_{CR}}{h_{RR}}x_r^*-2x_c^* \leq 0$, so $\phi(v_{C0} - v_{C1}) \geq 0$, $\phi(v_{C1} - v_{C0}) = 0$. Similarly, $v_{R0} - v_{R1} = 0 $, so $\phi(v_{R0} - v_{R1}) = \phi(v_{R1} - v_{R0}) = 0$. The Jacobian matrix at this point is
\begin{equation}
\begin{pmatrix}
-\phi(v_{C0} - v_{C1}) & 0 \\
2h_{RC}\phi'(0) & 2h_{RR}\phi'(0)
\label{eqB13}
\end{pmatrix}.
\end{equation}
The two eigenvalues of \eqref{eqB13} are $\lambda_1 = -\phi(v_{C0} - v_{C1}) \leq 0$ and $\lambda_2 = 2h_{RR}\phi'(0) < 0$. Hence $(0,\frac{x_r^*}{h_{RR}})$ is locally asymptotically stable if and only if $\frac{1-2h_{RS_0}}{h_{RR}} < \frac{1-2h_{CS_0}}{h_{CR}}$.

$\bullet(\frac{x_c^*}{h_{CC}},0)$: When $-h_{CC} + h_{CS_0} - h_{CS_1} < h_{CR} < h_{CC} + h_{CS_0} - h_{CS_1}$ and $\frac{1-2h_{RS_0}}{h_{RC}} \leq \frac{1-2h_{CS_0}}{h_{CC}}$, ODE (\ref{eq2}) has a boundary fixed point $(\frac{x_c^*}{h_{CC}},0)$. We have $v_{C0} - v_{C1} = 0$, so $\phi(v_{C0} - v_{C1}) = \phi(v_{C1} - v_{C0}) = 0$. Similarly, $v_{R0} - v_{R1} = 2x_r^* - 2\frac{h_{RC}}{h_{CC}}x_c^* \leq 0$, so $\phi(v_{R0} - v_{R1}) \geq 0$, $\phi(v_{R1} - v_{R0}) = 0$. The Jacobian matrix at this fixed point is
\begin{equation}
\begin{pmatrix}
-2h_{CC}\phi'(0) & -2h_{CR}\phi'(0) \\
0 & -\phi(v_{R0} - v_{R1})
\label{eqB14}
\end{pmatrix}.
\end{equation}
The two eigenvalues of \eqref{eqB14} are $\lambda_1 = -2h_{CC}\phi'(0) > 0$ and $\lambda_2 = -\phi(v_{R0} - v_{R1}) \leq 0$. Hence $(\frac{x^*}{h_{CC}},0)$ is unstable.

$\bullet(1, \frac{x_r^*-h_{RC}}{h_{RR}})$: When $-h_{RR} - h_{RS_0} + h_{RS_1} < h_{RC} < h_{RR} - h_{RS_0} + h_{RS_1}$ and $\frac{1-2h_{RS_1}}{h_{RR}} \leq \frac{1-2h_{CS_1}}{h_{CR}}$, ODE (\ref{eq2}) has a boundary fixed point $(1, \frac{x_r^*-h_{RC}}{h_{RR}})$. We have $v_{C0} - v_{C1} = h_{CR}(\frac{h_{RS_1} - h_{RC} - h_{RS_0}}{h_{RR}} - \frac{h_{CS_1} - h_{CC} - h_{CS_0}}{h_{CR}}) \geq 0$, so $\phi(v_{C0} - v_{C1}) = 0$, $\phi(v_{C1} - v_{C0}) \geq 0$. Similarly, $v_{R0} - v_{R1} = 0 $, so $\phi(v_{R0} - v_{R1}) = \phi(v_{R1} - v_{R0}) = 0$. The Jacobian matrix at this point is
\begin{equation}
\begin{pmatrix}
-\phi(v_{C1} - v_{C0}) & 0 \\
2h_{RC}\phi'(0) & 2h_{RR}\phi'(0)
\label{eqB15}
\end{pmatrix}.
\end{equation}
The two eigenvalues of \eqref{eqB15} are $\lambda_1 = -\phi(v_{C1} - v_{C0}) \leq 0$ and $\lambda_2 = 2h_{RR}\phi'(0) < 0$. Hence $(1, \frac{x_r^*-h_{RC}}{h_{RR}})$ is locally asymptotically stable if and only if $\frac{1-2h_{RS_1}}{h_{RR}} < \frac{1-2h_{CS_1}}{h_{CR}}$.

$\bullet(\frac{x_c^*-h_{CR}}{h_{CC}},1)$: When $-h_{CC} - h_{CS_0} + h_{CS_1} < h_{CR} < h_{CC} - h_{CS_0} + h_{CS_1}$ and $\frac{1-2h_{RS_1}}{h_{RC}} \leq \frac{1-2h_{CS_1}}{h_{CC}}$, ODE (\ref{eq2}) has a boundary fixed point $(\frac{x_c^*-h_{CR}}{h_{CC}},1)$. We have $v_{C0} - v_{C1} = 0$, so $\phi(v_{C0} - v_{C1}) = \phi(v_{C1} - v_{C0}) = 0$. Similarly, $v_{R0} - v_{R1} = h_{RC}(\frac{h_{RS_1} - h_{RR} - h_{RS_0}}{h_{RC}} - \frac{h_{CS_1} - h_{CR} - h_{CS_0}}{h_{CC}}) \geq 0$, so $\phi(v_{R0} - v_{R1}) = 0$, $\phi(v_{R1} - v_{R0}) \geq 0$. The Jacobian matrix at this fixed point is
\begin{equation}
\begin{pmatrix}
-2h_{CC}\phi'(0) & -2h_{CR}\phi'(0) \\
0 & -\phi(v_{R1} - v_{R0})
\label{eqB16}
\end{pmatrix}.
\end{equation}
The two eigenvalues of \eqref{eqB16} are $\lambda_1 = -2h_{CC}\phi'(0) > 0$ and $\lambda_2 = -\phi(v_{R1} - v_{R0}) \leq 0$. Hence $(\frac{x_c^*-h_{CR}}{h_{CC}},1)$ is unstable.

Finally, we summarize the existence and stability conditions of all fixed points of ODE (\ref{eq2}) in Table 1.

\subsection{Proof of Theorem 4}
\begin{theorem}\label{th4}
(i) If $h_{CR} h_{RC} < h_{CC} h_{RR}$, then ODE (\ref{eq2}) has stable boundary fixed points. (ii) If $h_{CR} h_{RC} > h_{CC} h_{RR}$, $h_{CC} < h_{RR}$, $0 < \bar{x}_c < 1$ and $0 < \bar{x}_r < 1$, then ODE (\ref{eq2}) has a stable interior fixed point.
(iii) If $h_{CR} h_{RC} > h_{CC} h_{RR}$, $h_{CC} > h_{RR}$, $0 < \bar{x}_c < 1$ and $0 < \bar{x}_r < 1$, then a limit cycle emerges.
\end{theorem}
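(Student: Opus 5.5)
The plan is to work with the two switching curves that organize the phase portrait. Let $L_c=\{h_{CC}x_c+h_{CR}x_r=x_c^*\}$ and $L_r=\{h_{RR}x_r+h_{RC}x_c=x_r^*\}$, using Observation~\ref{OB1}. Off $L_c$, the variable $x_c$ moves monotonically. It increases, at rate $(1-x_c)\phi(\cdot)$, on the side where $v_{C1}<v_{C0}$, i.e.\ where $h_{CC}x_c+h_{CR}x_r>x_c^*$, and it decreases on the other side. Thus conformists are pushed \emph{away} from $L_c$, toward $x_c=0$ or $x_c=1$. For rebels the sign is reversed: $x_r$ is pushed \emph{toward} $L_r$, or to the boundary $x_r\in\{0,1\}$ if $L_r$ does not meet the relevant segment. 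The three parts of Theorem~\ref{th4} are then read off from the relative position of $L_c$ and $L_r$ together with Table~\ref{tab1}.

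\textbf{Part (i).} Assume $h_{CR}h_{RC}<h_{CC}h_{RR}$. If the interior point $(\bar x_c,\bar x_r)$ exists, its Jacobian has determinant $-4\phi'(0)^2(h_{CC}h_{RR}-h_{CR}h_{RC})<0$, so it is a saddle. I would then argue globally.
\begin{itemize}
\item Consider the region of $[0,1]^2$ where $x_r$ is attracted to $L_r$ (or to a boundary value). Along $L_r$, parametrized by $x_c$, the slope condition $h_{CC}h_{RR}>h_{CR}h_{RC}$ says that $L_r$ crosses $L_c$ at most once.
\item On each side of the crossing, the drift of $x_c$ along $L_r$ points away from $L_c$.
\item Hence trajectories starting near $L_r$ on either side of the saddle slide along $L_r$ to the boundary $x_c=0$ or $x_c=1$.
\item There they end at one of $(0,0)$, $(0,1)$, $(0,\frac{x_r^*}{h_{RR}})$, $(1,0)$, $(1,1)$, or $(1,\frac{x_r^*-h_{RC}}{h_{RR}})$.
\end{itemize}
I then check from Table~\ref{tab1} that at least one of these exists with strict inequalities. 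Concretely, at $x_c=0$, the rebel equation restricted to the edge has a stable point: $0$, $1$, or $\frac{x_r^*}{h_{RR}}$ according to the sign of $v_{R0}-v_{R1}$ at the endpoints. The conformist condition, namely that $x_c$ decreases there, follows once $L_c$ lies on the appropriate side, which the single-crossing property guarantees for at least one of the edges $x_c=0$ or $x_c=1$. Degenerate equalities will be handled by genericity.

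\textbf{Part (ii).} This is the local computation already carried out in the derivation of Table~\ref{tab1}. The eigenvalues at $(\bar x_c,\bar x_r)$ have trace $2\phi'(0)(h_{RR}-h_{CC})$. Because $\phi'(0)<0$, this trace is negative exactly when $h_{CC}<h_{RR}$. The determinant is positive exactly when $h_{CR}h_{RC}>h_{CC}h_{RR}$. Together with $0<\bar x_c,\bar x_r<1$, the interior point exists and is locally asymptotically stable.

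\textbf{Part (iii)} is the main obstacle. Here the interior point has positive determinant and positive trace, so it is an unstable node or focus. I would apply Poincaré–Bendixson on the compact invariant square $[0,1]^2$, which is invariant because the vector field points inward or is tangent on its edges. The key step is to rule out convergence of orbits near $(\bar x_c,\bar x_r)$ to a boundary fixed point.
\begin{itemize}
\item Under $h_{CR}h_{RC}>h_{CC}h_{RR}$ with an interior crossing, the orientation of $L_c$ and $L_r$ forces the rotational sign pattern $(+,+),(+,-),(-,-),(-,+)$ of $(\dot x_c,\dot x_r)$ around the crossing.
\item I would check each boundary candidate in Table~\ref{tab1} against this sign pattern. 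The aim is to show it is either absent, or a saddle or repeller whose stable manifold lies on the boundary.
\item If that check fails for some parameters, I would instead construct an explicit trapping annulus. Its inner boundary is a small circle around the repelling focus. Its outer boundary is a polygon built from segments of $L_c$, $L_r$ and lines parallel to the axes, chosen so that the flow crosses each segment inward.
\end{itemize}
Since the annulus then contains no fixed point, Poincaré–Bendixson yields a periodic orbit, i.e.\ a limit cycle. The non-smoothness of $\phi$ at $0$ means I would use the Filippov or piecewise-smooth version of Poincaré–Bendixson, or assume $\phi$ is $C^1$ with $\phi'(0)<0$ as in the Jacobian computation.
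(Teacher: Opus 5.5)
\textbf{Part (iii) has a genuine gap.} Your plan rests on showing that every boundary candidate in Table~\ref{tab1} is absent or non-attracting. The hypotheses of (iii) do not imply this. Take $(h_{CC},h_{CR},h_{CS_0},h_{CS_1})=(0.1,0.2,0.25,0.45)$ and $(h_{RR},h_{RC},h_{RS_0},h_{RS_1})=(0.05,0.5,0.1,0.35)$. Then $h_{CR}h_{RC}=0.1>0.005=h_{CC}h_{RR}$, $h_{CC}>h_{RR}$, and $(\bar x_c,\bar x_r)\approx(0.71,0.89)\in(0,1)^2$, so every hypothesis of (iii) holds. Yet the corner $(0,1)$ satisfies both of its Table~\ref{tab1} inequalities strictly ($0.2<0.3$ and $0.5>-0.2$). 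Directly, at $(0,1)$ one has $v_{C1}-v_{C0}=0.1>0$ and $v_{R0}-v_{R1}=0.7>0$, so it is a strict attractor.

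Once such a point exists, Poincar\'e--Bendixson on the square minus a small disk around the repelling interior point only says that the outward spiral accumulates on a periodic orbit, a fixed point, or a chain of fixed points and connecting orbits. Nothing in your argument prevents it from being absorbed by $(0,1)$. Your fallback trapping annulus is never constructed, and no such annulus can exist whenever orbits leaving a neighbourhood of $(\bar x_c,\bar x_r)$ are captured by $(0,1)$.

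So (iii) does not follow from your sketch. It needs an extra hypothesis, for example that no boundary fixed point satisfies its strict Table~\ref{tab1} inequalities. Under that hypothesis only the edge saddles $(\frac{x_c^*}{h_{CC}},0)$ and $(\frac{x_c^*-h_{CR}}{h_{CC}},1)$ remain, up to degenerate cases. They attract only a curve, so your argument applied to a generic orbit near $(\bar x_c,\bar x_r)$ can be completed. The paper's own proof of (iii) simply asserts that an unstable, clockwise-rotating interior point produces a limit cycle, so it skips exactly the same step.

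\textbf{Smaller points.} Your alternative ``assume $\phi$ is $C^1$ with $\phi'(0)<0$'' is impossible. Since $\phi\equiv 0$ on $[0,\infty)$, differentiability at $0$ forces $\phi'(0)=0$. At $(\bar x_c,\bar x_r)$ the field is only piecewise smooth, because only one $\phi$-term is active on each side of each switching line. On the two sides of $L_c$ the $x_c$-equation linearizes with gains $1-\bar x_c$ and $\bar x_c$, and similarly for $x_r$. The sign of the determinant, hence the saddle-versus-rotation dichotomy, survives this. The trace test $h_{CC}<h_{RR}$, which you take from the paper's Jacobian in (ii) and in the ``unstable node or focus'' step of (iii), is only the paper's heuristic.

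\textbf{Part (i).} Your concrete edge check takes a different and sounder route than the paper. The paper passes from ``no stable interior point and no cycle'' to a \emph{stable} boundary point via Poincar\'e--Bendixson, which does not by itself give stability. Your argument instead uses that, when $h_{CC}h_{RR}>h_{CR}h_{RC}$, the quantity $h_{CC}x_c+h_{CR}x_r-x_c^*$ is strictly increasing along the rebel nullcline clipped to $[0,1]^2$. Its slope is $h_{CC}-h_{CR}h_{RC}/h_{RR}>0$ on $L_r$ and $h_{CC}>0$ on the clipped pieces. Hence either the rebel attractor on $x_c=0$ has $v_{C0}<v_{C1}$, or the one on $x_c=1$ has $v_{C0}>v_{C1}$, strictly, and this proves (i).

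What the inequality gives is this orientation, not ``at most one crossing'': two distinct lines always cross at most once. The ``sliding along $L_r$'' bullets are unnecessary, and without a time-scale separation they are not rigorous.
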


\begin{proof}
(i) From $\frac{dx_c}{dt} = 0$ and $\frac{dx_r}{dt} = 0$, the zero-isoclines in the $(x_c, x_r)$-plane can be expressed as 
$$
L_c: x_r = -\frac{h_{CC}}{h_{CR}}x_c + \frac{x_c^*}{h_{CR}}, L_r: x_r = -\frac{h_{RC}}{h_{RR}}x_c + \frac{x_r^*}{h_{RR}}.
$$
Denote the slopes of $L_c$ and $L_r$ as $l_c$ and $l_r$, respectively. It is straightforward that both slopes are non-positive. If the intersection $(\bar{x}, \bar{y}) \notin [0,1]^2$, ODE (\ref{eq2}) clearly has no interior fixed points or limit cycles. If $(\bar{x}, \bar{y}) \in [0,1]^2$, the last row of Table 1 indicates that the interior fixed point $(\bar{x}, \bar{y})$ is unstable when $h_{CR}h_{RC} < h_{CC}h_{RR}$. Furthermore, vector field analysis of the dynamical system shows that trajectories rotate clockwise around the interior fixed point if and only if $l_c > l_r$, or equivalently $h_{CR}h_{RC} > h_{CC}h_{RR}$. Thus, ODE (\ref{eq2}) has neither stable interior fixed points nor limit cycles when $h_{CR}h_{RC} < h_{CC}h_{RR}$. Since ODE (\ref{eq2}) is smooth on $[0,1]^2$, the Poincaré-Bendixson theorem implies the existence of stable boundary fixed points.  

(ii) The conclusion follows directly from Table 1.  

(iii) When $0 < \bar{x}_c < 1$ and $0 < \bar{x}_r < 1$, if $h_{CC} > h_{RR}$ and $h_{CR}h_{RC} > h_{CC}h_{RR}$, the interior fixed point $(\bar{x}, \bar{y})$ is unstable and trajectories rotate clockwise around it, leading to the emergence of a limit cycle. 
\end{proof}

\subsection{Proof of Theorem 5}
\begin{theorem}\label{th5}
A network HP game admits an L-PNE if and only if the following three conditions hold: (i) the game does not have CoACo edges; (ii) for each subnetwork composed of coordinators, all stubborn neighbors adopt the same strategy; (iii) for each subnetwork composed of anti-coordinators ${ACo}^k$, there exists a partition, $\{ACo_0^k, ACo_1^k\}$, such that no two players within the same set are connected by an edge and all stubborn neighbors of players in set $ACo_{0}^k$ (or $ACo_{1}^k$) adopt strategy $1$ (or $0$).
\end{theorem}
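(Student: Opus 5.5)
Our plan is to first turn the definition of an L-PNE into a purely local, combinatorial condition. An L-PNE must be robust against every observed subset $\widetilde{N}_i\subseteq N_i$. The key move is to test the equilibrium inequality on singleton observation sets $\widetilde{N}_i=\{j\}$.

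For a coordinator $i$ ($\delta_i>0$, $\tau_i\in(0,1)$), observing a single neighbor playing $a$ makes $a$ the strict best response. We check this directly: if $x_j=0$ then $a_i-c_i=\delta_i\tau_i>0$, and if $x_j=1$ then $d_i-b_i=\delta_i\tau_i>0$. Likewise, for an anti-coordinator the strict best response to a single observed neighbor is the opposite action.

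Hence at an L-PNE every coordinator matches every neighbor, and every anti-coordinator differs from every neighbor. Conversely, suppose this matching/anti-matching holds. Then for any nonempty $\widetilde{N}_i$ all observed neighbors play a single action, so the same singleton computation (scaled by $|\widetilde{N}_i|$) shows $x_i^*$ is a best response. For $\widetilde{N}_i=\emptyset$ both payoffs are $0$, so nothing can be violated. Stubborn agents must play their dominant strategy; for $\tau_i\notin(0,1)$ or $\delta_i=0$ the dominant action is a best response to every sample. We will therefore establish the characterization: $\boldsymbol{x}^*$ is an L-PNE iff stubborn agents play their dominant strategies, each coordinator agrees with all its neighbors, and each anti-coordinator disagrees with all its neighbors.

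Necessity of (i)--(iii) then follows from this characterization.

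(i) On a CoACo edge $\{i,j\}$, coordinator $i$ forces $x_i=x_j$ while anti-coordinator $j$ forces $x_j\neq x_i$, a contradiction.

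(ii) Within a connected coordinator component all coordinators share one action $a$, since agreement propagates along edges. Every stubborn neighbor of that component must then also play $a$, so all such stubborn neighbors have the same dominant strategy.

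(iii) Within a connected anti-coordinator component $ACo^k$, set $ACo^k_b=\{i\in ACo^k: x_i^*=b\}$. Adjacent anti-coordinators differ, so each part is independent. A stubborn neighbor of a node in $ACo^k_0$ must play $1$, and one of a node in $ACo^k_1$ must play $0$.

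For sufficiency we build the profile explicitly. Stubborn agents play their dominant strategy. Each coordinator component plays the common strategy of its stubborn neighbors, or an arbitrary strategy if it has none. Each anti-coordinator component plays according to the given partition, $ACo^k_b\mapsto b$. Condition (i) guarantees that coordinators' and anti-coordinators' neighbors lie only in their own class or in $S$, so the agreement/disagreement requirements reduce exactly to (ii) and (iii), and the characterization yields an L-PNE.

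The main obstacle is the first step: justifying that the L-PNE definition quantifies over all $\widetilde{N}_i\subseteq N_i$, so that singleton samples legitimately pin down local agreement. We must also handle ties and the empty sample via the convention $\tilde u_i\equiv 0$. A secondary subtlety is that stubborn agents with $\tau_i\in\{0,1\}$ weakly prefer their action against one-sided samples, and this must be checked so the converse direction still holds. A remark on uniqueness (up to isolated components without stubborn neighbors) can be appended, since the construction is forced whenever the components touch stubborn agents.
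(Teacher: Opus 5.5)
Your proof is correct, and it is more self-contained than the paper's. The paper treats the local characterization (coordinators agree with every neighbor, anti-coordinators disagree with every neighbor) as evident from the definition. It then handles condition (ii) in both directions by citing Theorem 2 of \cite{cao2024discrete}. For the necessity of (iii) it argues via odd cycles, and, when a bipartition exists, via stubborn neighbors of one side playing both actions. That second step tacitly uses the fact that a connected bipartite graph has a bipartition unique up to swapping the sides.

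You instead derive the characterization from singleton samples $\widetilde{N}_i=\{j\}$ and get (ii) from agreement propagating along coordinator edges. You get (iii) by reading the partition $ACo^k_b=\{i\in ACo^k: x_i^*=b\}$ directly off the equilibrium. This avoids both the external citation and the uniqueness-of-bipartition step, so you obtain an elementary proof from the definition alone. What the paper's write-up adds is the uniqueness claim later used for Corollary 4. Your caveat that uniqueness needs every component to touch a stubborn agent is the right qualification.

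Two small points need fixing.
\begin{itemize}
\item For a coordinator, $a_i-c_i=\delta_i(1-\tau_i)$, not $\delta_i\tau_i$. Only the sign matters, so the argument is unaffected.
\item The clause ``stubborn agents must play their dominant strategy'' in the forward direction of your characterization fails for weakly dominant stubborn agents. For example, if $\delta_i>0$ and $\tau_i=1$ and all neighbors play $0$, agent $i$ is indifferent in every sample and may play $0$ at an L-PNE. So in your necessity arguments for (ii) and (iii), state the conclusion in terms of the strategy the stubborn neighbors actually play at the L-PNE. That is what your argument uses, and it matches the paper's convention that $S_0$ and $S_1$ agents are fixed at their strategies.
\end{itemize}
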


\begin{proof}
Sufficiency. Since the game does not have CoACo edges, it suffices to prove that an L-PNE exists in the coordinator subnetwork and the anti-coordinator subnetwork, respectively. Note that at an L-PNE, each coordinator adopts the same strategy as her neighbors. From Theorem 2 in \cite{cao2024discrete}, condition (ii) implies an L-PNE is guaranteed to exist in each coordinator subnetwork. Moreover, condition (iii) implies that in each anti-coordinator subnetwork ${ACo}^k$, the strategy profile where all anti-coordinators in $ACo_0^k$ adopt strategy $0$ and all those in $ACo_1^k$ adopt strategy $1$ constitutes an L-PNE. In summary, any game satisfying the three conditions admits an L-PNE. \textcolor{blue}{In particular, as each subnetwork is endowed with stubborn neighbors, coordinators within each coordinator subnetwork necessarily adopt the same strategies as their stubborn neighbors, and the partition of every anti-coordinator subnetwork is unique, which together ensures the uniqueness of the L-PNE.}

Necessity. We prove by contradiction that an HP game does not have an L-PNE if any of the aforementioned conditions is not satisfied. For condition (i), if the network has a CoACo edge, then the game does not admit an L-PNE, as neighboring coordinators and anti-coordinators cannot be satisfied simultaneously. For condition (ii), if stubborn neighbors adopt different strategies in some subnetworks composed of coordinators, then the game does not admit L-PNE, as follows directly from Theorem 2 in \cite{cao2024discrete}. For condition (iii), if a subnetwork composed of anti-coordinators does not admit a bipartition where no two players within the same set are connected by an edge, then this subnetwork must contain an odd-length cycle. Since graphs containing odd-length cycles are not 2-colorable, the game admits no L-PNE. Alternatively, if such a bipartition with no intra-subset edges exists, yet the stubborn neighbors of anti-coordinators in one subset adopt both strategies $1$ and $0$, the game does not have L-PNE since anti-coordinators must adopt strategies distinct from those of their neighbors at any L-PNE.
\end{proof}

\subsection{Proof of Corollary 4}

\setcounter{corollary}{3}
\begin{corollary}\label{co4}
In the case of limited information, the asynchronous best response dynamics either converges to an L-PNE (if it exists) or to a stationary distribution.
\end{corollary}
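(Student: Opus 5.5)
The plan is to view the limited-information asynchronous best response dynamics as a time-homogeneous Markov chain on the finite state space $A^N$. At each step an active player $i$ is drawn with positive probability. Each neighbor $j\in N_i$ is observed independently with probability $q_{ij}\in(0,1)$, which yields a random sample $\widetilde{N}_i$, and $i$ then best-responds to that sample according to Eq.~(\ref{eq3}), keeping her strategy on ties or when $\widetilde{N}_i=\emptyset$. The transition probabilities depend only on the current profile, so this is a finite Markov chain. The argument then has two steps: (a) identify its absorbing states, and (b) show the chain has a single closed communicating class. Standard finite Markov chain theory then gives convergence, either to an absorbing state or to a unique stationary distribution; aperiodicity is automatic because every state carries a self-loop, for instance through the event that the active player observes nobody.

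For (a), I will note that a profile is absorbing exactly when no active player switches under any realizable sample. Since every subset $\widetilde{N}_i\subseteq N_i$ has positive probability, this is precisely the definition of an L-PNE. So the absorbing states are exactly the L-PNE. By Theorem~\ref{th5} and the uniqueness remark in its proof, there is at most one.

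The key step is (b): from every profile, the chain can reach a designated target with positive probability. I would build explicit finite paths using singleton samples. If a coordinator observes only a neighbor playing $a$, she plays $a$, unless $\tau_i\notin(0,1)$, in which case she is stubborn. If an anti-coordinator observes only a neighbor playing $a$, she plays $1-a$. Stubborn agents never move.

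When an L-PNE $\boldsymbol{x}^*$ exists, Theorem~\ref{th5} gives the structure: there are no CoACo edges; each coordinator component is anchored by stubborn neighbors sharing one strategy; and each anti-coordinator component has a bipartition anchored by the stubborn agents. I would then drive each component to $\boldsymbol{x}^*$ by a breadth-first sweep outward from the stubborn-anchored players. Each newly updated player observes only an already-correct neighbor, so she adopts the correct strategy. Because of the self-loops, these steps can be ordered so that players already set are never revisited. This makes $\boldsymbol{x}^*$ reachable from every state, so the chain is absorbed in it almost surely.

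When no L-PNE exists, there are no absorbing states. I would show that the set of recurrent states forms one class by exhibiting a common state reachable from everywhere, again via singleton-sample sweeps: propagate strategies from the stubborn agents (or from an arbitrary fixed seed in components without stubborn neighbors) along a fixed spanning order.

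I expect this second case to be the main obstacle. With mixed coordinator and anti-coordinator components, and possibly conflicting stubborn anchors, it is not obvious that a single canonical profile is reachable from every start. The fallback is to prove directly that any two closed classes intersect, by steering both into the same configuration on a spanning tree of each component. If that fails in general, I would settle for convergence to a stationary distribution that may depend on the closed class entered, since the statement only claims convergence to a stationary distribution.
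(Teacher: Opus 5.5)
Your proposal is correct. In the case where an L-PNE exists, it is essentially the paper's argument: the paper also routes each player along an information path back to a stubborn agent. Players update in sequence with singleton observations, which gives a positive probability of reaching the L-PNE from any state.

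You diverge in the case with no L-PNE. The paper proves irreducibility. It splits into four cases:
\begin{enumerate}
\item a CoACo edge;
\item conflicting stubborn anchors in a coordinator component;
\item a non-bipartite anti-coordinator component;
\item a bipartite anti-coordinator component with conflicting anchors.
\end{enumerate}
In each case it constructs update sequences that flip one chosen player to any strategy and then restore everyone else. This yields a unique stationary distribution with full support on the affected component.

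Your route needs only a single closed class, and the obstacle you anticipate is not real. Sweep along a fixed spanning forest rooted at stubborn agents. Each updater samples exactly her parent, whose strategy is already final: either a stubborn root, or a player updated earlier and never touched again. Her new strategy (copy if coordinator, oppose if anti-coordinator) is therefore a deterministic function of the parent's. The terminal profile $s^*$ then depends only on the forest and the stubborn strategies. It does not depend on the initial profile, on the mix of types, or on whether anchors conflict, since nobody ever samples two neighbors.

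This common reachable state needs no case analysis, and it even makes Theorem~\ref{th5} unnecessary. If an L-PNE exists, it is absorbing and can reach $s^*$, so it equals $s^*$. Otherwise the class of $s^*$ is the unique closed class. Compared with the paper, you lose full support of the stationary distribution and gain a uniform, much shorter argument.

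The one place needing care is a component of the non-stubborn subgraph with no stubborn neighbor. There an ``arbitrary fixed seed'' does not give a common target, because the seed's own strategy depends on the start.
\begin{itemize}
\item If the component contains a CoACo edge $\{i,j\}$, first steer the seed. Under singleton samples the pair cycles through all four joint states, so $x_i$ can be set to $0$.
\item You can do the same by propagating around an odd cycle of anti-coordinators, which toggles the seed.
\item A coordinator-only or bipartite anti-coordinator-only component without stubborn neighbors has two absorbing local configurations. There, both your ``at most one L-PNE'' and uniqueness of the stationary distribution fail. Convergence to \emph{some} L-PNE still holds, by sweeping from whatever the seed plays.
\end{itemize}
The paper assumes this case away with its ``without loss of generality'' that every $Co^k$ and $ACo^k$ has stubborn neighbors, so you should state the same assumption explicitly. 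With it, your final fallback (a possibly class-dependent stationary distribution) is unnecessary. That fallback holds for every finite aperiodic chain anyway, so it adds nothing beyond the L-PNE part.
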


\begin{proof}
We first prove if the game admits an L-PNE, the asynchronous best response dynamics converges to it. From Theorem \ref{th5}, each connected subnetwork resulting from the removal of stubborn agents consists entirely of coordinators or anti-coordinators if the game possesses an L-PNE. Thus, we only need to consider the global stability of the L-PNE in the coordinator subnetwork ${Co}^k$ and the anti-coordinator subnetwork ${ACo}^k$, respectively. 

Without loss of generality, we assume stubborn agents are connected to each ${Co}^k$ and ${ACo}^k$ (their existence affects the uniqueness of the L-PNE, and the proof proceeds similarly for the case where there are no stubborn agents) and denote their strategy as $a \in A$. We denote $0<p_j<1$ as the probability that player $j$ updates her strategy at each time step. For $\forall j \in {Co}^k$, $\exists\ i \in S$ with $x_i = a$ and $c_{j,1}, \dots, c_{j,l(j)} \in {Co}^k$ such that $q_{j c_{j,1}} q_{c_{j,1} c_{j,2}} \cdots q_{c_{j,l(j)} i} > 0$ (i.e., coordinator $j$ can observe stubborn agent $i$ through an information path of $l(j)$ coordinators). The probability that the asynchronous best response dynamics converges to the L-PNE (all coordinators adopt the same strategy as stubborn neighbors) in subnetwork ${Co}^k$ in finite steps from any initial state is at least $\prod_{j \in {Co}^k} ( ( p_{c_{j,l(j)}} q_{c_{j,l(j)} i} \prod_{v \neq i} ( 1 - q_{c_{j,l(j)} v} ) ) \cdots ( p_j q_{j c_{j,1}} \prod_{v \neq c_{j,1}} ( 1 - q_{c_{j,1} v} ) ) ) > 0$, where this probability describes a process in which $c_{j,l(j)}$ updates first and adopts the strategy of $i$, then $c_{j,l(j)-1}$ updates and adopts strategy $c_{j,l(j)}$, and so on until $j$ updates last and adopts strategy $c_{j,1}$, with this entire sequence repeated for each $j \in {Co}^k$. Thus, the unique L-PNE in ${Co}^k$ is globally convergent. Similarly, for $\forall j \in {ACo}_a^k$ (or ${ACo}_{1-a}^k$), $\exists\ i \in S$ with $x_i = a$ and $r_{j,1}, \dots, r_{j,l(j)} \in {ACo}^k$, such that $q_{j r_{j,1}} q_{r_{j,1} r_{j,2}} \cdots q_{r_{j,l(j)} i} > 0$. The probability that the asynchronous best response dynamics converges to the unique L-PNE in subnetwork ${ACo}^k$ in finite steps from any initial state is at least $\prod_{j \in {ACo}^k} ( ( p_{r_{j,l(j)}} q_{r_{j,l(j)} i} \prod_{v \neq i} ( 1 - q_{r_{j,l(j)} v} ) ) \cdots ( p_j q_{j r_{j,1}} \prod_{v \neq r_{j,1}} ( 1 - q_{r_{j,1} v} ) ) ) > 0$, where this probability admits an analogous interpretation to that for the subnetwork ${Co}^k$. Thus, the unique L-PNE in ${ACo}^k$ is globally convergent. 

Next, we prove the game admits a unique stationary distribution when no L-PNE exists. The following four cases are considered: (i) the game has CoACo edges; (ii) stubborn neighbors adopt the different strategies in some coordinator subnetwork;  (iii) no bipartition exists for an anti-coordinator subnetwork such that no edges connect players within the same subset; (iv) a bipartition $\{ACo_0^k, ACo_1^k\}$ (with no intra-subset edges) exists, but some anti-coordinators in one subset $ACo_0^k$ or $ACo_1^k$ have stubborn neighbors adopting both strategies 0 and 1.

For case (i), we only need to consider a connected subnetwork which possesses at least one CoACo edge after eliminating all stubborn agents. We first prove for any given strategy profile $(x_1,...,x_n) \in A^N$, there exists a positive probability that each coordinator or anti-coordinator $i$ in the subnetwork adopts any strategy $y_i \in A$, while other players keep their initial strategies. We analyze the following two cases separately: (i1) $i$ is located on a CoACo edge $ij$, and (i2) $i$ is not located on a CoACo edge. 

(i1) For $i \in Co$ and $y_i=x_j$, the probability that $i$ adopts $y_i$ and $k$ adopts $x_k$ for $\forall k \in N \backslash\{i\}$ is 
$$\big(p_i \prod_{e \neq i}\left(1-p_e\right)\big) \times\big(q_{i j} \prod_{f \neq j}\left(1-q_{i f}\right)\big)>0.$$ 

For $i \in Co$ and $y_i \neq x_j$, we only need consider the case $x_i \neq y_i$, for which $x_i = x_j$ (see Figure \ref{FIGS3} (a)). The probability that $j$ adopts $y_i$ and $k$ adopts $x_k$ for $\forall k \in N \backslash\{j\}$ (see Figure \ref{FIGS3} (b)) is
$$
\big(p_j \prod_{e \neq j}\left(1-p_e\right)\big) \times\big(q_{j i} \prod_{f \neq i}\left(1-q_{j f}\right)\big)>0.
$$
Based on the existing strategy profile (see Figure \ref{FIGS3} (b)), the probability that $i$ adopts $y_i$ (see Figure \ref{FIGS3} (c)) is
$$
\big(p_i \prod_{e \neq i}\left(1-p_e\right)\big) \times\big(q_{i j} \prod_{f \neq j}\left(1-q_{i f}\right)\big)>0.
$$
Based on the existing strategy profile (see Figure \ref{FIGS3} (c)), the probability that $j$ adopts $x_j$ (see Figure \ref{FIGS3} (d)) is
$$
\big(p_j \prod_{e \neq j}\left(1-p_e\right)\big) \times\big(q_{j i} \prod_{f \neq i}\left(1-q_{j f}\right)\big)>0.
$$
Thus, whether $x_j$ equals $y_i$ or not, the probability that $i$ changes $x_i$ to any $y_i \in A$ while keeping the strategies of other players unchanged (see Figure \ref{FIGS3} (a-d)) is positive.

\begin{figure}[H]
\centering
\includegraphics[width=0.75\textwidth]{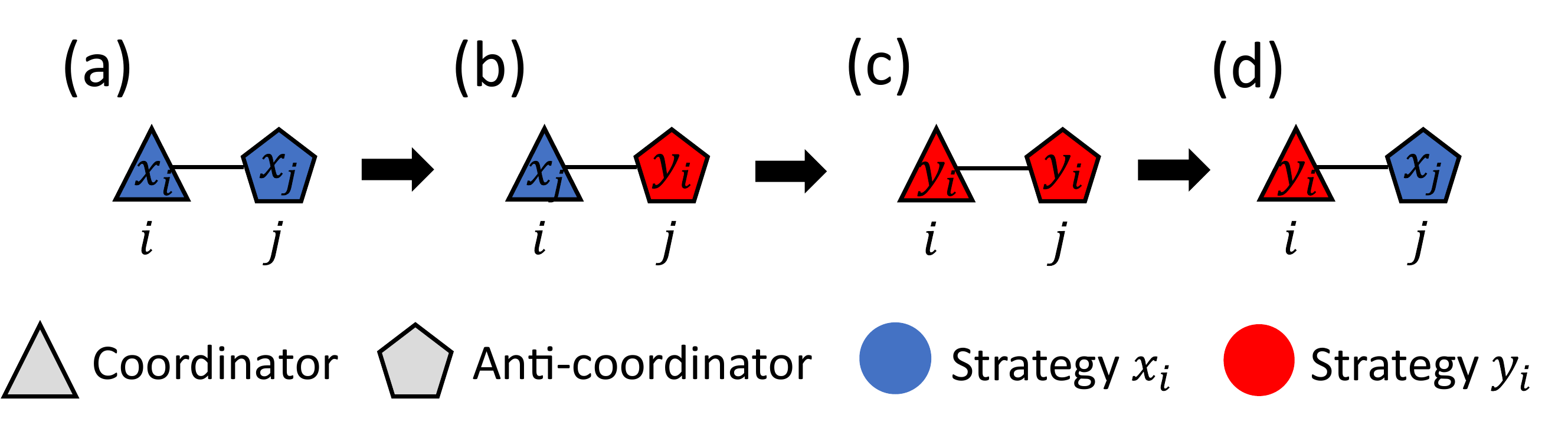}
\caption{A schematic diagram of the update path in the proof of case (i1) for $i \in Co$. Coordinators and anti-coordinators are represented by triangles and pentagons, respectively, and their strategies are $x_i$ and $x_j$. Each column ((a)-(d)) is an information path from $i$ to $j$.}\label{FIGS3}
\end{figure}

For $j \in ACo$ and $y_j \neq x_i$, the probability that $j$ adopts $y_j$ and $k$ adopts $x_k$ for $\forall k \in N \backslash\{j\}$ is
$$
\big(p_j \prod_{e \neq j}\left(1-p_e\right)\big) \times\big(q_{j i} \prod_{f \neq i}\left(1-q_{j f}\right)\big)>0.
$$

For $j \in ACo$ and $y_j = x_i$, we only need consider the case $x_j \neq y_j$, for which $x_i \neq x_j$ (see Figure \ref{FIGS4} (a)). The probability that $i$ adopts $x_j$ and $k$ adopts $x_k, \forall k \in N \backslash\{i\}$ (see Figure \ref{FIGS4} (b)) is
$$
\big(p_i \prod_{e \neq i}\left(1-p_e\right)\big) \times\big(q_{i j} \prod_{f \neq j}\left(1-q_{i f}\right)\big)>0.
$$
Based on the existing strategy profile (see Figure \ref{FIGS4} (b)), the probability that $j$ adopts $y_j$ (see Figure \ref{FIGS4} (c)) is
$$
\big(p_j \prod_{e \neq j}\left(1-p_e\right)\big) \times\big(q_{j i} \prod_{f \neq i}\left(1-q_{j f}\right)\big)>0.
$$
Based on the existing strategy profile (see Figure \ref{FIGS4} (c)), the probability that $i$ adopts $x_i$ (see Figure \ref{FIGS4} (d)) is
$$
\big(p_i \prod_{e \neq i}\left(1-p_e\right)\big) \times\big(q_{i j} \prod_{f \neq j}\left(1-q_{i f}\right)\big)>0.
$$
Thus, whether $x_i$ equals $y_j$ or not, the probability that $j$ changes $x_j$ to any $y_j \in$ $A$ while keeping the strategies of other players unchanged (see Figure \ref{FIGS4} (a-d)) is positive.
\begin{figure}[H]
\centering
\includegraphics[width=0.75\textwidth]{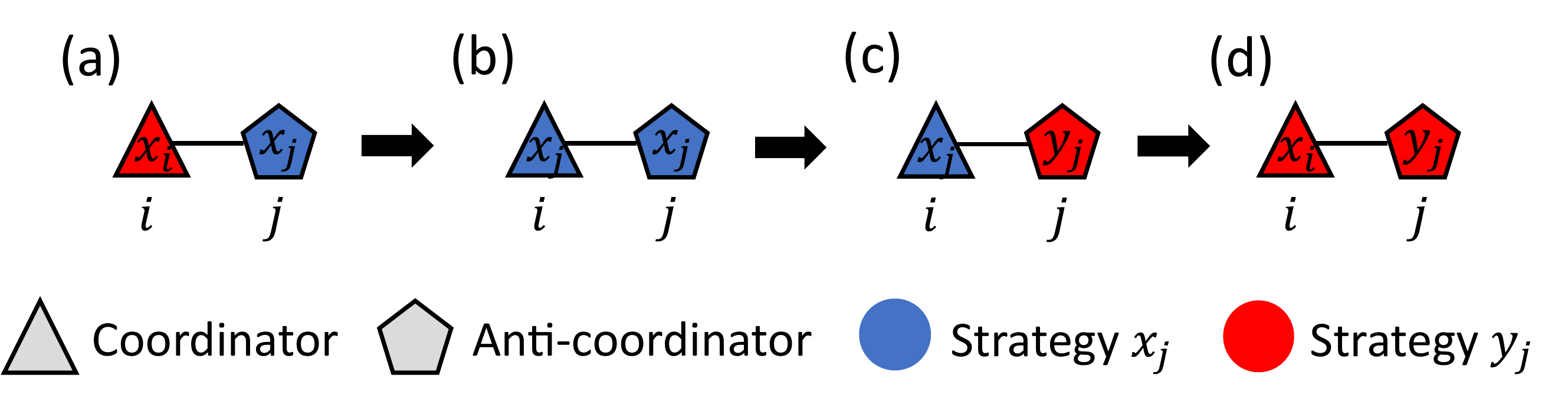}
\caption{A schematic diagram of the update path in the proof of case (i2) for $\forall j \in ACo$. Coordinators and anti-coordinators are represented by triangles and pentagons, respectively, and their strategies are $x_i$ and $x_j$. Each column ((a)-(d)) is an information path from $j$ to $i$.}\label{FIGS4}
\end{figure}

(i2) For $i \in Co$, there exists $j \in ACo$ and $\left\{c_1, \ldots, c_{l(i, j)}\right\} \subseteq Co$, such that $q_{i c_1} q_{c_1 c_2} \ldots q_{c_{l(i, j)} j}>0$, since the game has CoACo edges. This defines an information path from $i$ to $j$. In the later proof, we abbreviate $l(i, j)$ as $l$. We only consider the case $x_i=x_{c_1}=...=x_{c_l}=x_j$ (see Figure \ref{FIGS5} (a)), with analogous reasoning for other cases. The probability that $i$ adopts $x_i, c_k$ adopts $x_{c_k}, \forall k \in\{1, \ldots, l\}$, and $j$ adopts $y_i$ (see Figure \ref{FIGS5} (b)) is positive from the proof of (i1). Based on the existing strategy profile (see Figure \ref{FIGS5} (b)), the probability that $c_l$ adopts $y_i$ (see Figure \ref{FIGS5} (c)) is
$$
\big(p_{c_l} \prod_{e \neq c_l}\left(1-p_e\right)\big) \times\big(q_{c_l j} \prod_{f \neq j}\left(1-q_{c_l f}\right)\big)>0.
$$
Based on the existing strategy profile (see Figure \ref{FIGS5} (c)), the probability that $c_{l-1}$ adopts the strategy of $c_l$ is positive. Similarly, the probability that $c_{k-1}$ adopts the strategy of $c_k$ is also positive, $\forall k \in \{2, \ldots, l-1\}$. The same goes for $c_1$ and $i$. Therefore, the probability that all coordinators in this information path adopt $y_i$ (see Figure \ref{FIGS5} (d)) is positive. Based on the existing strategy profile (see Figure \ref{FIGS5} (d)), the probability that $j$ adopts $x_j$ (see Figure \ref{FIGS5} (e)) is positive from the proof of (i1). There clearly exists a positive probability that the strategy of adopting the right neighbor is adopted sequentially from $c_l$ to $c_1$ (see Figure \ref{FIGS5} (f)). Thus, the probability that $i \in Co$ changes $x_i$ to $y_i \in A$ while keeping the strategies of other players unchanged (see Figure \ref{FIGS5} (a-f)) is positive.

\begin{figure}[H]
\centering
\includegraphics[width=0.8\textwidth]{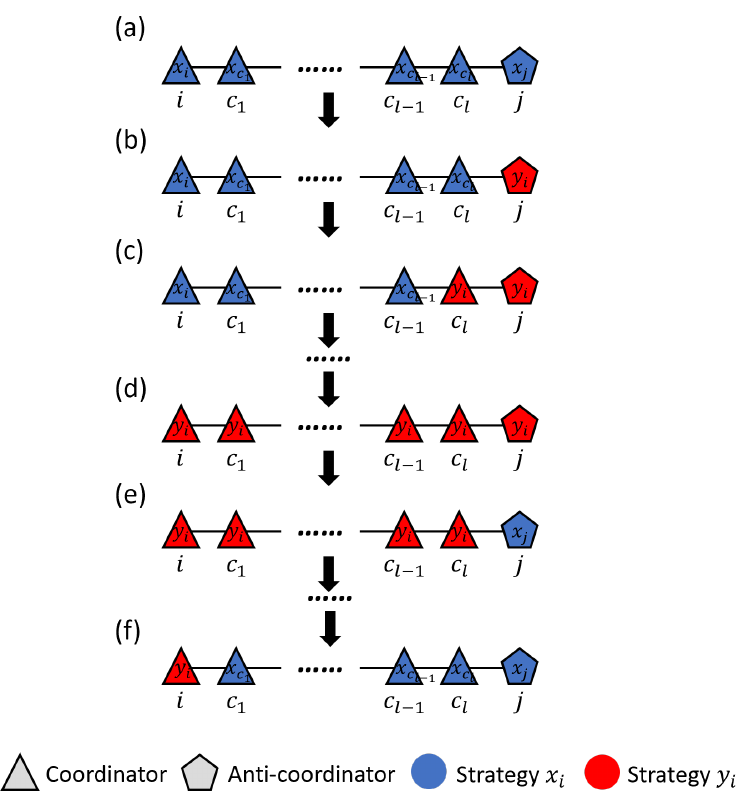}
\caption{A schematic diagram of the update path in the proof of case (i2) for $\forall i \in Co$. Coordinators and anti-coordinators are represented by triangles and pentagons, respectively, and their strategies are $ x_i, x_{c_1}, \ldots, x_{c_l}$, and $ x_j $. Each row ((a)-(f)) represents an information path from $i$ to $j$ passing through $\{c_1, \ldots, c_l\}$, except that the strategies of the players are different.}\label{FIGS5}
\end{figure}

For $j \in ACo$, there exists $i \in Co$ and $\left\{r_1, \ldots, r_{l}\right\} \subseteq ACo$, such that $q_{j r_1} q_{r_1 r_2} \ldots q_{r_l i}>0$, since the game has CoACo edges. This then defines an information path from $j$ to $i$. 
We only consider the case $l$ is odd, $x_j=x_{2k}$ for $\forall k \in \{ 1,...,(l-1)/2 \}$, and $x_{2k-1}=x_{2k+1}$ for $\forall k \in \{ 1,...,(l-1)/2 \}$ (see Figure \ref{FIGS6} (a)), with analogous reasoning for other cases. The probability that $j$ adopts $x_j$, $r_k$ adopts $x_{r_k}, \forall k \in\{1, \ldots, l\}$, and $i$ adopts $y_j$ (see Figure \ref{FIGS6} (b)) is positive from the proof of (i1). Based on the existing strategy profile (see Figure \ref{FIGS6} (b)), the probability that $r_l$ adopts $y_j' \in A \backslash \{ y_j \}$ (see Figure \ref{FIGS6} (c)) is
$$
\big(\,p_{r_l} \prod_{e \neq r_l} (1-p_e) \big) 
\times 
\big(q_{r_l i} \prod_{f \neq i} (1-q_{r_l f}) \big) > 0.
$$
Based on the existing strategy profile (see Figure \ref{FIGS6} (c)), the probability that $r_{l-1}$ adopts strategy $y_j$ is positive. Similarly, the probability that each $r_{k-1}$ adopts the different strategy of $r_k$ is positive, $\forall k \in \{2, \ldots, l-1\}$. The same goes for $r_1$ and $j$. Thus, the probability that $r_{2k-1}$ adopts $y_j'$ for $\forall k \in\{1, ...,(l+1) / 2\}$ while $i, j$ and $r_{2k}$ adopt $y_j$ for $\forall k \in\{1, ...,(l-1) /2\}$ (see Figure \ref{FIGS6} (d)) is also positive. Based on the existing strategy profile (see Figure \ref{FIGS6} (d)), we apply the aforementioned procedure sequentially to $r_1$ through $r_l$, thereby ensuring the initial strategies of other anti-coordinators remain unchanged (see Figure \ref{FIGS6} (e)). Similarly, the probability that $i$ adopts $x_i$ (see Figure \ref{FIGS6} (f)) is positive from the proof of (i1). Thus, the probability that $j \in ACo$ changes $x_j$ to $y_j \in A$ while keeping the strategies of other players unchanged (see Figure \ref{FIGS6} (a-f)) is positive.

\begin{figure}[H]
\centering
\includegraphics[width=0.8\textwidth]{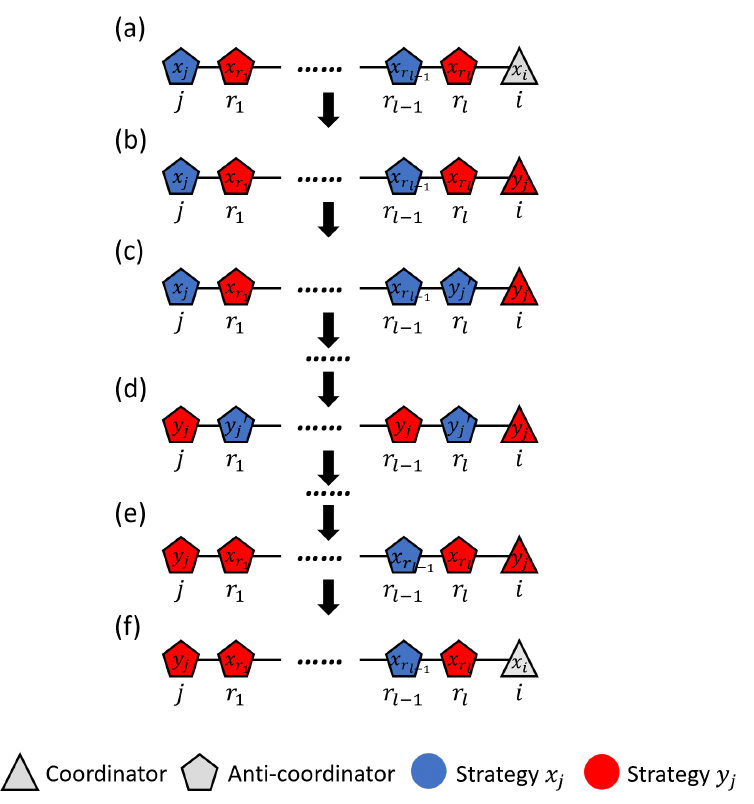}
\caption{A schematic diagram of the update path in the proof of case (i2) for $\forall j \in ACo$. Anti-coordinators and coordinators are represented by pentagons and triangles, respectively, and their strategies are $ x_j, x_{r_1}, \ldots, x_{r_l} $, and $x_i$. Each row ((a)-(f)) represents an information path from $ j $ to $ i $ passing through $ \{r_1, \ldots, r_l\} $, except that the strategies of the players are different.}\label{FIGS6}
\end{figure}

Combining cases (i1) and (i2), there exists a positive probability that all coordinators and anti-coordinators in the subnetwork containing the CoACo edges adopt any arbitrary strategy. Given the arbitrariness of $i \in Co \cup ACo$ and $y_i \in A$, we conclude that the strategy support is exactly $A$. This Markov chain is evidently finite-state and aperiodic, from which it follows that there exists a unique stationary distribution $\pi=\left(\pi_k\right)_{k \in Space}=$ $\left(m_{k k}^{-1}\right)_{k \in Space}$, where $m_{k k}=\sum_{n=1}^{\infty} n f_{k k}^{(n)}$ and $f_{k_1 k_2}^{(n)}$ denotes the probability of first returning to $k_2$ at step $n$ when starting from $k_1$. Thus, the game possesses a unique stationary distribution.

For case (ii), stubborn neighbors adopt the different strategies in some coordinator subnetwork. We invoke Theorem 2 in \cite{cao2024discrete} to establish that there always exists a positive probability for any coordinator to adopt an arbitrary strategy. This result further implies the existence of a unique stationary distribution.

For case (iii), no bipartition exists in the anti-coordinator subnetwork such that no edges connect any two players in the same subset, which implies the subnetwork is non-bipartite, and thus an odd-length cycle necessarily exists. Since an odd-length cycle cannot be 2-colored, for any strategy profile, there always exist a pair of anti-coordinator neighbors adopting the same strategy. As this pair of anti-coordinator neighbors updates their strategies simultaneously, and there is a positive probability that they only observe each other’s strategy information, there exists a positive probability for this pair to switch to any arbitrary strategy simultaneously. Noting that the anti-coordinator subnetwork is connected, we follow the update procedure illustrated in Figure \ref{FIGS6}, designating this pair of anti-coordinators as $i$ in the figure. By analogy, there exists a positive probability for all other anti-coordinators to adopt any arbitrary strategy, which further implies the existence of a unique stationary distribution.

For case (iv), the anti-coordinator subnetwork admits a bipartition $\{ACo_0^k, ACo_1^k\}$ with no intra-subset edges exists, yet some anti-coordinators in one subset (e.g., $ACo_0^k$) have stubborn neighbors adopting both strategies 0 and 1. In the scenario where there exists some $j \in ACo_0^k$ whose stubborn neighbors adopt distinct strategies, there is evidently a positive probability for $j$ to adopt any arbitrary strategy. Furthermore, there exists a positive probability for all other anti-coordinators whether directly or indirectly observing $j$ to adopt any arbitrary strategy as well. In the complementary scenario where no such $j \in ACo_0^k$ has stubborn neighbors with distinct strategies, the non-existence of an L-PNE implies that for any strategy profile, either there exists a pair of anti-coordinator neighbors adopting  the same strategy or there exists some anti-coordinator $r$ that adopts the same strategy as its some stubborn neighbor. The conclusion for the former case holds obviously, so we only need to prove the latter case. For any $i \in ACo$, if $i=r$, there exists a positive probability for $i$ to switch its strategy by solely observing the strategies of its stubborn neighbors. If $i \neq r$, there exists an information path of odd (or even) length through which $i \in ACo_0^k$ (or $ACo_1^k$) indirectly observes the strategies of stubborn neighbors adopting both Strategy 0 and Strategy 1. By following the update procedure illustrated in Figure \ref{FIGS6} and designating the stubborn neighbors as $j$ in the figure, there exists a positive probability for $i$ to adopt any arbitrary strategy. From this, it follows that a unique stationary distribution exists.
\end{proof}

Corollary \ref{co4} also holds for the synchronous best response dynamics. This is because under synchronous updating regime, there exists a positive probability that a player keeps his/her original strategy when failing to observe the strategies of any neighbors. This implies that update paths corresponding to asynchronous updating are constructible in the synchronous updating regime, and the conclusion follows naturally.

\section{Agent-based simulations}
\subsection{CRS game generation}
Consider a network with $n_C$ conformists, $n_R$ rebels, $n_{S_0}$ stubborn agents with dominant strategy $0$, and $n_{S_1}$ stubborn agents with dominant strategy $1$. In the following, we demonstrate how to add edges to this network such that the homophily and heterophily indices are sufficiently close to predefined $h_{XY}$ for all $X,Y \in \{C, R, S_0, S_1\}$, and control the network average degree $\langle d \rangle$. Following the idea of Lemma 1 in \cite{zhang2018fashion}, we use connection probabilities between different types of players to adjust the homogeneity and heterogeneity indices. Suppose that a type $X$ player and a type $Y$ player are connected with probability $p_{XY}$. Then the expected numbers of $XX$ edges and $XY$ edges are $K_{XX} = \binom{n_X}{2} p_{XX}$ and $K_{XY} = n_X n_Y p_{XY}$, respectively, where $n_X$ is the number of players with type $X$. 

By definition, we have
$$
h_{XX}= \frac{(n_X-1)p_{XX}}{(n_X-1)p_{XX} + \sum_{Z \in \{C, R, S_0, S_1\} \setminus \{X\}} n_Zp_{XZ}},
$$
$$
h_{XY}= \frac{n_Yp_{XY}}{(n_X-1)p_{XX} + \sum_{Z \in \{C, R, S_0, S_1\} \setminus \{X\}} n_Zp_{XZ}}.
$$
For given $n_X$, $h_{XY}$, and $\langle d \rangle$, we can take $p_{CC}=\frac{\gamma h_{CC}}{n_C-1}$, $p_{RR}=\frac{\gamma n_C h_{CR} h_{RR}}{n_R h_{RC} (n_R-1)}$, $p_{CR}=p_{RC}=\frac{\gamma h_{CR}}{n_R}$, $p_{CS_0}=\frac{\gamma h_{CS_0}}{n_{S_0}}$, $p_{CS_1}=\frac{\gamma h_{CS_1}}{n_{S_1}}$, $p_{RS_0}=\frac{\gamma n_C h_{CR} h_{RS_0}}{n_R h_{RC} n_{S_0}}$, and $p_{RS_1}=\frac{\gamma n_C h_{CR} h_{RS_1}}{n_R h_{RC} n_{S_1}}$ to make the expected homophily and heterophily indices of the generated network are exactly $h_{XY}$ for all $X,Y \in \{C, R, S_0, S_1\}$, where $\gamma = \frac{n h_{RC} \langle d \rangle }{n_C [ h_{CR} ( h_{RR} + 2h_{RC} + 2(h_{RS_0} + h_{RS_1}) ) + h_{RC} ( h_{CC} + 2h_{CS_0} + 2h_{CS_1} ) ]}>0$ is a scaling factor that modulates the average degree $\langle d \rangle$. 
Specifically, when $n$ is sufficiently large and  $n_C=n_R= n_{S_0}=n_{S_1}=\gamma$, we have $h_{XY} \approx p_{XY}$. 

In the simulations of Figure \ref{fig2} and Figure \ref{fig3}, we set $n_C = n_R = n_{S_0} = n_{S_1}=250$ for each subfigure and generate the networks multiple times to ensure that the homophily and heterophily indices of the generated network, denoted by $\hat h_{XY}$, satisfy $|\hat{h}_{XY} - h_{XY}| < 0.01$ for all $X,Y \in \{C, R, S_0, S_1\}$ (see S2.C and S2.D for details).


\subsection{HP game generation}
We generate an HP game based on its corresponding simplified CRS game. We first construct a CRS game via the CRS game generation method introduced above, then relabel conformists as coordinators and rebels as anti-coordinators, and finally set their payoff functions. Specifically, we assume that the payoff matrix for a player has the form
\begin{equation}\label{eq16}
\bordermatrix{
   & 0 & 1 \cr
0 & \varepsilon_0  & 0 \cr
1 & 0 & \varepsilon_1 \cr
},
\end{equation}
where $\varepsilon_0$ and $\varepsilon_1$ are independently and identically distributed random variables. For coordinators, we assume $\varepsilon_0$ and $\varepsilon_1$ independently follow a uniform distribution on $(z, 1)$ with $0<z<1$; for anti-coordinators, $\varepsilon_0$ and $\varepsilon_1$ independently follow a uniform distribution on $(-1, -z)$ with $0<z<1$. For stubborn agents with dominant strategy $0$ (or $1$), $\varepsilon_0$ is uniformly distributed on $(z, 1)$ (or $(-1, -z)$), and $\varepsilon_1$ is independently uniformly distributed on $(-1, -z)$ (or $(z, 1)$) with $0<z<1$. This completes the generation of the network HP game. 

For a given $z$, the expected value and variance of the threshold $\tau$ for coordinators and anti-coordinators can be easily derived. Proposition \ref{proposition1} shows that the threshold has an expected value of $\frac{1}{2}$ and its variance increasing monotonically as $z$ decreases.

\begin{proposition}\label{proposition1}
For a given $z \in (0,1)$, the threshold for coordinators and anti-coordinators has an expected value $\mathbb{E}(\tau(z)) = \frac{1}{2}$, and its variance is $D(\tau(z)) = \frac{3}{4} + \frac{1}{(1-z)^2} \left(z^2 \ln\frac{1+z}{2z} - \ln\frac{2}{1+z}\right)$ with $D'(\tau(z))<0$.
\end{proposition}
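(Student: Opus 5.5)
The plan is to first identify the distribution of the threshold explicitly. With the payoff matrix \eqref{eq16} we have $a_i=\varepsilon_0$, $b_i=c_i=0$, $d_i=\varepsilon_1$, so $\delta_i=\varepsilon_0+\varepsilon_1$ and $\tau=\frac{\varepsilon_1}{\varepsilon_0+\varepsilon_1}$. For coordinators both $\varepsilon$'s lie in $(z,1)$. For anti-coordinators both lie in $(-1,-z)$, and then the signs cancel in the ratio. So in both cases $\tau\overset{d}{=}\frac{V}{U+V}$ with $U,V$ i.i.d.\ uniform on $(z,1)$, and one computation covers both player types.

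For the mean I would use the exchangeability of $(U,V)$. It gives $\tau\overset{d}{=}1-\tau$, hence $\mathbb{E}(\tau)=\frac{1}{2}$ with no integration. For the variance I would use $\tau^2+(1-\tau)^2=1-2\tau(1-\tau)$ together with the same symmetry. This gives $2\mathbb{E}(\tau^2)=1-2\mathbb{E}\big(\tfrac{UV}{(U+V)^2}\big)$, and therefore
$$D(\tau)=\tfrac14-\mathbb{E}\Big(\tfrac{UV}{(U+V)^2}\Big)=\tfrac14-\frac{1}{(1-z)^2}\int_z^1\!\!\int_z^1\frac{uv}{(u+v)^2}\,dv\,du .$$
The inner integral is elementary. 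Writing $v=(u+v)-u$ gives $\int \frac{uv}{(u+v)^2}dv=u\ln(u+v)+\frac{u^2}{u+v}$. Evaluating between $z$ and $1$ and then integrating in $u$ produces terms of the form $\int u\ln\frac{u+1}{u+z}\,du$ and $\int \frac{u^2}{u+1}-\frac{u^2}{u+z}\,du$. These reduce to polynomials plus $\ln 2$, $\ln(1+z)$ and $\ln(2z)$ terms. I would then collect everything and check that the polynomial parts combine with $\frac14$ into $\frac34$. The logarithms should assemble into $z^2\ln\frac{1+z}{2z}-\ln\frac{2}{1+z}$, all divided by $(1-z)^2$. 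As sanity checks I would verify the limits: $D\to 0$ as $z\to1$, and the value at $z\to0$ should be $\frac34-\ln 2$, which is the variance of $V/(U+V)$ for standard uniforms.

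The main obstacle is the sign of $D'(\tau(z))$. Set $g(z)=z^2\ln\frac{1+z}{2z}-\ln\frac{2}{1+z}$. Then $D'=\frac{g'(z)(1-z)+2g(z)}{(1-z)^3}$, so it suffices to show $h(z):=(1-z)g'(z)+2g(z)<0$ on $(0,1)$. Here $g'(z)=2z\ln\frac{1+z}{2z}+\frac{z^2}{1+z}-z+\frac{1}{1+z}$. I would first check that $h(1)=0$, because $g(1)=0$. The idea is then to show $h$ is increasing, i.e.\ $h'>0$, by differentiating once or twice more until a manifestly signed expression appears, such as a combination of $\ln\frac{1+z}{2z}>0$ and rational terms.

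If that differentiation chain turns out messy, I would switch to a probabilistic route. Rescale $U=z+(1-z)A$, $V=z+(1-z)B$ with $A,B$ uniform on $(0,1)$. Then $\tau-\frac12=\frac{(1-z)(B-A)}{2(2z+(1-z)(A+B))}$, and for each fixed $(A,B)$ this is pointwise decreasing in $z$ in absolute value, since $\frac{1-z}{2z+(1-z)s}$ decreases in $z$. Hence $\mathbb{E}(\tau-\frac12)^2$ is strictly decreasing in $z$. This route is cleaner and would be my preferred proof of monotonicity.
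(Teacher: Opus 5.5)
Your proposal is correct. For the mean and the variance it follows the paper's route. The mean uses the same exchangeability of $\varepsilon_0,\varepsilon_1$. Your $\tau^2+(1-\tau)^2=1-2\tau(1-\tau)$ reduction is only a cosmetic variant of the paper's direct evaluation of $\int\!\!\int \varepsilon_1^2/(\varepsilon_0+\varepsilon_1)^2$, and it does yield the stated closed form. The $u\ln\frac{u+1}{u+z}$ terms integrate to exactly $\tfrac{(1-z)^2}{2}$, and $\int_z^1\bigl(\frac{u^2}{u+1}-\frac{u^2}{u+z}\bigr)du=-(1-z)^2-g(z)$.

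The genuine difference is the sign of $D'$. The paper writes down the numerator $2z\ln\frac{1+z}{2z}+\frac{(1-z)^2}{1+z}-2\ln\frac{2}{1+z}$ and simply asserts it is negative; you supply an argument. Your $h$ is exactly this numerator, because $g'(z)=2z\ln\frac{1+z}{2z}+\frac{1-z}{1+z}$. Your differentiation chain closes as follows. The first derivative $h'(z)=2\ln\frac{1+z}{2z}-\frac{(1-z)(3+z)}{(1+z)^2}$ is not manifestly signed, but it vanishes at $z=1$. The second derivative is $h''(z)=-\frac{2(1-z)^2}{z(1+z)^3}<0$. Hence $h'>0$ and $h<h(1)=0$ on $(0,1)$.

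Your coupling route avoids the closed form altogether. It shows that the variance decreases in $z$ for any exchangeable law of $(A,B)$ on $[0,1]^2$, which is more conceptual and more robust. As stated, though, pointwise monotonicity of $|\tau-\frac12|$ only gives that $D$ is strictly decreasing, not $D'(z)<0$ at every $z$. To get the strict derivative inequality, differentiate under the expectation:
$$\frac{d}{dz}\bigl(\tau-\tfrac12\bigr)^2=-\frac{(1-z)(B-A)^2}{(2z+(1-z)(A+B))^3}.$$
This is bounded by $1/(8z_0^3)$ for $z\ge z_0>0$, so dominated convergence gives $D'(z)=-\mathbb{E}\bigl[\frac{(1-z)(B-A)^2}{(2z+(1-z)(A+B))^3}\bigr]<0$.
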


\begin{proof}
Recall that $\tau(z) = \frac{\varepsilon_1(z)}{\varepsilon_0(z) + \varepsilon_1(z)}$. By the symmetry of the i.i.d. random variables $\varepsilon_0$ and $\varepsilon_1$, it follows that 
$
\mathbb{E}(\tau) = \mathbb{E}\left(\frac{\varepsilon_1}{\varepsilon_0 + \varepsilon_1}\right) = \mathbb{E}\left(\frac{\varepsilon_0}{\varepsilon_0 + \varepsilon_1}\right).
$
Given $\frac{\varepsilon_1}{\varepsilon_0 + \varepsilon_1} + \frac{\varepsilon_0}{\varepsilon_0 + \varepsilon_1} = 1$, taking the expectation of both sides gives
$
\mathbb{E}\left(\frac{\varepsilon_1}{\varepsilon_0 + \varepsilon_1}\right) + \mathbb{E}\left(\frac{\varepsilon_0}{\varepsilon_0 + \varepsilon_1}\right) = 1
$,
and thus $\mathbb{E}(\tau) = \frac{1}{2}$.

From the variance formula $D(\tau) = \mathbb{E}(\tau^2) - \left[\mathbb{E}(\tau)\right]^2$, we have $D(\tau) = \frac{1}{(1-z)^2} \int_z^1 \int_z^1 \frac{\varepsilon_1^2}{(\varepsilon_0 + \varepsilon_1)^2} d\varepsilon_0 d\varepsilon_1 - \left(\frac{1}{2}\right)^2$. Substituting the evaluated double integral gives $D(\tau) = \frac{3}{4} + \frac{1}{(1-z)^2} \left(z^2 \ln\frac{1+z}{2z} - \ln\frac{2}{1+z}\right)$, and its derivative $D'(\tau) = \frac{2z\ln\frac{1+z}{2z}+\frac{(1-z)^2}{1+z}-2\ln\frac{2}{1+z}}{(1-z)^3}$. For all $z\in(0,1)$, the numerator of $D'(\tau)$ is negative and the denominator $(1-z)^3$ is positive, hence $D'(\tau)<0$.
\end{proof}

 
\subsection{Simulation details for Figure 2}
In the simulations of Figure 2, we set $n_C = n_R = n_{S_0} = n_{S_1}=250$, $h_{CS_0}=0.3$, $h_{CS_1}=h_{RS_0}=h_{RS_1}=0.1$, and the average degree $\langle d \rangle=16$ for all subfigures. The specific settings of the scaling parameter $\gamma$, edge probabilities $p_{XY}$ (for all $X,Y \in \{C, R, S_0, S_1\}$), and initial points for each subfigure are detailed below.

(A) Scaling parameter: $\gamma=32$. Edge probabilities: $p_{CC}=0.0643$, $p_{RR}=0.0386$, $p_{CR}=0.0128$, $p_{CS_0}=0.0384$, $p_{CS_1}=0.0128$, $p_{RS_0}=0.0064$, and $p_{RS_1}=0.0064$. Initial points: $(0.2,0.2)$, $(0.2,0.8)$, $(0.7,0.2)$, $(0.7,0.8)$.

(B)  Scaling parameter: $\gamma=34.0426$. Edge probabilities: $p_{CC}=0.0547$, $p_{RR}=0.0164$, $p_{CR}=0.0272$, $p_{CS_0}=0.0409$, $p_{CS_1}=0.0136$, $p_{RS_0}=0.0054$, and $p_{RS_1}=0.0054$. Initial points: $(0.2,0.2)$, $(0.2,0.3)$, $(0.2,0.4)$.

(C) Scaling parameter: $\gamma=24.6154$. Edge probabilities: $p_{CC}=0.0099$, $p_{RR}=0.0297$, $p_{CR}=0.0492$, $p_{CS_0}=0.0295$, $p_{CS_1}=0.0098$, $p_{RS_0}=0.0098$, and $p_{RS_1}=0.0098$. Initial points: $(0.2,0.2)$, $(0.2,0.3)$, $(0.2,0.4)$.

(D) Scaling parameter: $\gamma=30.6849$. Edge probabilities: $p_{CC}=0.0246$, $p_{RR}=0.0070$, $p_{CR}=0.0491$, $p_{CS_0}=0.0368$, $p_{CS_1}=0.0123$, $p_{RS_0}=0.0070$, and $p_{RS_1}=0.0070$. Initial points: $(0.1,0.1)$, $(0.2,0.2)$, $(0.3,0.3)$.

Based on the aforementioned parameters, we generate four CRS games with given homogeneity and heterogeneity indices. Subsequently, we employ the HP game generation method, setting $z = 0.5$ in the payoff matrix (\ref{eq16}) to further generate the HP games presented in the first row.

\subsection{Simulation details for Figure 3}
In the simulations of Figure 3, we set $n_C = n_R = n_{S_0} = n_{S_1}=250$, $h_{CS_0}=0.3$, and the average degree $\langle d \rangle=16$ for all subfigures. In subfigures B–D, $h_{CS_1}=h_{RS_0}=h_{RS_1}=0.1$, whereas in subfigure A these parameters are all set to 0. The specific settings of the scaling parameter $\gamma$ and edge probabilities $p_{XY}$ (for all $X,Y \in \{C, R, S_0, S_1\}$) are detailed below.

(A) Scaling parameter: $\gamma=0$. Edge probabilities: $p_{CC}=0.0782$, $p_{RR}=0.1117$, $p_{CR}=0$, $p_{CS_0}=0.0335$, $p_{CS_1}=0$, $p_{RS_0}=0$, and $p_{RS_1}=0$.

(B) Scaling parameter: $\gamma=32$. Edge probabilities: $p_{CC}=0.0643$, $p_{RR}=0.0386$, $p_{CR}=0.0128$, $p_{CS_0}=0.0384$, $p_{CS_1}=0.0128$, $p_{RS_0}=0.0064$, and $p_{RS_1}=0.0064$.

(C) Scaling parameter: $\gamma=24.6154$. Edge probabilities: $p_{CC}=0.0099$, $p_{RR}=0.0297$, $p_{CR}=0.0492$, $p_{CS_0}=0.0295$, $p_{CS_1}=0.0098$, $p_{RS_0}=0.0098$, and $p_{RS_1}=0.0098$.

(D) Scaling parameter: $\gamma=30.6849$. Edge probabilities: $p_{CC}=0.0246$, $p_{RR}=0.0070$, $p_{CR}=0.0491$, $p_{CS_0}=0.0368$, $p_{CS_1}=0.0123$, $p_{RS_0}=0.0070$, and $p_{RS_1}=0.0070$.

\subsection{Error analysis}

\subsubsection*{HP game $\&$ CRS game}
To test the robustness of the simplified method, we compare the equilibrium strategy frequencies of the HP game and its simplified CRS game across diverse networks. We adopt network structures with the same number of players and homophily indices as those in Figure 2, i.e., the structures specified by the quadruples $(h_{CC}, h_{RR}, h_{CR}, h_{RC}) = (0.5, 0.6, 0.1, 0.2)$, $(0.4, 0.3, 0.2, 0.5)$, $(0.2, 0.1, 0.4, 0.7)$, and $(0.1, 0.3, 0.5, 0.5)$, which yield 2, 1, 0, and 1 stable fixed points, respectively. For each network, we simulate both the HP game and its simplified CRS game from 16 different initial conditions $(x_c,x_r)$, where $x_c$ and $x_r$ both evenly sampled from 0.2 to 0.8 (i.e., each takes 4 values). For a given initial point $(x_c,x_r)$, the difference between the fixed points is defined as the absolute error $\varepsilon(x_{c}, x_{r}) = \max\{|\tilde{x}_c - \bar{x}_c|, |\tilde{x}_r - \bar{x}_r|\}$, where $(\tilde{x}_c, \tilde{x}_r)$ and $(\bar{x}_c, \bar{x}_r)$ are the convergence points of HP game and CRS game, respectively. Finally, the error for each network is taken as the average of all initial points.

We adjust the range of uniform distribution followed by the diagonal values in the payoff matrix (\ref{eq16}) to modulate the variance of the threshold, and analyzing the impact of threshold variance on approximation errors in the simplified CRS game. Specifically, we set $z$ to 0, 0.1, 0.2, 0.3, 0.4, and 0.5, with the corresponding variances (calculated based on Proposition \ref{proposition1}) of 0.0569, 0.0330, 0.0205, 0.0129, 0.0080, and 0.0047 respectively. Figure \ref{FIGS7} shows the variation of error values in approximating HP games with CRS games, as the threshold variance changes across different networks. As shown in the figure, the error value tends to increase with increasing variance. Furthermore, for large values of variance, the error values for the network with two stable fixed points (red dots) are significantly higher than those for other network.

\begin{figure}[H]
\centering
\includegraphics[width=1\textwidth]{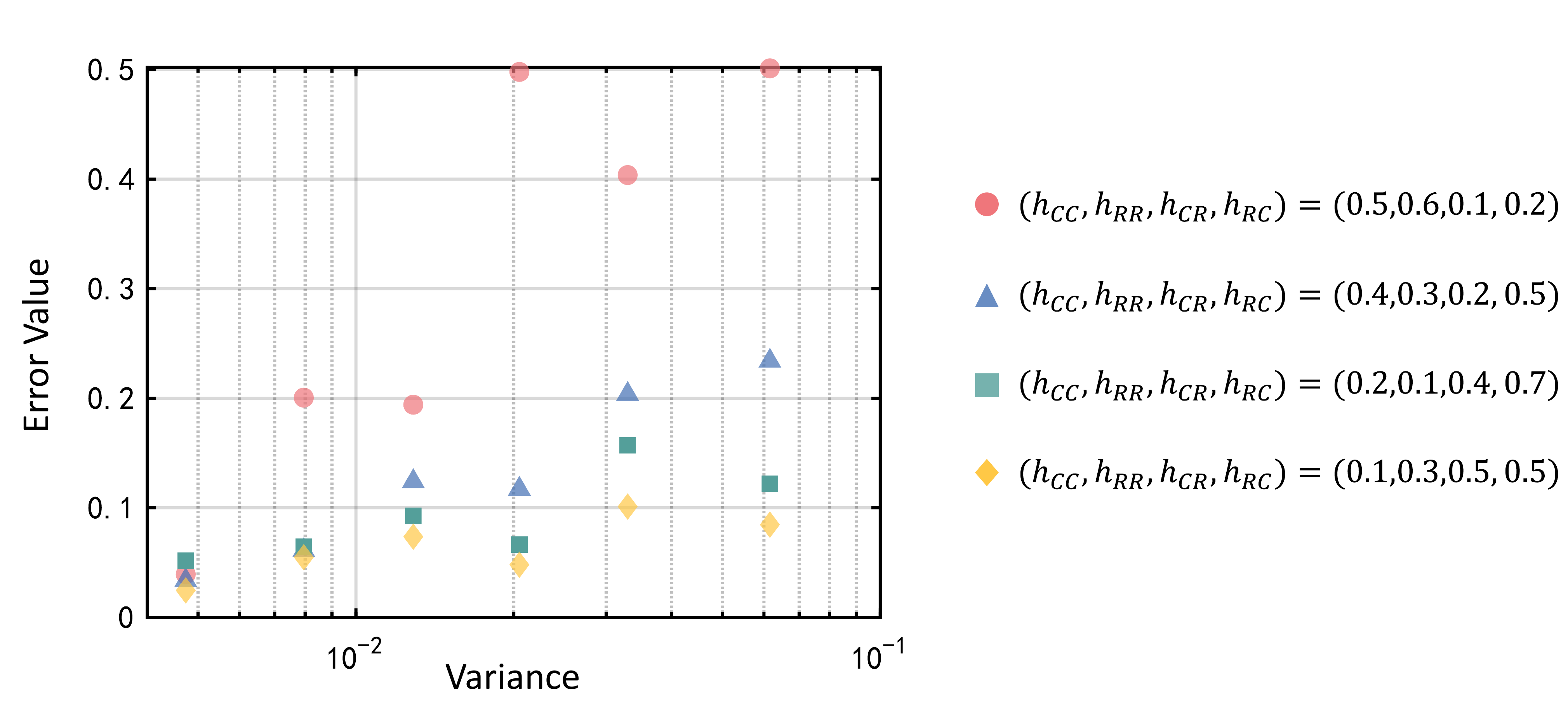}
\caption{Differences between the equilibrium strategy frequencies of the HP game and its simplified CRS game across diverse networks. The $x$-axis (i.e., the variance of the threshold) is presented on a logarithmic scale. Red dots, blue triangles, green squares, and yellow diamonds denote the parameter sets $(h_{CC}, h_{RR}, h_{CR}, h_{RC}) = (0.5, 0.6, 0.1, 0.2)$, $(0.4, 0.3, 0.2, 0.5)$, $(0.2, 0.1, 0.4, 0.7)$, and $(0.1, 0.3, 0.5, 0.5)$, respectively.}\label{FIGS7}
\end{figure}

\subsubsection*{CRS game $\&$ ODE (\ref{eq2})}
To test the robustness of the deterministic approximation method, we calculate the difference between the fixed points of the CRS game and ODE (\ref{eq2}) for different combinations of homophily and heterophily indices \citep[see e.g.][]{PEI2024dynamic}. Specifically, we fix two sets of heterophily indices: the first is $(h_{CS_0},h_{CS_1},h_{RS_0},h_{RS_1})=(0.2,0.1,0.1,0.1)$, and the second is $(h_{CS_0},h_{CS_1},h_{RS_0},h_{RS_1})=(0.3,0.1,0.1,0.1)$. For the first (second) set, $h_{CC}$ is varied from 0 to 0.6 (0.5) with interval 0.1, whereas $h_{RR}$ is varied from 0 to 0.7 with interval 0.1 (so their are 56 or 48 combinations in total). For each combination, we generate a CRS network and run from 16 different initial points $(x_c,x_r)$, where $x_c$ and $x_r$ are evenly sampled from the set $\{0.2, 0.4, 0.6, 0.8\}$. For each initial point $(x_c,x_r)$, the difference between the fixed points of ODE (\ref{eq2}) and CRS game is defined as the absolute error $\varepsilon(x_{c}, x_{r}) = \max\{|\hat{x}_c - \bar{x}_c|, |\hat{x}_r - \bar{x}_r|\}$, where $(\hat{x}_c, \hat{x}_r)$ and $(\bar{x}_c, \bar{x}_r)$ are the convergence points of ODE (\ref{eq2}) and CRS game, respectively. Finally, the error for a combination $(h_{CC}, h_{RR})$ (i.e., a grid point in Figure \ref{FIGS9}) is taken as the average error of its 16 initial points.

We first specify stable fixed points of ODE (\ref{eq2}) for different combinations of homophily and heterophily indices. In Figure \ref{FIGS8} (A), the stable fixed points in each region are as follows. Region I: $(0,1)$, $(1,0)$; Region II: $(0,\frac{x_r^*}{h_{RR}})$, $(1,\frac{x_r^*-h_{RC}}{h_{RR}})$; Region III: $(1,0)$; Region IV: $(1,0)$; Region V: $(1,0)$, $(\bar{x}_c,\bar{x}_r)$; Region VI: $(1,\frac{x_r^*-h_{RC}}{h_{RR}})$; Region VII: None; Region VIII: $(\bar{x}_c,\bar{x}_r)$. In Figure \ref{FIGS8} (B), the stable fixed points in each region are as follows. Region I: $(0,1)$, $(1,0)$; Region II: $(0,\frac{x_r^*}{h_{RR}})$, $(1,\frac{x_r^*-h_{RC}}{h_{RR}})$; Region III: $(1,0)$; Region IV: $(1,0)$; Region V: $(1,0)$, $(\bar{x}_c,\bar{x}_r)$; Region VI: $(1,\frac{x_r^*-h_{RC}}{h_{RR}})$, $(\bar{x}_c,\bar{x}_r)$; Region VII: $(1,\frac{x_r^*-h_{RC}}{h_{RR}})$; Region VIII: None; Region IX: $(\bar{x}_c,\bar{x}_r)$; Region X: None.

\begin{figure}[H]
\centering
\includegraphics[width=1\textwidth]{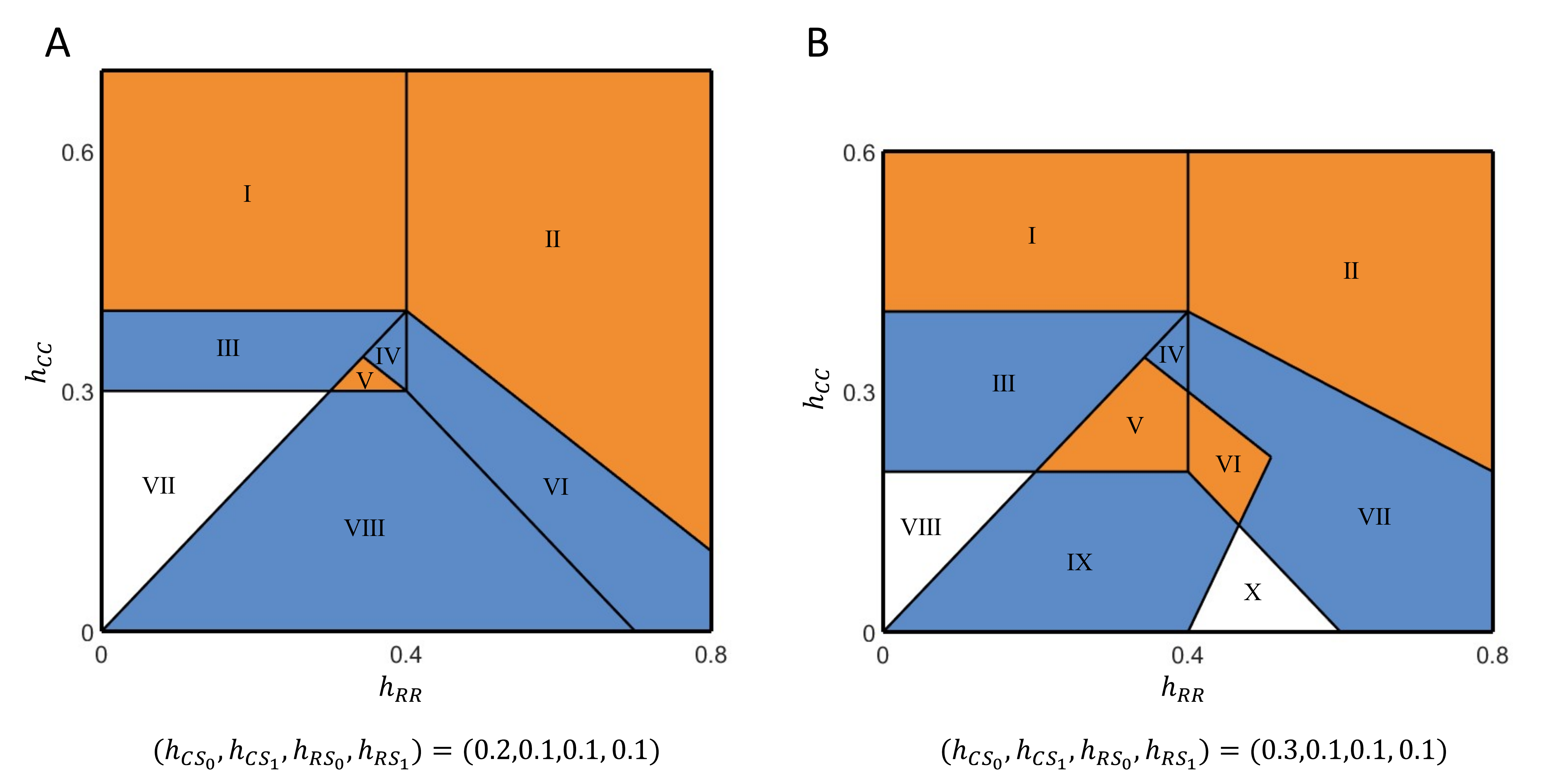}
\caption{Number of stable fixed points of ODE (\ref{eq2}) for different network structures. Parameters are taken as (A) $(h_{CS_0},h_{CS_1},h_{RS_0},h_{RS_1})=(0.2,0.1,0.1,0.1)$, (B) $(h_{CS_0},h_{CS_1},h_{RS_0},h_{RS_1})=(0.3,0.1,0.1,0.1)$. There is no stable fixed point in white region, one stable fixed point in blue regions, and two stable fixed points in orange regions.}\label{FIGS8}
\end{figure}

The first row of Figure \ref{FIGS9} shows the (total) error for the fixed point comparison. Three observations can be obtained. First, increasing the population size can slightly decrease the error. This is consistent with the stochastic approximation analysis, where the error term decreases in the population size. Second, the error is increasing in the number of stable fixed points. Specifically, two stable fixed points coexist for large values of $h_{CC}$ (top right regions in the first row of Figure \ref{FIGS9}), and the errors in these regions are also larger than in other regions. This may be because the trajectories of ODE (\ref{eq2}) are more likely to converge to `wrong' fixed points when there are many stable fixed points. This conjecture has been verified by separating the first and second types of errors (see the second and third rows of Figure \ref{FIGS9}). Here we use a simple rule to separate these two types of error: if the absolute error $\varepsilon$ is less (or greater) than a threshold, then we categorize it into the first (or second) type \citep[see e.g.][]{PEI2024dynamic}. Since the absolute distance between two stable fixed points is 1, the threshold is taken as 0.5. We find that the first type of error is smaller for large population sizes. In contrast, the second type of errors is increasing in the number of stable fixed points. Finally, errors are relatively large for initial points that are close to the lines $h_{CC} + h_{RR} = 0.8$ (or $2h_{CC} + h_{RR} = 1.2$). This is because the homophily indices on these lines are the bifurcation values of ODE (\ref{eq2}). Therefore, the trajectories of the CRS game are very sensitive to stochastic factors.

\begin{figure}[H]
\centering
\includegraphics[width=1\textwidth]{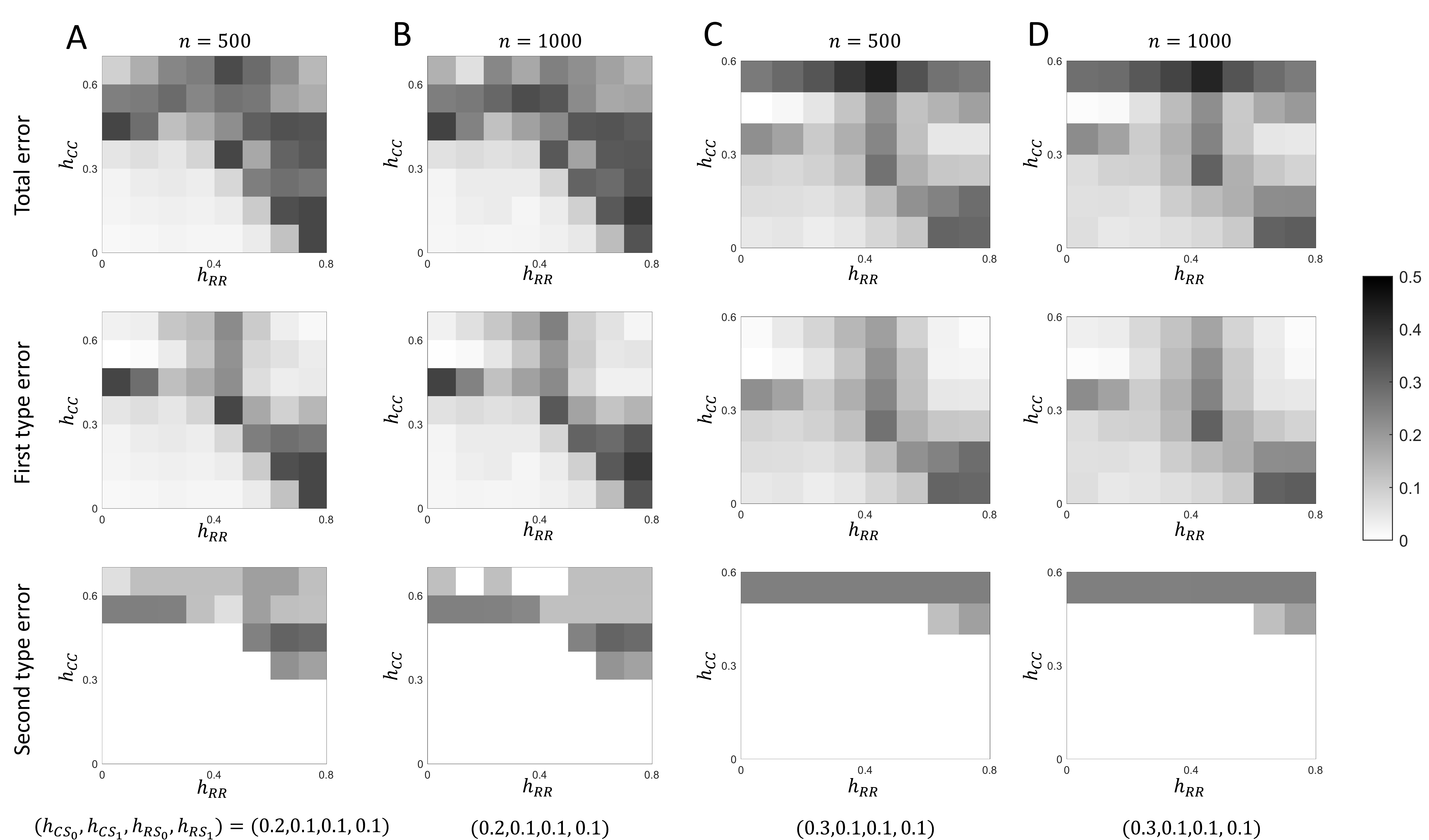}
\caption{Comparisons between fixed points of CRS game and ODE (\ref{eq2}). The population size in (A) and (C) are taken as $n = 500$ and (B) and (D) are taken as $n = 1000$. (A) and (B) $(h_{CS_0},h_{CS_1},h_{RS_0},h_{RS_1})=(0.2,0.1,0.1,0.1)$ and (C) and (D) $(h_{CS_0},h_{CS_1},h_{RS_0},h_{RS_1})=(0.3,0.1,0.1,0.1)$. The average degree is 16 for each network. From top to bottom, total error, first type of error, and second type of error in fixed point comparison. Errors at a grid point $(h_{CC}, h_{RR})$ are indicated by shading, whose scale is given to the right of the columns. From left to right, the average errors of the first row are $0.1758$, $0.1748$, $0.1606$, and $0.1601$, respectively; the average errors of the second row are $0.1101$, $0.1153$, $0.1130$, and $0.1120$, respectively; the average errors of the third row are $0.0657$, $0.0595$, $0.0482$, and $0.0481$, respectively.}\label{FIGS9}
\end{figure}

\section{Application: Prisoner's Dilemma game on networks with preference heterogeneous players}

The real dataset used in our study was collected by Ruiz-García et al. across 13 schools in diverse regions of Spain \citep[see e.g.][]{Miguel2023}, encompassing 3395 students and 60,566 declared relationships. From the 13 schools considered, 3 of them are in the Region of Madrid and the rest are in Andalucía. In this section, we adopt their notation, i.e. schools in Madrid are denoted as t1, t2 and t6, while those in Andalucía are denoted as t11\_1, ..., t11\_10. In their data set, each student could rate their relationship with any other student in their school as very good $(+2)$, good $(+1)$, bad $(-1)$ or very bad $(-2)$, and these positive or negative relationship ratings were mapped to the weight of the relationship $w_{ij}$ in the main text. The prosociality index of each student was mapped to the altruism parameter $\alpha_i$ in the main text, where prosociality was evaluated for each student through the answer to the three questions about sharing: 

\begin{itemize}
\item What do you prefer? A) €10 for yourself and €10 for your partner ($\mathcal{Q}_1 = 1$); B) €10 for yourself and €0 for your partner ($\mathcal{Q}_1 = 0$).
\item What do you prefer? A) €10 for yourself and €10 for your partner ($\mathcal{Q}_2 = 0$); B) €10 for yourself and €20 for your partner ($\mathcal{Q}_2 = 1$).
\item What do you prefer? A) €10 for yourself and €10 for your partner ($\mathcal{Q}_3 = 1$); B) €20 for yourself and €0 for your partner ($\mathcal{Q}_3 = 0$).
\end{itemize}
The prosociality index was then computed as $(\mathcal{Q}_1 + \mathcal{Q}_2 + \mathcal{Q}_3)/3$, yielding a scalar value of 0, 0.33, 0.67, or 1.

In subsequent analyses, we set $b = 2$ and $\beta = 1/4$, and classified all students in the dataset using the method presented in the main text. Under this parameter set, students with $\alpha_i = 1$ are stubborn cooperators, while those with $\alpha_i = 0$ are stubborn defector, and students with all other values of $\alpha_i$ are either conformists or rebels. Table \ref{TabS1} summarizes the numbers of player types and edge types in the simplified CRS game across the 13 schools. The results indicate that the vast majority of students are stubborn agents, with rebels only in three schools: t11\_5, t11\_8, and t11\_9 (corresponding to School 1, 2, and 3 in the main text, respectively).

\begin{table}
\centering
\setlength{\tabcolsep}{2pt}
\caption{\textbf{Number of Player Types and Edge Types Across 13 Social Networks} (Strategy 0 = Cooperation, Strategy 1 = Defection).}\label{TabS1}
\begin{tabular}{c c c c c c c c c c c c c c}
\toprule
 & t11\_1 & t11\_2 & t11\_3 & t11\_4 & t11\_5 & t11\_6 & t11\_7 & t11\_8 & t11\_9 & t11\_10 & t1 & t2 & t6 \\
\midrule
\textbf{$n_C$} & 5 & 27 & 6 & 3 & 7 & 5 & 6 & 43 & 47 & 12 & 1 & 3 & 9 \\
\textbf{$n_R$} & 0 & 0 & 0 & 0 & 1 & 0 & 0 & 2 & 1 & 0 & 0 & 0 & 0 \\
\textbf{$n_{S_0}$} & 180 & 379 & 115 & 77 & 167 & 57 & 63 & 128 & 209 & 276 & 313 & 176 & 477 \\
\textbf{$n_{S_1}$} & 47 & 106 & 35 & 30 & 48 & 44 & 11 & 35 & 62 & 98 & 95 & 59 & 48 \\
Total Nodes & 232 & 512 & 156 & 110 & 223 & 106 & 80 & 208 & 319 & 386 & 409 & 238 & 534 \\
\midrule
\textbf{$K_{CC}$} & 0 & 25 & 4 & 1 & 7 & 3 & 5 & 72 & 66 & 10 & 0 & 0 & 5 \\
\textbf{$K_{CR}$} & 0 & 0 & 0 & 0 & 1 & 0 & 0 & 10 & 5 & 0 & 0 & 0 & 0 \\
\textbf{$K_{CS_0}$} & 97 & 488 & 120 & 39 & 136 & 56 & 100 & 393 & 546 & 234 & 115 & 84 & 354 \\
\textbf{$K_{CS_1}$} & 20 & 147 & 42 & 13 & 53 & 160 & 27 & 101 & 170 & 77 & 27 & 13 & 39 \\
\textbf{$K_{RR}$} & 0 & 0 & 0 & 0 & 0 & 0 & 0 & 0 & 0 & 0 & 0 & 0 & 0 \\
\textbf{$K_{RS_0}$} & 0 & 0 & 0 & 0 & 32 & 0 & 0 & 8 & 32 & 0 & 0 & 0 & 0 \\
\textbf{$K_{RS_1}$} & 0 & 0 & 0 & 0 & 16 & 0 & 0 & 2 & 10 & 0 & 0 & 0 & 0 \\
Total Edges & 3293 & 7095 & 2388 & 990 & 2725 & 1293 & 813 & 1416 & 3027 & 4955 & 6509 & 2889 & 9527 \\
\bottomrule
\end{tabular}
\end{table}

We further analyze the three networks in which conformists and rebels coexist. By calculating the homophily indices (see Table \ref{TabS2}) and combining the results with Table 1 in the main text, we find that all three networks possess a unique stable fixed point $(1,0)$ of ODE (\ref{eq2}). Thus, our deterministic approximation method provides a simple way to predict the level of cooperation in the network Prisoner’s Dilemma game with preference heterogeneous players: the proportion of cooperation equals the sum of the proportions of conformists and stubborn cooperators. We then perform agent-based simulations to compute the proportion of cooperation in the HP game and its simplified CRS game, where the initial strategies of stubborn agents correspond precisely to their dominant strategies, while those of conformists and rebels are randomly assigned, with $10^5$ time steps run for convergence. The simulation results (see Table \ref{TabS2}) show the predictions of ODE (\ref{eq2}) differ by less than 0.01 from those of the simplified CRS game and by less than 0.05 from those of the HP game.

\begin{table}[H]
\centering
\caption{\textbf{Homophily Indices, Fixed Points, and Cooperation Proportion} (Strategy 0 = Cooperation, Strategy 1 = Defection).}
\label{TabS2}
\begin{tabular}{c c c c}
\toprule
  & t11\_5 & t11\_8 & t11\_9 \\
\midrule
\multicolumn{4}{l}{\textbf{Homophily Indices}} \\
$h_{CC}$ & 0.0686 & 0.2222 & 0.1547 \\
$h_{CR}$ & 0.0049 & 0.0154 & 0.0059 \\
$h_{CS_0}$ & 0.6667 & 0.6065 & 0.6401 \\
$h_{CS_1}$ & 0.2598 & 0.1559 & 0.1993 \\
$h_{RR}$ & 0 & 0 & 0 \\
$h_{RC}$ & 0.0204 & 0.5000 & 0.1064 \\
$h_{RS_0}$ & 0.6531 & 0.4000 & 0.6809 \\
$h_{RS_1}$ & 0.3265 & 0.1000 & 0.2128 \\
\addlinespace
\multicolumn{4}{l}{\textbf{Fixed Points}} \\
Stable Fixed Points & (1,0) & (1,0) & (1,0) \\
Unstable Fixed Points & None & None & None \\
\addlinespace
\multicolumn{4}{l}{\textbf{Cooperation Proportion}} \\
HP Game & 0.7760 & 0.7700 & 0.7540 \\
Simplified CRS Game & 0.7780 & 0.8150 & 0.7980 \\
ODE (2) & 0.7800 & 0.8220 & 0.8030 \\
\bottomrule
\end{tabular}
\end{table}


\clearpage 



\end{document}